\title{Critical exponent for the magnetization of the weakly coupled $\phi_4^4 $ model } 
\author{Martin Lohmann\thanks{Department of Mathematics, University of British Columbia, 1984 Mathematics Road, V6T 1Z2 Vancouver, Canada. email: marlohmann@math.ubc.ca}}
\date{}
\newcommand{\mquad}{\!\!\!\!\!\!\!\!\!\!\!\!\!\!\!\!\!\!\!\!\!\!\!}
\newcommand{\qqquad}{\qquad\qquad\qquad\qquad\qquad}
\newcommand{\erem}{ \begin{flushright} \(\diamond\)  \end{flushright}}
\newcommand{\nnb}{\nonumber \\}
\newcommand{\bd}[1]{\mathbf{#1}}
\newcommand{\mc}[1]{\mathcal{#1}}
\newcommand{\mf}[1]{\mathfrak{#1}}
\newcommand{\mb}[1]{\mathbb{#1}}
\renewcommand{\sf}[1]{\mathsf{#1}}
\newcommand{\ul}[1]{\underline{#1}}
\newcommand{\ud}{\,\mathrm{d}}
\newcommand{\sums}[1]{\sum_{\substack{#1}}}
\newcommand{\half}[0]{{\frac{1}{2}}}
\newcommand{\const}{\,\text{const}\,}
\newcommand{\eps}{\varepsilon}
\newcommand{\la}{\langle}
\newcommand{\ra}{\rangle}
\newcommand{\sfx}{\sf x}
\newcommand{\sfy}{\sf y}
\newcommand{\sfz}{\sf z}
\newcommand{\sfg}{\sf g}
\newcommand{\sfX}{\sf X}
\newcommand{\sfB}{\sf B}
\newcommand{\sfA}{\sf A}
\newcommand{\ttl}{\ell}
\newcommand{\gk}{\cite{gk1985}\xspace}
\DeclareRobustCommand{\ivec}[1]{%
	\mathpalette\do@ivec{#1}%
}
\newcommand{\do@ivec}[2]{%
	\fix@ivec{#1}{+}%
	\reflectbox{$\m@th#1\vec{\reflectbox{$\fix@ivec{#1}{-}\m@th#1#2\fix@ivec{#1}{+}$}}$}%
	\fix@ivec{#1}{-}%
}
\newcommand{\fix@ivec}[2]{%
	\ifx#1\displaystyle
	\mkern#23mu
	\else
	\ifx#1\textstyle
	\mkern#23mu
	\else
	\ifx#1\scriptstyle
	\mkern#22mu
	\else
	\mkern#22mu
	\fi
	\fi
	\fi
}
\theoremstyle{plain}
\newtheorem{thm}{Theorem}[section]
\newtheorem{lem}[thm]{Lemma}
\newtheorem{prop}[thm]{Proposition}
\theoremstyle{definition}
\theoremstyle{remark}
\newtheorem{rem}[thm]{Remark}
\numberwithin{equation}{section}
\newcounter{parag}
\newcommand{\para}[2]{\refstepcounter{parag}\label{#2}\noindent\textbf{\arabic{section}.\arabic{parag} #1}}
\newcommand{\blabel}[1]{\stepcounter{equation}\label{#1}\tag{\textbf{\arabic{section}.\arabic{equation}}}}
\begin{document}

\maketitle

\begin{abstract}
We consider the weakly coupled $\phi^4 $ theory on $\mb Z^4 $, in a weak magnetic field $h$, and at the chemical potential $\nu_c $ for which the theory is critical if $h=0$. We prove that, as $h\searrow0$, the magnetization of the model behaves as $(h\log h^{-1})^{\frac 13} $, and so exhibits a logarithmic correction to mean field scaling behavior. This result is well known to physicists, but had never been proven rigorously. Our proof uses the classic construction of the critical theory by Gawedzki and Kupiainen \gk, and a cluster expansion with large blocks.
\end{abstract}

\section{Introduction}

In this paper, we study the magnetization of the weakly coupled $\phi^4$ theory in dimension $4$, which is defined by
\begin{align}\label{eqmagn}
m(h) &= \lim_{\substack{\Lambda_0\subset \mb Z^4 \\ \Lambda_0\to \mb Z^4}}\Bigg[\tfrac1{|\Lambda_0|} \partial_\eps \log \int e^{- \sum_{\xi\in \Lambda_0}\frac 12 \psi_\xi(-\Delta\psi)_\xi + \frac{\nu_c} 2\psi_\xi^2 + \frac{g_0}4\psi_\xi^4 - (h+\eps)\psi_\xi }\prod_{\xi\in\Lambda_0}\ud\psi_\xi\ \bigg|_{\eps=0}\Bigg]
\end{align}
For $g_0>0 $ small, if the limit is taken over large tori $\Lambda_0 = \mb Z^4 / L^N\mb Z^4 $ with $L\in\mb N $ fixed large, $N\to\infty $, and with $\Delta $ the Laplacian with periodic boundary conditions, the measure of the above integral was first analyzed at $h=0 $ in the famous papers \cite{gk1985,feldman1987}. It was proven there that there is a value $\nu_c = \nu_c(g_0)<0 $ of the chemical potential for which the model is critical, that is, displays long range correlations. The core of their proof is a rigorous implementation of the renormalization group (RG), and uses the asymptotic freedom of the RG flow. \\
It is of special interest in physics to understand the behavior of thermodynamic quantities as such critical points are approached. In dimensions $d>4$, it can be proven that this behavior follows the predictions of mean field theory, see e.g. \cite{aizenman1982,fernandez2013random}. In $d=4$, when approaching the critical point from the high temperature phase, because of asymptotic freedom, one expects logarithmic corrections to the mean field behavior, but proofs are much harder than in $d>4$. For the magnetization, with $\nu_c $ as constructed in \cite{gk1985,feldman1987}, physicists have conjectured 
\begin{align}\label{logcorr}
m(h) \sim \Big(\frac{3h\log h^{-1}} {16\pi^2}\Big)^{\frac13}\qquad \text{as} \qquad h\searrow 0,
\end{align}
as opposed to the mean field prediction $m(h)\sim h^{\frac13} $.\\
For the magnetic susceptibility and the correlation length, a proof\footnote{Earlier rigorous results that indicate logarithmic corrections were obtained by more elegant methods in \cite{1983NuPhB.225..261A}} of logarithmic corrections was first given in \cite{hara1987rigorous},\cite{Hara1987}. They study the off-critical flow first by comparison with the critical construction of \gk, until the two flows differ sufficiently to view the effective model as the perturbation of a massive Gaussian model. They then adapt the arguments of \gk to massive RG flows to finish the construction. More recently, this result was extended to the multi-component $\phi^4 $ model \cite{bauerschmidt2014scaling} and to the weakly self avoiding random walk \cite{bauerschmidt2015logarithmic}. \\
All the results discussed above concern models that are symmetric under $\phi\to-\phi $, while the introduction of the magnetic field $h$ in (\ref{eqmagn}) breaks this symmetry. From the RG point of view, the broken symmetry introduces two additional relevant coupling constants whose flow would have to be controlled, and even though the analysis could certainly proceed along the same general lines as in the symmetric case, this adds an additional layer of technical complexity to an already complicated argument, and it is probably for this reason that logarithmic corrections to the magnetization, despite featuring prominently in physics textbook discussions of phase transitions, have never been proven mathematically. The only rigorous result on this question, the bound $c\ h^{\frac13}\leq m(h)\leq C\ h^{\frac13}\log h^{-1} $, was established in \cite{1986JSP....44..393A} for the four dimensional Ising model without using the RG.\\
In this paper, we will show that (\ref{logcorr}) can be obtained without running into the technical difficulties just mentioned. Indeed, we will argue that the \textit{critical} construction of \gk is already powerful enough to generate an effective representation for $m(h)$ that can be analyzed without any RG flow, but rather by a single cluster expansion with large blocks. This cluster expansion has a slightly different stability analysis than, and features certain scaling arguments similar to, a single RG step (in which a cluster expansion with unit size blocks is involved). \\
We will prove the following theorem

\begin{thm}\label{mainthm}
Choose any $\delta>0 $ small, and consider $m(h)$ as in (\ref{eqmagn}), with the limit taken over tori $\Lambda_0= \mb Z^4/L^N\mb Z^4 $, $L\in\mb N_{\text{odd}} $ fixed large, $N\to\infty$, with $\Delta $ the Laplacian
\begin{align}
\sum_{\xi\in\Lambda_0} \psi_\xi(-\Delta\psi)_\xi &= \sums{\xi,\xi'\in\Lambda_0\\ |\xi-\xi'|_1=1}(\psi_\xi-\psi_{\xi'})^2 + L^{-4N} \Big(\sum_{\xi\in\Lambda_0}\psi_\xi\Big)^2
\end{align}
Let $g_0>0 $ be small enough, depending on $\delta$, and set $\nu_c $ equal to the critical chemical potential constructed in \gk. Then, if $h>0$ is small enough (depending on $g_0$),
\begin{align}\label{logcorr2}
m(h) = \Big(\frac{3h\log h^{-1}} {16\pi^2}\Big)^{\frac13}(1+\mc E(h)), 
\end{align}
where $|\mc E(h)|\leq \delta $.
\end{thm}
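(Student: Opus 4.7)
My plan is to run the critical Gawedzki--Kupiainen block-spin flow up to an intermediate scale $N^{\star}$ chosen as a function of $h$, and at that scale evaluate the remaining finite-volume integral by a single cluster expansion whose polymers live on $L^{N^{\star}}$-sized cells. The bare magnetic field is carried along the flow as a small relevant perturbation and becomes an effective external field $h_{N^{\star}}$ acting on the block spins, while the symmetry-breaking couplings generated along the way are kept to the precision required for (\ref{logcorr2}). The asymptotic freedom of $g_n$ in the GK construction then makes the resulting finite-volume block integral amenable to a saddle-point analysis, and the logarithmic running of $g_n$ is what supplies the factor $\log h^{-1}$.

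\textbf{Steps.} First, I would insert the GK critical construction up to $N^{\star}$, producing an effective measure for the block spin $\Phi$ on a coarsened lattice with bulk action essentially $\tfrac{g_{N^{\star}}}{4}\Phi^{4} - h_{N^{\star}}\Phi$ per cell, plus kinetic and higher order pieces that are quantitatively small by the GK bounds. Second, I would locate the classical saddle $\Phi_{\star} = (h_{N^{\star}}/g_{N^{\star}})^{1/3}$, check that its Hessian $3 g_{N^{\star}} \Phi_{\star}^{2}$ produces a positive effective mass $m_{\star}$, and fix $N^{\star}$ by matching the cell size $L^{N^{\star}}$ to the correlation length $1/m_{\star}$; this matching forces $N^{\star}\log L \sim \log h^{-1}$. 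Third, with blocks of that size I would run a single \emph{shifted} cluster expansion around $\Phi_{\star}$, in which each cell contributes the saddle value plus small Gaussian fluctuations and distinct cells interact only weakly through the residual kinetic coupling. Finally, undoing the wave-function renormalization accumulated in the GK flow and inserting the GK asymptotic for $g_{N^{\star}}$ should convert $\Phi_{\star}$ into the magnetization (\ref{logcorr2}) with the stated constant $3/(16\pi^{2})$ and with an error that can be made smaller than $\delta$ by taking $g_{0}$, hence $h$, small enough.

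\textbf{Main obstacle.} The hardest part is the stability analysis of this shifted cluster expansion. Standard GK-type cluster expansions at a single RG step are anchored at $\Phi = 0$ and stability comes either from the Gaussian measure in the small-field regime or from $\tfrac{g_{n}}{4}\Phi^{4}$ in the large-field regime. Here the minimum of the effective potential is at a nonzero $\Phi_{\star}$ that depends on $h$ through $(h_{N^{\star}}/g_{N^{\star}})^{1/3}$, and the expansion must be reorganized so that the quadratic form around $\Phi_{\star}$ becomes the backbone of the Gaussian estimates, while excursions of $\Phi$ away from $\Phi_{\star}$ by more than a few $1/m_{\star}$ still have to be controlled by the quartic stabilizer, and gradient couplings between cells have to decay on the scale $1/m_{\star}$ even though cells have side $L^{N^{\star}}$. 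In addition, $N^{\star}$ is squeezed from both sides---the GK flow must still be quantitatively accurate, yet $g_{N^{\star}}$ must be small enough for the saddle to dominate---and the matching of the expansion to the GK flow with enough precision to produce the explicit constant $3/(16\pi^{2})$, rather than only the right scaling, is what ultimately pins down the allowed relation between $\delta$, $g_{0}$, and $h$.
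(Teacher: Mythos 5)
Your high-level strategy---stopping the critical GK flow at a scale that grows like $\tfrac13\log_L h^{-1}$, translating the block spin to the saddle $\bar\phi\sim (h_{N^\star}/g_{N^\star})^{1/3}$, performing a single cluster expansion around that saddle, and obtaining the $\log h^{-1}$ from the asymptotic freedom of $g$---is exactly the architecture of the paper. But several of your quantitative choices are off, and at least one would make the argument collapse.

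The scale-matching criterion ``cell size $L^{N^\star}\approx 1/m_\star$'' is not what the paper does, and cannot work. If the block-spin scale $n$ were chosen so that the effective correlation length $1/m_\star\sim (g\bar\phi^2)^{-1/2}$ is of order one on the coarse lattice, then $\bar\phi\sim g^{-1/2}$, which violates the GK small-field bound $\bar\phi\ll g^{-1/4}$; the whole input from \gk (bounds (\ref{lfbound}), (\ref{blocksfbound}) and the polymer representation (\ref{gkpsrep2})) would then be unusable. The paper instead squeezes $n$ so that $\bar\phi$ is strictly a \emph{small} field for GK ($\bar\phi\ll g^{-1/4}$, condition (\ref{sfcondshift})), and separately chooses the cluster-expansion block size $\mc L$ (in coarse-lattice units) so that $\mc L\bar\phi\gg g^{-1/4}$, condition (\ref{lfcondshift}); i.e.\ the constant shift becomes a ``large field'' only at the scale of the cluster blocks, not at the GK scale. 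These two requirements, together with the upper bound $\mc L^{22}g\bar\phi\ll 1$ needed for stability, are compatible precisely because $g\bar\phi^{14/3}\gg 1$ (a condition one notch weaker than $\bar\phi\sim g^{-1/2}$), and they produce the window (\ref{regn}). Crucially, the $\mc L$ the paper ends up using is a \emph{fixed} constant (it only depends on $\delta$), much smaller than $1/m_\star$; the convergence of the expansion does not come from exponential decay between cells, but from a small factor $\mf r^{-1}\Vert\mc A\Vert\Vert\Gamma\Vert\sim (\mc L^2\mf r\bar\phi g^{1/2})^{-1}$ in (\ref{boundloc}), which is what makes the large-block requirement necessary. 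Using blocks of size $L^{N^\star}$ in original lattice units corresponds to unit cells on the coarse lattice ($\mc L=1$), where this expansion does not converge---this is exactly the regime the paper says ``a single RG step'' lives in, and it does not suffice here.

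You also say the bare field should be carried along as a relevant perturbation with ``the symmetry-breaking couplings generated along the way kept to the precision required.'' That is exactly the technical complexity the paper is designed to avoid: only the symmetric, critical ($h=0$) GK construction is used, so no odd couplings are ever generated along the flow. The field $h$ enters only at the end, trivially, as the block-averaged magnetic field $\hbar=L^{3n}h$ in (\ref{eqmagn2}), and the odd cubic term in $V_\eps$ of (\ref{defveps}) is produced by the shift $\phi\mapsto\phi+\bar\phi$, not by the flow. Finally, you are right that the hard part is the stability analysis of the shifted expansion around $\bar\phi$, but the paper's resolution is more specific than the one you sketch: the quadratic lower bound (\ref{eqbasicstab}) on $V_\eps$ is tuned to produce exactly a factor $\la g\bar\phi^2\ra\psi^2$, which then combines with the sharp spectral bound $\Vert\mc A\Gamma\Vert_{\text{HS}}^2\leq \la(3g\bar\phi^2)^{-1}\ra$ (equation (\ref{sharpboundcov})) so that the resulting Gaussian factor has coefficient $\la\tfrac13\ra<\tfrac12$ uniformly in the polymer size. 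Without this sharp cancellation, the stability estimate would degrade with polymer size and the expansion would again fail.
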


\begin{rem}\label{remstrongsympt}
Ideally, one would like to prove $\mc E(h) = o(1) $ as $h\searrow 0$. As we shall see, this would follow immediately from our argument if the results of \gk were formulated in a slightly stronger form. In fact, it is believable that a small improvement on \gk would give $\mc E(h) = O(1/\log\log h^{-1}) $, see Remark \ref{remgkimprovement}. Since our focus is on showing how the critical construction of \gk can be used as a black box to study critical exponents, we will not pursue any modifications of their argument, however.\erem
\end{rem}

\begin{rem}
The construction of \gk is particularly well suited to be used as the input of a single cluster expansion. While the more recent implementation of the renormalization group by Brydges and collaborators (\cite{brydges1990,Brydges1995,Brydges2015,2016arXiv160609541A}) has been developed in more detail and also been applied to critical phenomena of other interesting models (such as long range models below the critical dimension \cite{Brydges2003,Slade2018,Lohmann2017}, or the dipole gas \cite{Dimock2000,2013arXiv1311.2237F}), and while this method shares the intuitive inductive nature of the \gk argument and uses similar expansions, there are obstacles to using it as the basis of the current paper. In particular, the behavior of effective interactions at large fields is not tracked as precisely as in \gk, and currently our method relies on such precise large field bounds. \erem
\end{rem}

\noindent
The plan of the paper is as follows. In section \ref{sgk}, we recall all results of \gk which will be used as an input for our argument, in particular, the effective representation for $m(h) $ that follows from their argument. A rather detailed heuristic derivation of (\ref{logcorr2}) based on this representation is given in section \ref{remheur}, which is the conceptual core of this paper, and serves as a guide for the following sections. In section \ref{sce}, we set up a cluster expansion for the effective representation. Estimates on the expansion are proven in section \ref{bounds}, and they are used in section \ref{pmt} to derive Theorem \ref{mainthm}. Some technical lemmas are deferred to the appendix.

\begin{rem}\label{remsmallnot}\textbf{Smallness conditions.} This paper contains a number of conditions on various small and large parameters. These conditions will be emphasized by boldface equation numbers. We will implicitly assume any such condition in all arguments that follow after they have been stated. Clearly, it is important that all conditions can be met by choosing the basic parameters of the model. In the present remark, we give a summary on the order in which parameters are chosen, and the reader is invited to refer to it whenever necessary. \\
The basic small parameter in Theorem \ref{mainthm} is $\delta>0 $. There is also a condition $L\geq L_0 $, with some absolute constant $L_0$, for which the \gk construction works. We shall fix these two parameters first, and conditions on all other parameters will depend on them.\\
Our proof will be based on a cluster expansion, involving a scale parameter $\mc L $, and relies on the polymer gas representation of \gk, with scale parameter $\ell$. The basic quantity that has to be chosen large enough, depending on $\delta$, will be $\frac{\mc L}\ell $. Depending on $\frac{\mc L}\ell $, $\ell$ itself has to be chosen large enough. (In other words, $x =\frac{\mc L}\ell $ is chosen large depending on $\delta$, $\ell$ is chosen large depending on $x$, and we then set $\mc L = x\cdot\ell $).\\
After these choices have been made, there will be conditions $c_1\geq c_1(\ell) $ and $c_2\geq c_2(c_1) $ on two constants in technical bounds, and finally conditions $0<g_0<g_0(c_2) $ and $0<h<h(g_0) $ on the coupling constant and the magnetic field. Depending on $h$ and $ \mc L  $, we will define a carefully chosen ``magnetization scale'' $n$, which is the scale at which we terminate the critical flow of \gk and perform our cluster expansion. We will also choose a (large) radius $\mf r$ of analyticity in the field $\psi$ and a (small) radius $\upepsilon $ of analyticity in $\eps$, that arise in the bounds we prove on the cluster expansion. \\ [5pt]
\textbf{Thermodynamic limit.} Since the thermodynamic limit $N\to\infty $ in (\ref{eqmagn}) is performed before we consider $h\to 0 $, we will always work with $N\geq N(h)$ that is much larger than the magnetization scale (plus the scale of the cluster expansion). Our analysis will be uniform in $N$, as long as $N\geq N(h)$. The limit $N\to\infty $ will be discussed only shortly at the end of section \ref{pmt}. From now until then, we shall regard $N\geq N(h) $ as fixed.\\
The derivative $\partial_\eps$ in (\ref{eqmagn}) is performed and evaluated at $\eps=0 $ before the limit $N\to\infty $. We could therefore always think of $\eps$ as being as small as we like, even depending on $N$, although we shall often be interested in analyticity properties of certain functions of $\eps$, in which case we consider any $|\eps|\leq\upepsilon $, with $\upepsilon$ the radius of analyticity mentioned above, which is chosen independent of $N$. \\[5pt]
\textbf{Notation.} We shall later make use of the following two notations: First, we write $a\ll b $ to mean that $\frac ba $ can be made as big as we want, as long as the basic parameter $\frac{\mc L}\ell $ is chosen large enough. For example, $\ell\gg1 $. Second, for some given number $a\in\mb R $, we shall write $\la a\ra $ to denote any number $b$ such that $|\frac ba -1|\ll 1 $ (in other words, $b$ equals $a$ up to lower order corrections). For example, $e^{g_0} = \la1\ra $.\\
We shall also use the notation $\ul n = \{1,\ldots,n\} $ if $n\in\mb N $.\\[5pt]
\textbf{Fonts.} We warn the reader to carefully take note of different fonts used in equations, such as serif ($x,A$) versus sans serif ($\sfx,\sfA$), or greek ($\Lambda,\phi$) versus upgreek ($\Uplambda ,\upphi$).

\erem
\end{rem}

\noindent
\textbf{Acknowledgements.} I want to thank Gordon Slade and David Brydges for helpful discussions and suggestions, and the Isaac Newton Institute for Mathematical Sciences for support and hospitality during the programme ``Scaling limits, rough paths, and quantum field theory'' (EPSRC Grant Number ER/R014604/1), when work on this paper was undertaken. This work was supported in part by NSERC of Canada.

\section{The results of \gk}\label{sgk}
The main technical result of \gk concerns the \textit{effective action}, defined by
\begin{align}\label{eqeffa}
Z(\phi) &=  \int e^{- \sum_{\xi\in \Lambda_0}\frac 12 \psi_\xi(-\Delta\psi)_\xi + \frac{\nu_c} 2\psi_\xi^2 + \frac{g_0}4\psi_\xi^4   } \prod_{x\in\Lambda}\delta\big(\phi_x - (C\psi)_x\big)\prod_{\xi\in\Lambda_0}\ud\psi_\xi
\end{align}
Here, $\phi:\mb Z^4 / L^{N-n}\mb Z^4=:\Lambda\to \mb R $  is called the block spin field, and
\begin{align}\label{eqbso}
(C\psi)_x &= L^{-3n} \sum_{\xi\in\square_x} \psi_\xi 
\end{align}
is the block spin operator. In its definition, we have associated to any point in $\Lambda$ a cube $\square_x := L^n x + \square \subset \Lambda_0  $, where $L^nx $ is the equivalence class in $\Lambda_0 $ of $L^n $ times any of the representatives of $x$ in $\mb Z^4 $, and $\square$ are the equivalence classes of $[-\tfrac 12(L^n-1),\tfrac12 (L^{n}-1)]^4\cap \mb Z^4 $. Note that the cubes $\square_x,x\in\Lambda $ are a partition of $\Lambda_0 $. We shall from now on view $\Lambda $ as embedded in $\Lambda_0 $ in this way, and write $\square_X  = \cup_{x\in X}\square_x$ for $X\subset \Lambda $.

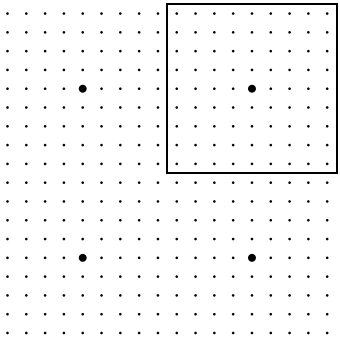
\begin{figure}
	\centering\vspace{15pt}
	\begin{tikzpicture}[scale=0.25]
	\foreach \i in {0,...,17}
	\foreach \j in {0,...,17}{
		\fill[black] (\i,\j) circle(2pt);
		\ifnum \i=4
		\ifnum \j=4
		\fill[black] (\i,\j) circle(6pt);
		\fi
		\fi
		\ifnum \i=4
		\ifnum \j=13
		\fill[black] (\i,\j) circle(6pt);
		\fi
		\fi
		\ifnum \i=13
		\ifnum \j=4
		\fill[black] (\i,\j) circle(6pt);
		\fi
		\fi
		\ifnum \i=13
		\ifnum \j=13
		\fill[black] (\i,\j) circle(6pt);
		\fi
		\fi
		\draw[draw=black] (8.5,8.5) rectangle ++(9,9);
	};
	\end{tikzpicture}\vspace{5pt}\caption{Fine dots: The elements $\xi\in\Lambda_0 $. Large dots: The elements $x\in\Lambda $. Square: The block $\square_x $ centered at a large dot $x$. ($L^n=9 $).}
\end{figure}

\noindent
The integer $n=0,\ldots,N $ is called the \textit{scale} of the effective action. We shall need the effective action only at a specific scale that is chosen carefully, depending on the magnetic field $h$. Indeed, calling the expression in the square brackets of (\ref{eqmagn}) $M(h) $, it follows from (\ref{eqeffa}) and (\ref{eqbso}) that
\begin{align}\label{eqmagn2}
M(h) &= \tfrac1{L^{n}|\Lambda|} \partial_\eps \log \int e^{ \sum_{x\in\Lambda}(L^{3n}h + \eps)\phi_x} Z(\phi)\prod_{x\in\Lambda} \ud\phi_x\bigg|_{\eps=0}
\end{align}
for any $n$. We shall choose $n$ such that the effective magnetic field
\begin{align}
\hbar := L^{3n} h
\end{align}
is not too small, see later for precise conditions.\\
We now explain some features of the representation of $Z(\phi) $ that is proven in \gk. Their argument implies bounds on the objects that are used in this representation, and these are stated in Theorem \ref{thmgk} below. Only the results summarized there will be used in the remaining sections. For more details, the reader is referred to the original paper.\\
The measure $\prod_{x\in\Lambda}\delta\big(\phi_x - (C\psi)_x\big)\prod_{\xi\in\Lambda_0}\ud\psi_\xi  $ appearing in (\ref{eqeffa}) is over a hyperplane of dimension $|\Lambda_0| - |\Lambda| $, and in \gk, this plane is parametrized as $\psi = \mc A\phi + \psi' $, where $\mc A\phi $ is a ``background field'' generated by the block spin field $\phi$, and $\psi' $ are fluctuating degrees of freedom that live in a linear space of dimension $|\Lambda_0| - |\Lambda| $.\\ 
The background field can be chosen completely explicitly in \gk, and is simply the minimizer of $\sum_{\xi\in \Lambda_0}  \psi_\xi(-\Delta\psi)_\xi $ with prescribed block averages $\phi $. Thus, the background field is linear in $\phi$, and it can be shown that the kernel $\mc A_{\xi,x} $ is translation invariant (more precisely, $\mc A_{\xi + L^ny,x+y} = \mc A_{\xi,x} $ for all $y$), and satisfies the bound
\begin{align}\label{boundmca}
|\mc A_{\xi,x} |&\leq c_{\mc A}^{-1}e^{-c_{\mc A}\, L^{-n}\,| L^nx - \xi| }
\end{align}
and the important properties
\begin{align}
\sum_{x\in\Lambda}\mc A_{\xi,x} &= 1 \label{mcai1}\\
\int_{\xi\in\square_y}\mc A_{\xi,x} &= \delta_{x,y}. \label{mcai2}
\end{align}
The constant $c_{\mc A} $ depends only on $L$.\\
The fluctuation field is treated by integrating out fluctuations on increasing length scales successively. The result may be written as a product of a Gaussian part, which has a natural expression directly in terms of the block spin field, and a non Gaussian part that is more naturally expressed in terms of the background field:
\begin{align}
Z(\phi) = \const e^{-\frac12 (\phi ,\bar G^{-1}\phi)} Z_{\geq 4}(\mc A \phi),\label{gkpsrep}
\end{align}
The quadratic form $\bar G^{-1} $ is a sum of three types of terms: First, a kernel obtained by explicit block spin transformations of the lattice Laplacian, times a ``wave function renormalization constant'' $z$ (which is close to 1); second, a mass renormalization counterterm, parametrized by a constant $\nu $, called the ``chemical potential'' (which is small and negative); and third, by contributions labeled ``quadratic irrelevant'' in the RG terminology. We will only need few details about these terms, which can be phrased in terms of the Fourier representation of $\bar G^{-1}$. Indeed, $\bar G^{-1}$ is translation invariant, and we may write it using the standard discrete Fourier transform
\begin{align}\label{deffourier}
\bar G^{-1}_{x,y} &= L^{4(n-N)}\sum_{p\in \Lambda^\star} \bar\mu(p)e^{ip(x-y)}
\end{align}
with $\Lambda^\star = \frac{2\pi}{L^{4(N-n)}}\mb Z^4 / 2\pi\mb Z^4 $. We then have that $\bar\mu(p)\approx z\, p^2 + \nu  $ as $p\to 0 $, and that $\bar\mu(p) $ is positive and bounded away from zero for $p $ sufficiently far away from $0$. The constants $\nu $ and $z$, as well as the ``RG irrelevant quadratic'' contributions to $\bar G^{-1}$ are constructed via a series of cluster expansions. See Theorem \ref{thmgk} for bounds on these numbers and objects. \\
The non Gaussian part $ Z_{\geq 4}(\psi)$ is also constructed via a series of cluster expansions. A representation that naturally emerges from such expansions is that of $Z_{\geq 4}(\psi) $ as a ``polymer gas with large field sets''. For this, let $D\subset \Lambda$  be a ``paved set'', i.e. a union of cubes 
\begin{align}\nonumber
\Delta\in \mc C:=\big\{\Delta\subset \Lambda \text{ an equivalence class of } \ttl\cdot m +  [-\tfrac12(\ttl-1),\tfrac12(\ttl-1)]\cap \mb Z^4 \text{ for some }m\in\mb Z^4\big\}.
\end{align}
Here $\ttl$ is a power of $L$ (it can be chosen as large as we like if $g_0$ is small enough). $D$ will serve as the set where the fields $\psi $ are allowed to become large, see below for precise definitions. For each fixed $D$, $ Z_{\geq 4}(\psi) $ enjoys a ``polymer gas representation with large field set $D$''. To define this notion, write
\begin{align}
\mc P(X) &= \Big\{\{X_1,\ldots,X_n\} =:\{X_m\}_1^n ,\, X_m = \text{union of cubes }\Delta,\nnb &\qqquad X_l\cap X_m = \emptyset \text{ if }l\neq m,\, \cup_{m=1}^nX_m = X  \Big\}
\end{align}
for the set of (unordered) partitions of $X\subset \Lambda$, of any length, into paved sets as above. In the following, we shall only encounter paved subsets of $\Lambda $, but will usually not emphasize this fact (for example, we will sloppily write $\sum_{X\subset\Lambda} $ for the sum over paved subsets of $\Lambda$). For a fixed paved set $D\subset\Lambda $, we also write
\begin{align}
\mc P_D(X) &= \big\{\{X_m\}_1^n\in\mc P(X)\text{ s.t. }  X_m\supset  \overline{X_m\cap D} \big\}
\end{align} 
where $\overline X = \{\Delta\in\mc C, d(\Delta,X)\leq 1\} $ (by convention, $\overline\emptyset=\emptyset $). Then, a polymer gas with large field set $D$ takes the general form 
\begin{align}
\sum_{\{X_m\}_1^n\in\mc P_D(X)} \prod_{m=1}^n g^D(X_m;\psi|_{\square_{X_m}})
\end{align}
The functions $g^D(X;\psi|_{\square_X}) $ are referred to as ``activities''. Expressions of this type are amenable to cluster and Mayer expansion arguments. \\
The effective action $Z_{\geq 4}(\psi) $ constructed in \gk satisfies, for any fixed $D$, the following polymer gas representation with large field set $D$:
\begin{align}\label{gkpsrep2}
Z_{\geq 4}(\psi)&= e^{-\frac g4 \int_{\xi \in \square_{D^c}} \psi_\xi ^4}\sum_{\{X_m\}_1^n\in\mc P_D(\Lambda)} \prod_{m=1}^n g^D(X_m;\psi|_{\square_{X_m}}) \qquad \forall\, \psi\in\mc D(D,\Lambda).
\end{align}
Here $g$ is an ``effective coupling constant'' (which is small and positive), and the activities $g^D(X;\psi|_{\square_X}) $ satisfy certain bounds, see Theorem \ref{thmgk} for details. We also used the notation $\int_{\xi\in Z} := L^{-4n}\sum_{\xi\in Z} $ for $Z\subset \Lambda_0 $.\\ 
The above equality holds if $\psi\in \mc D(D,\Lambda) $ belongs to the set of fields small outside $D$. This set is defined as follows: 
\begin{align}\label{deflfs}
\mc D(D,X) &= \Big\{\psi = \mc A\phi + \psi'\big|_{\square_X},\, \psi'\in\mc K(X), \phi\in\mb R^{\Lambda} \text{ s.t. } D(\phi) \subset D   \Big\},
\end{align}
where the set of small fields is (for $X\subset\Lambda$)
\begin{align}\label{defsfs}
\mc K(X) &= \Big\{\psi = \mc A (\phi+\phi')\big|_{\square_X} ,\, \phi\in\mb R^{\Lambda},\phi'\in \mb C^{\Lambda},\\\nonumber&\qqquad D(2\phi)\cap X=\emptyset,\,|\phi'_x|\leq c_{\mc K}\,c_2\, g^{-\frac14}\, e^{\half c_{\mc A} d(x,X) }\Big\}
\end{align}
and $D(\phi)$ is the large field set defined by $\phi\in\mb R^\Lambda$, namely the smallest paved $D\subset\Lambda $ such that
\begin{align}\label{defdphi}
|(\mc A\phi)_\xi|&\leq  c_2\ g^{-\frac14} e^{\frac1{10}c_g\, L^{-n} d(\xi,\square_{D^c}) }
\end{align}
The above definitions depend on constants $c_{\mc K} = c_{\mc K}(L) $ and $c_2 $ which are discussed in Theorem \ref{thmgk} below. \\
Note that, by (\ref{defdphi}), the field $\psi_\xi = (\mc A\phi)_\xi $ is not allowed to become bigger than $c_2\, g^{-\frac14} $ in the complement of the large field set $D$. In particular, the first factor in (\ref{gkpsrep2}) is always bounded below by $e^{-\frac{c_2^4}{4}|D^c|} $. \\
The decay of $Z_{\geq 4}(\psi) $ for larger fields is not quartic, and is encoded in bounds on $g^D(X;\psi|_{\square_X}) $. Roughly speaking, $g^D(X;\psi|_{\square_X}) $ decays as $e^{c_1|D\cap X|_\ell -\int_{\xi\in D}g^\half \psi_\xi^2} $ at large fields, where $c_1 $ is big (depending on $\ell$), but small compared to $c_2$, and we wrote
\begin{align}
|X|_\ell = \text{number of blocks }\Delta\subset X.
\end{align}
In addition to this decay, $g^D(X;\psi|_{\square_X}) $ is small whenever $X$ contains many blocks $\Delta $, or if these blocks are far apart. Both can be expressed by saying that $g^D(X;\psi|_{\square_X}) $ decays as $e^{-c_g \mc L(X)} $, where $c_g$ depends only on $L$, and 
\begin{align}
\mc L(X) &=  \text{length of shortest tree on the centers of }\Delta\subset X.
\end{align}
If $X=\Delta $ consists of a single block, and $ D\cap \Delta = \emptyset$, then $g^{D}(\Delta;\psi_{\square_\Delta}) $ is close to $1$.\\
There is some redundancy in the family of representations (\ref{gkpsrep2}) since, if $\psi\in \mc D(D,\Lambda) $, then also $\psi\in \mc D(D',\Lambda) $ for any $D'\supset D $ (in words, if $\psi $ is small outside $D$, then it is also small outside $D'$ if $D'\supset D $). By analyticity, this implies the following relation between activities $g^D $ and $g^{D'} $: 
\begin{align}\label{ancont}
\text{If } D\cap X = D'\cap X  \text{ with } D\subset D' \text{, then } g^D(X;\psi) = g^{D'}(X;\psi)  \text{ on } \mc D(D,X) .
\end{align}
We can now state the result of \gk:

\begin{thm}[\gk]\label{thmgk}
For any choice of $\ttl\geq \ttl(L) $, and any choice of $c_1\geq c_1(\ttl),c_2\geq c_2(c_1) $, if $g_0$ is small enough depending on these choices, the following holds: There exists a $\nu_c = \nu_c(g_0) $ such that, for all $n=0,\ldots,N $, the effective action $Z(\phi) $ of (\ref{eqeffa}) can be decomposed as in (\ref{gkpsrep}). Furthermore,
\begin{enumerate}
	\item The quadratic form $\bar G^{-1} $ in that decomposition is defined via its Fourier transform $\bar\mu(p) $ as in (\ref{deffourier}), and $\bar\mu(p) $ is the restriction to $\Lambda^\star $ of an entire periodic function on $\mb R^4 / 2\pi\mb Z^4  $, and there exist constants $c_\mu,c_\mu'>0$ and $ C_\mu<\frac1{2c_\mu} $, depending only on $L$, such that $\bar\mu(p)\geq c_\mu' $ if $|p|\geq c_\mu $, and 
	\begin{align}\label{smallpasympt}
	|\bar\mu(p) - z\, p^2 - \nu|\leq C_\mu \, |p|^3\qquad \text{if}\qquad |p|<c_\mu 
	\end{align}
	Here, $z$ and $\nu$ are the effective wave function renormalization and chemical potential.
	\item The function $Z_{\geq 4}(\psi) $ in the decomposition enjoys the representations (\ref{gkpsrep2}), with $g $ the effective coupling constant, and with activities satisfying the bounds
	\begin{align}\label{lfbound}
	\sup_{\psi\in   {\mc D}(D,X)}|g^D(X;\psi) | &\leq e^{c_1 |D\cap X|_\ttl - c_g\mc L(X) - \int_{\xi\in \square_{D\cap X}} g^{\frac 12 }|\psi_\xi|^2 + 20 \, g\, (\Im \psi_\xi)^4  }
	\end{align}
	Further, if $X=\Delta\in\mc C $ is a cube, and $D\cap X=\emptyset $, we have
	\begin{align}\label{blocksfbound}
	\sup_{\psi\in\mc D(D,\Delta)}|  g^D (\Delta;\psi) -1 |&\leq  g^{\frac13}
	\end{align}
	\item The effective coupling constant, the effective chemical potential, and the wave function renormalization satisfy
	\begin{align}
	g &= \big(g_0^{-1} + \tfrac{9\log L}{16\pi^2}\,  n + O(\log  n)\big)^{-1} \nnb
	|\nu| &\leq c_\nu \, g\label{eqpert}\\
	|z-1|&\leq c_z\, g.\nonumber
	\end{align}
	The constants depend only on $L$.
\end{enumerate}

\end{thm}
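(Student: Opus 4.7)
The plan is to establish Theorem \ref{thmgk} by induction on the scale $n$, performing one block-spin transformation at each step and producing a representation of the form (\ref{gkpsrep})--(\ref{gkpsrep2}) at scale $n+1$ from one at scale $n$. The base case $n=0$ is essentially free: $\bar G^{-1} = -\Delta + \nu_c$, $g = g_0$, $z=1$, and the only activity is $g^\Lambda(\Lambda;\psi) = e^{-\frac{g_0}{4}\sum_\xi \psi_\xi^4}$ on $D=\Lambda$; all other large-field sets $D$ are then determined from this by (\ref{ancont}).

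For the inductive step, I would first integrate out the fluctuation field $\psi'$ on $L^4$-cells of the scale-$n$ lattice. On the candidate small-field region $D^c$ at the new scale, the product $e^{-\frac g4 \int\psi^4}\prod_m g^D(X_m;\psi)$ is Taylor-expanded in $\psi'$ to a prescribed order, and the fluctuations are Wick-contracted against the Gaussian measure. The resulting Feynman graphs are sorted by RG relevance: the local $\phi^4, \phi^2, (\nabla\phi)^2$ parts update $g,\nu,z$, while irrelevant remainders and higher Taylor orders survive as new polymer activities. On the large-field set $D$, where $\psi$ can be as large as $c_2 g^{-1/4}$ (cf.\ (\ref{defdphi})), no perturbative expansion is possible; stability comes instead from the explicit $e^{-\frac g4\int\psi^4}$ factor, combined with the $e^{-\int g^{1/2}|\psi|^2}$ decay built into (\ref{lfbound}). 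A Mayer expansion then decouples small-field from large-field polymers and reshapes the output into the form (\ref{gkpsrep2}). The hierarchy $\ttl \ll c_1 \ll c_2 \ll g_0^{-1}$ is forced here: $c_1$ must dominate the $L^4$-fold combinatorial explosion per subblock, $c_2$ must make (\ref{defdphi}) self-consistent (most blocks stay small), and $g_0$ must beat both.

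For item 3, the one-loop ``fish'' graph (two $\phi^4$ vertices joined by two fluctuation propagators at scale $n$) yields $g \mapsto g - \beta_2 g^2 + O(g^3)$ per step, with $\beta_2 = \frac{9\log L}{16\pi^2}$ arising from the combinatorial symmetry factor together with the scale-invariant momentum integral over one RG shell (which contributes $\log L$ per step). Solving this discrete ODE yields the asymptotics $g_n = (g_0^{-1}+\frac{9\log L}{16\pi^2}n + O(\log n))^{-1}$ of item 3. The constants $\nu$ and $z$ have beta functions of order $O(g)$ (contractive for $\nu$, bounded for $z$), so their perturbative bounds are maintained throughout the induction. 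The small-$p$ asymptotics (\ref{smallpasympt}) follows by explicit Fourier analysis: $\bar\mu(p)$ is the block-averaged covariance of $-z\Delta + \nu$ plus smooth perturbative corrections, whose Taylor expansion at $p=0$ yields the stated $zp^2 + \nu + O(|p|^3)$ form.

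The main obstacle is the self-consistent determination of $\nu_c$. Since $\nu$ is a (marginally) relevant coupling, a generic choice of $\nu_c$ makes $\nu_n$ grow exponentially with $n$ and destroys the polymer estimates after finitely many steps; only a distinguished value $\nu_c(g_0)$ keeps $\nu_n$ small for all $n$. This $\nu_c$ must be constructed implicitly, typically via a contraction/fixed-point argument on trajectory space carried out simultaneously with the polymer estimates themselves, since the beta function for $\nu$ depends on the full activities. A secondary, intertwined difficulty is preserving the compatibility relation (\ref{ancont}) through the induction: the large-field set at scale $n+1$ must be chosen coherently with the one at scale $n$ before the block integration is performed, so that analyticity in the fields is never broken during the recombination of activities.
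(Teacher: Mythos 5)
This theorem is attributed to Gawedzki and Kupiainen (the paper labels it ``[\gk]''), and the present paper does not prove it: it is imported as a black-box input, with the remark following the theorem only explaining how the authors have streamlined and reorganized \gk's original statements into the form quoted here. There is therefore no proof in this paper against which to compare your proposal; the correct reference is \cite{gk1985}, and the relevant construction occupies essentially the entirety of that (long) paper.

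That said, your sketch does capture the right landscape of the \gk argument at a bird's-eye level: the induction over scales, the splitting of the effective action into a Gaussian part tracked by $(g,\nu,z)$ and a ``non-Gaussian'' polymer gas with large-field sets, the one-loop recursion $g_{n+1}=g_n-\beta_2 g_n^2+O(g_n^3)$ with $\beta_2=\tfrac{9\log L}{16\pi^2}$ (which upon inversion yields (\ref{eqpert})), the non-perturbative large-field stability coming from the explicit $e^{-\frac g4\int\psi^4}$ factor, and the fixed-point construction of $\nu_c$ as the unique initial datum keeping the relevant coupling bounded along the flow. The base case you describe is also correct: with $D=\Lambda$ the constraint in $\mc P_D$ forces the trivial partition, so $g^\Lambda(\Lambda;\psi)=e^{-\frac{g_0}4\sum\psi^4}$. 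Where the sketch falls far short of a proof is in everything that makes \gk's paper difficult: the single-scale cluster expansion and its convergence, the precise inductive hypotheses on the activities $g^D(X;\cdot)$ and their preservation under one RG step (in particular the delicate balance (\ref{lfbound}) between the combinatorial factor $e^{c_1|D\cap X|_\ell}$, the tree decay $e^{-c_g\mc L(X)}$, and the quadratic field decay), the consistency of the large-field decomposition across scales encoded in (\ref{ancont}), the analyticity domains $\mc K(X)$, $\mc D(D,X)$ and their propagation, and the quantitative control of the irrelevant contributions to $\bar\mu(p)$ that yields (\ref{smallpasympt}). ``Taylor-expand, Wick-contract, sort by relevance'' is the correct first-order description but elides the entire technical content; a reviewer should not read this proposal as having re-derived the result.
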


\begin{rem}
The results of \gk are actually more detailed than what is stated in Theorem \ref{thmgk}, since an inductive proof of Theorem \ref{thmgk} would probably not be possible without such details. We have given a condensed account and streamlined the notation to what we need for our current purposes. In particular:
\begin{enumerate}
	\item Detailed momentum space formulas and bounds are available for the relevant, marginal, and for the irrelevant contributions to the quadratic part of the effective action. While they have to be treated separately in the inductive proof of \gk, we have here collected them into the single quadratic form $(\phi,\bar G^{-1}\phi) $. It is straightforward to deduce from \gk the properties of $\bar\mu(p) $ stated in Theorem \ref{thmgk}, i.e. positivity for large enough momenta, and that the low momentum behavior is described by the relevant and marginal terms, up to irrelevant errors.
	\item The set $\mc P_D(X) $ of partitions over which we sum in the representation (\ref{gkpsrep2}) is slightly different in \gk than here. Note that in our definition, a polymer $X_m $ has to contain all neighboring cubes of any cube $\Delta\subset X_m\cap D $; equivalently, $X_m\cap D $ has to be a union of connected components of $D$, and $X_m$ must also contain all neighboring cubes of these connected components. By way of contrast, \gk only require that $X_m\cap D $ is a union of connected components of $D$. Every partition in the \gk sense defines a partition in our sense (simply merge neighboring polymers if they touch the same connected component of $D$), and so their representation of $Z_{\geq 4} $ may be resummed to give our representation. It is easy to see that the resummed activities satisfy the same bounds as in \gk, modulo a change in the constants.
	\item In \gk, a distinction is made for $g^D(X;\psi) $ (which is called $g^{nD}_X(\psi)$ there) between the cases $D\cap X=\emptyset $ (the small field activities) and $D\cap X\neq\emptyset $ (the large field activities). More details are needed for the small field case than what we discuss here, in particular, it is necessary for the terms contributing to $g^D(X;\psi) $ to be irrelevant in the RG sense. We will only need the bounds (\ref{lfbound}), (\ref{blocksfbound}), which are easy consequences of the detailed properties stated in \gk.
	\item The domain $\mc K(X) $ of complex valued small fields, and therefore also $\mc D(D,X) $, can actually be chosen larger than is done in (\ref{defsfs}). In fact, in \gk, $\mc K(X) $ is defined in terms of bounds on the field $\psi $ that are satisfied by any field of the form $\mc A(\phi+\phi')|_{\square_X} $, with $\phi,\phi'$ as in (\ref{defsfs}), by way of the bound (\ref{boundmca}) on the kernel $\mc A$ and similar bounds on its discrete derivatives. In our argument, we will only ever encounter fields of the form $\mc A(\phi+\phi')|_{\square_X} $, and such details are not important for us.
\end{enumerate}
\erem
\end{rem}

\begin{rem}\label{remivl}
It is also proven in \gk that the kernels $\bar G^{-1} $ and $ \mc A$ are the periodizations (with period $L^N $) of infinite volume kernels ($N=\infty$) satisfying the same bounds, and that, upon identifying the lattices $\Lambda$ for different values of $N$ in the natural way, the activities $\bar g(X) $ (for finite $X$) converge to limiting quantities, as $N\to\infty$. The limiting quantities satisfy the same bounds, with large and small field domains defined in the same way as above, except that $\phi $ in (\ref{deflfs}) and (\ref{defsfs}) is required to have finite support. We will need this fact to prove convergence of the thermodynamic limit in (\ref{eqmagn}).
\erem\end{rem}

\section{Heuristic derivation of the logarithmic correction}\label{remheur}

We now give a heuristic derivation of (\ref{logcorr2}). First, note that Theorem \ref{thmgk} suggests that
\begin{align}\label{zapprox}
Z(\phi) &\approx \const e^{-\half \sum_{x\in\Lambda} \phi_x(\bar G^{-1}\phi)_x - \frac g4 \int_{\xi\in\Lambda_0}(\mc A\phi)_\xi^4}
\end{align}
Indeed, the factor $e^{-c_g \mc L(X)} $ in the bound (\ref{lfbound}) is small (exponentially in $\ell$) whenever $X$ consists of more than one block $ \Delta$, so that $g^D(X;\psi) \approx \delta_{|X|_\ell,1} $ should hold by (\ref{blocksfbound}), at least in the small field region where
\begin{align}\label{remsfr}
|\phi_x|&\leq c_2 \, g^{-\frac14}\qquad\qquad\forall x\in\Lambda.
\end{align}
Since
\begin{align}
\sum_{\{X_m\}_1^n\in\mc P(\Lambda)}\prod_{m=1}^n \delta_{|X_m|_\ell,1} = 1,
\end{align}
it follows from these considerations that (\ref{zapprox}) should be accurate for all fields in the small field region (\ref{remsfr}).\\
For the magnetization (\ref{eqmagn2}), using that $\sum_{x\in\Lambda}\phi_x =\int_{\xi\in\Lambda_0}(\mc A\phi)_\xi  $ by (\ref{mcai2}), we conclude,
\begin{align}\label{approxmagnh}
M(h) &\approx \tfrac1{L^n|\Lambda|} \partial_\eps \log \int e^{-\half\sum_{x\in\Lambda}\phi_x(\bar G^{-1}\phi)_x - \int_{\xi\in\Lambda_0} -(\hbar + \eps)(\mc A\phi)_\xi  + \frac g4 (\mc A\phi)_\xi^4}\prod_{x\in\Lambda}\ud \phi_x\Big|_{\eps=0}.
\end{align}
\para{Shifting the field.}{fsheur} We would like to view the above integral as approximately Gaussian, and for that reason need to center it around the maximum of the argument of the exponential, at $\eps=0$. Even though the Hessian of that argument is not negative definite (recall (\ref{smallpasympt}) and $\nu<0 $), it turns out to be sufficient to look for the maximizer among constant fields $\phi_x\equiv \bar\phi $. Since $\mc A\bar\phi = \bar\phi $ by (\ref{mcai1}), $\bar\phi $ has to minimize
\begin{align}\label{defbarphi}
\bar\phi = \text{argmin }\{\tfrac \nu2 \bar\phi^2 + \tfrac g4 \bar\phi^4 - \hbar\bar\phi\}
\end{align}
With this choice of $\bar\phi $, upon translating fields in (\ref{approxmagnh}), we have
\begin{align}\label{approxmagnh2}
M(h)&\approx L^{-n}\bar\phi +  \tfrac1{L^n|\Lambda|} \partial_\eps \log \int e^{-\half\sum_{x\in\Lambda}\phi_x( G^{-1}\phi)_x - \int_{\xi\in\Lambda_0} V_\eps((\mc A\phi)_\xi) }\prod_{x\in\Lambda}\ud \phi_x\Big|_{\eps=0}
\end{align}
where
\begin{align}
G^{-1} &= \bar G^{-1} + 3g\bar\phi^2\, \mc A^\star\mc A \\\label{defveps}
V_\eps(x) &= -\eps \, x + g\bar\phi\, x^3 + \tfrac g4 x^4   
\end{align}
Note that (\ref{approxmagnh2}) is only accurate\footnote{Outside the small field region, the coupling constant $g$, and certainly (\ref{eqpert}), on which our heuristics are based, ultimately loose their meaning.} if the translation approximately preserves the small field region, that is, if
\begin{align}\blabel{sfcondshift}
|\bar\phi|\ll  g^{-\frac14}.
\end{align}
On the other hand, since $\bar G^{-1}\geq \nu = - \const g  $, $G^{-1} $ will be positive definite (see the Appendix for a proof), and could thus serve as the covariance of a Gaussian measure, as soon as 
\begin{align}\label{eqbarphibig}
|\bar\phi|\gg 1,
\end{align}
since this implies $3g\bar\phi^2\gg|\nu| $. As can be seen from (\ref{defbarphi}), the above conditions on $\bar\phi$ can be achieved by choosing an appropriately large $n$, depending on $h$. Indeed if $\frac{\hbar}{g} $ is large, we conclude from (\ref{defbarphi}) that
\begin{align}\label{approxphi}
\bar\phi\approx \Big(\tfrac \hbar g\Big)^{\frac 13} = L^n \Big(\tfrac h {g}\Big)^{\frac 13} 
\end{align}
and so, using (\ref{eqpert}) and $n\gg g_0^{-1} $ if $h$ is small, $1\ll|\bar\phi|\ll   g^{-\frac14} $ is essentially equivalent to
\begin{align}\label{stabregn}
n = \tfrac 13\log_L h^{-1} - x& \qquad\text{with}\\\nonumber (\tfrac13-\tfrac14) \log_L\log h^{-1}   + x'\leq x&\leq  \tfrac13 \log_L\log h^{-1}-x''  ,
\end{align}
where $x',x''\gg 1  $ can be chosen independently of $h$. Note that we then have
\begin{align}\label{hovergheur}
\tfrac h {g} &\approx \tfrac{h\cdot 9\,n\log L } {16\pi^2}\approx \tfrac{3h\log h^{-1}} {16\pi^2}  ,
\end{align}
and so the first term in (\ref{approxmagnh2}) looks exactly like what we want, i.e. like (\ref{logcorr2}).\footnote{In fact, the first term in (\ref{approxmagnh2}) looks like (\ref{logcorr2}) whenever $n = \tfrac13\log_L h^{-1}+x $ with $|x|\leq \const\delta\log_L h^{-1}$.}\\

\para{Perturbation theory for the remainder.}{ptheur} To complete the heuristics, we need to argue that the second term in (\ref{approxmagnh2}) is actually smaller than the first, for some choice of $n$ in the region (\ref{stabregn}). We shall attempt to do so by regarding it as a perturbation of a Gaussian measure. After diagonalizing the quadratic form $(\phi,G^{-1}\phi) $, we get that $M(h) = L^{-n}\bar\phi + M'(h) $, where 
\begin{align}\label{mpapprox}
M'(h)&\approx  \tfrac1{L^n|\Lambda|} \partial_\eps \log \int e^{- \int_{\xi\in\Lambda_0} V_\eps((\mc A\Gamma\upphi)_\xi) }\ud\mu(\upphi)\Big|_{\eps=0},
\end{align} 
with $\mu $ the standard Gaussian measure on $\mb R^\Lambda $ and
\begin{align}
\Gamma = G^\half.
\end{align}
A rather rigorous way to get an understanding of (\ref{mpapprox}), which contains in particular second order perturbation theory and gives the opportunity to discuss some essential stability aspects of that integral, is to study the following quantities, and their derivatives at $\eps=0$:
\begin{align}
\sfB(\square_\sfx) &= \int e^{-\int_{\xi\in\boxdot_\sfx}V_\eps((\mc A\Gamma\upphi)_\xi) } \ud\mu(\upphi|_{\square_\sfx})\\
 \sfB(\square_\sfx,\square_\sfy) &=  \int_0^1\ud s\int e^{-\int_{\xi\in\boxdot_\sfx}V_\eps((\mc A\Gamma(\upphi|_{\square_\sfx}+s\upphi|_{\square_\sfy}))_\xi) }\label{ce2ndterm}\\ \nonumber&\qquad\qquad\times \Bigg[\sums{x\in\square_\sfx\\x'\in\Lambda} \Gamma_{x',x}\upphi_x\partial_{\phi_{x'}}\Bigg]e^{-\int_{\xi\in\boxdot_\sfy}V_\eps((\mc A\phi)_\xi) }\bigg|_{\phi = \Gamma(\upphi|_{\square_\sfy}+s\upphi|_{\square_\sfx})} \ud\mu(\upphi|_{\square_\sfx\cup\square_\sfy})
\end{align}
Here, for some large, fixed $\mc L\in\ell\cdot \mb N $, $\Uplambda = \mc L\mb Z^4/L^{N-n}\mb Z^4 $, which we regard as a subset of $\Lambda $, and $\square_\sfx\subset\Lambda $ (resp. $\boxdot_\sfx:= \square_{\square_\sfx}\subset \Lambda_0 $) are the boxes of size $\mc L^4 $ (resp. $\mc L^4 L^{4n} $) centered at $\sfx\in\Uplambda $. Note that $\sfB(\square_\sfx),\sfB(\square_\sfx,\square_\sfy) $ are integrals of a fixed dimension, as opposed to the integral in $M'(h)$, whose dimension grows in the thermodynamic limit. Note also that $\sfB(\square_\sfx,\square_\sfy)  $ is one of two identical terms that appear when expressing 
\begin{align}
\int &e^{-\int_{\xi\in\boxdot_\sfx\cup\boxdot_\sfy}V_\eps((\mc A\Gamma(\upphi|_{\square_\sfx\cup\square_\sfy} )_\xi)} \ud\mu(\upphi|_{\square_\sfx\cup\square_\sfy}) - \sfB(\square_\sfx)\sfB(\square_\sfy)\nnb
&=\int  e^{-\int_{\xi\in\boxdot_\sfx}V_\eps((\mc A\Gamma(\upphi|_{\square_\sfx}+ s\upphi|_{\square_\sfy}))_\xi)}e^{-\int_{\xi\in\boxdot_\sfy}V_\eps((\mc A\Gamma(\upphi|_{\square_\sfy}+ s\upphi|_{\square_\sfx}))_\xi)}\ud\mu(\upphi|_{\square_\sfx\cup\square_\sfy})\bigg|_{s=0}^1\nonumber
\end{align}
using the Taylor formula. Quantities like $\sfB(\square_\sfx) $ and $\sfB(\square_\sfx,\square_\sfy)  $ appear in versions (see, e.g., \cite{doi:10.1063/1.4922014}) of a cluster expansion for $M'(h) $, using blocks of size $\mc L^4 $, and one expects
\begin{align}\label{mpapprox2}
M'(h) &\approx L^{-n}\mc L^{-4}\partial_\eps\Big[\log \sfB(\square_\sfx) + 2\sum_{\sfy\in\Uplambda}\frac{\sfB(\square_\sfx,\square_\sfy)}{\sfB(\square_\sfx)\sfB(\square_\sfy)}  \Big]_{\eps=0}
\end{align}
(for any $\sfx $, by translation invariance).\\

\para{Block spin variables.}{bsv} To analyze $\sfB(\square_\sfx) $ and $\sfB(\square_\sfx,\square_\sfy) $, we will need to use a version of block spin coordinates local to each of the blocks $\square_\sfx,\square_\sfy $, and we describe these variables first. We enumerate the $\mc L^4 $ points in each block $\square_\sfx $ arbitrarily, so that $\square_\sfx =\{x_{\sfx,1},\ldots,x_{\sfx,\mc L^4}\} $. Block spin coordinates are then defined by
\begin{align}
\dot\upphi_{\sfx,k}&= \left\{\begin{array}{ll}  \mc L^{-4}\sum_{x\in\square_\sfx}\upphi_x & k=1 \\ \upphi_{x_{\sfx,k}}- \mc L^{-4}\sum_{x\in\square_\sfx}\upphi_x & k=2,\ldots,\mc L^4  \end{array}\right.\\
\upphi_{x_{\sfx,k}} &= \left\{\begin{array}{ll}  \dot\upphi_{\sfx,1} - \sum_{k>1}\dot\upphi_{\sfx,k}    & k=1  \\ \dot\upphi_{\sfx,k} + \dot\upphi_{\sfx,1}  & k>1 \end{array}\right.
\end{align}
We shall always think of $\upphi$ and $\dot\upphi$ as functions of each other, even if this is not obvious from the notation. We make the following observations:
\begin{enumerate}
	\item For any $\sfX\subset\Uplambda $, 
	\begin{align}\label{l2dots}
	\sum_{x\in\square_\sfX}\upphi_x^2 &= \sum_{\sfx\in\sfX} \Big[\mc L^4 \dot\upphi_{\sfx,1}^2 + \sum_{k>1}\dot\upphi_{\sfx,k}^2 + \Big(\sum_{k>1}\dot\upphi_{\sfx,k}\Big)^2\Big]
	\end{align}
	In particular, if we define 
	\begin{align}\label{defsfXnorm}
	\Vert\upphi\Vert_{\sfX}^2  &=  \sum_{\sfx\in\sfX} \Big[\mc L^4\dot\upphi_{\sfx,1}^2 + \mc L^{-4}\sum_{k>1}\dot\upphi_{\sfx,k}^2 + \mc L^{-4}\Big(\sum_{k>1}\dot\upphi_{\sfx,k}\Big)^2 \Big],
	\end{align}
	we have that 
	\begin{align}\label{defncal}
	\int e^{\alpha\Vert\upphi\Vert_{\square_\sfx}^2} \ud\mu(\upphi|_{\square_\sfx}) = (1-2\alpha)^{-\frac12}\cdot\big(1-\tfrac{2\alpha}{\mc L^4}\big)^{-\frac{\mc L^4-1}{2}} =:\mc N
	\end{align}
	is bounded uniformly in $\mc L^4$, as long as $\alpha<\frac12 $.
	\item We note that
	\begin{align}\label{cauchycv}
	|\dot\upphi_{\sfx,1}|\leq   r,\, |\dot\upphi_{\sfx,k}|\leq \mc L^{-4}   r \qquad\text{implies}\qquad |\upphi_x|\leq 2  r \quad\forall x\in\square_\sfx,\,   r\geq 0.
	\end{align}
	\item We have
	\begin{align}\label{cvds}
	\sum_{x,x'\in\Lambda}\Gamma_{x',x}\upphi_x\partial_{\phi_{x'}}  &= \sums{\sfx,\sfy\in\Uplambda\\j,k\in\ul{\mc L^4}}\ddot \Gamma_{\sfx,j;\sfy,k}\dot\upphi_{\sfy,k}\partial_{\dot\phi_{\sfx,j}}
	\end{align}
	with
	\begin{align}
	\ddot \Gamma_{\sfx,j;\sfy,k} &= \mc L^{-4}\sum_{x\in\square_\sfx}\left\{\begin{array}{ll}   \sum_{y\in\square_\sfy} \Gamma_{x,y}  &j=k=1  \\    \Gamma_{x,x_{\sfy,k}} - \Gamma_{x,x_{\sfy,1}}   & j=1,k>1  \\
	\sum_{y\in\square_\sfy } \Gamma_{x_{\sfx,j},y}- \Gamma_{x,y} & j>1,k=1 \\    \Gamma_{x_{\sfx,j},x_{\sfy,k}} - \Gamma_{x_{\sfx,j},x_{\sfy,1}}  - \Gamma_{x,x_{\sfy,k}} + \Gamma_{x,x_{\sfy,1}}   &  j,k>1 \end{array}\right. 
	\end{align}
	\item We have
	\begin{align}
	(\mc A\Gamma\upphi)_{\xi} &= \sum_{\sfy\in\Uplambda,k\in\ul{\mc L^4}}\dot {{\mc A}\Gamma}_{\xi;\sfy,k}\dot\upphi_{\sfy,k} 
	\end{align}
	with
	\begin{align}
	\dot {{\mc A}\Gamma}_{\xi;\sfy,k} &=    \left\{ \begin{array}{ll}   \sum_{y\in\square_\sfy}(\mc A\Gamma)_{\xi,y}   & k=1  \\   (\mc A\Gamma)_{\xi,x_{\sfy,k}} - (\mc A\Gamma)_{\xi,x_{\sfy,1}}   & k >1   \end{array}  \right. 
	\end{align}
\end{enumerate}

\para{Activity on a single block.}{sbheur} We now analyze $\sfB(\square_\sfx) $. It will turn out to be sufficient to use crude bounds on zeroth order perturbation theory, that is, on
\begin{align}\label{eq0optb}
\sfB(\square_\sfx) &= 1+ \int_0^1 \ud s\, \int \Big[ \int_{\xi\in\boxdot_\sfx}V_\eps((\mc A\Gamma\upphi)_\xi) \Big]  e^{-s\int_{\xi\in\boxdot_\sfx}V_\eps((\mc A\Gamma\upphi)_\xi) }\ud\mu(\upphi|_{\square_\sfx})
\end{align}
We would like to use a quadratic lower bound on $V_\eps(x) $ to estimate the remainder integral by a Gaussian-type one, which would be easy to evaluate. There is, of course, a two parameter family of quadratic lower bounds on $V_\eps(x) $, but here we will base our analysis on\footnote{In constructions exactly at the critical point, a constant lower bound is sufficient. The stability aspect of our problem differs from that, because $\inf_{|\eps|\leq \upepsilon  ,|r|\leq r_0,x\in\mb R } \Re V_\eps(x+r) \sim -g\bar\phi^4 $, which, while being small in the region (\ref{stabregn}), turns out to be not small enough, and would create nonperturbatively large factors in the cluster expansion.} 
\begin{align}\label{eqbasicstab}
\inf_{|\eps|\leq \upepsilon  ,|r|\leq 2\mf r } \Re V_\eps(x+r) \geq -\big\la g\bar\phi^2 \big\ra \, x^2 - 3\la \mf r(\upepsilon + 12g\bar\phi \, \mf r^2)\ra 
\end{align}
which holds whenever $x\in\mb R $ and 
\begin{align}\blabel{stabcond} 
\upepsilon \ll g\bar\phi^2\,\mf r\;,\qquad \mf r\ll \bar\phi
\end{align}
(We remind the reader of the notation described in Remark \ref{remsmallnot}). The proof of (\ref{eqbasicstab}) basic.\\
Next, we need to understand whether the bound
\begin{align}\label{gaussbound}
e^{-s\int_{\xi\in\boxdot_\sfx}V_\eps((\mc A\Gamma\upphi)_\xi) }\leq e^{\la g\bar\phi^2\ra\int_{\xi\in\boxdot_\sfx}(\mc A\Gamma\upphi)_\xi^2 +3\mc L^4\la \mf r(\upepsilon + 12g\bar\phi \, \mf r^2)\ra   }
\end{align}
is, at all, integrable with respect to the standard Gaussian. For this, note that, by the Schur bound on the spectral radius, for any two sets $\sfX,\sfX'\subset\Uplambda $ and any $\upphi $ with support in $\square_{\sfX'} $,
\begin{align}\label{schurbound}
\int_{\xi\in \boxdot_\sfX}(\mc A\Gamma\upphi)_\xi^2 &\leq \Vert \mc A\Gamma\Vert_{\text{HS}(\sfX,\sfX')}^2 \Vert \upphi\Vert_{\sfX'}^2
\end{align}
with 
\begin{align}
\Vert \mc A\Gamma\Vert_{\text{HS}(\sfX,\sfX')}^2 &= \sup_{\sfx\in\sfX',k}\sum_{\sfy\in\sfX',k'}\mc L^{-2\delta_{k=1}-2\delta_{k'=1} + 2\delta_{k>1}+2\delta_{k'>1}}\sum_{\sfz\in\sfX}\Bigg| \int_{\xi\in \boxdot_\sfz}\dot{{\mc A}\Gamma}_{\xi;\sfx,k}\dot{{\mc A}\Gamma}_{\xi;\sfy,k'}\Bigg|
\end{align}
In the Appendix, we will prove 
\begin{enumerate}
	\item The bound 
	\begin{align}\label{roughboundcov}
	\Vert \mc A\Gamma\Vert^2_{\text{HS}(\sfX,\sfX')}\leq\const \mc L^{12}\cdot |\sfX|\cdot |\sfX'| .
	\end{align}
	This is an immediate consequence of the pointwise boundedness of $\Gamma $ and the exponential decay of $\mc A $ (see (\ref{boundmca})), but it does not rely on the decay of $\Gamma$ and is therefore not uniform in $|\sfX|,|\sfX'| $.
	\item The somewhat subtle bound
	\begin{align}\label{sharpboundcov}
	\Vert \mc A\Gamma\Vert_{\text{HS}(\sfX,\sfX')}^2&\leq \la ( 3g\bar\phi^2)^{-1}\ra,
	\end{align}
	which holds under the condition 
	\begin{align}\blabel{condhsnorm}
	\mc L^{24}g\bar\phi^2\log^2(g\bar\phi^2) \ll 1.
	\end{align}
	This bound relies on the momentum space regularity of $\Gamma$ (see (\ref{smallpasympt})), and thus uses the decay of $\Gamma $. Indeed, $(3g\bar\phi^2)^{-1} $ is essentially the zero momentum value of $\mc A\Gamma(\mc A\Gamma)^\star  = \mc A(\bar G^{-1}+3g\bar\phi^2\mc A^\star\mc A)^{-1}\mc A^\star $ by (\ref{smallpasympt}), (\ref{eqpert}), (\ref{eqbarphibig}), and (\ref{mcai1}), (\ref{mcai2})
\end{enumerate}
By (\ref{roughboundcov}), under the condition
\begin{align}\blabel{1blockcond1}
\mc L^{12}g\bar\phi^2&\ll 1,
\end{align}
the bound (\ref{gaussbound}) is integrable with respect to the standard Gaussian, and it follows readily from (\ref{eq0optb}) that
\begin{align}\label{bound1pact}
\big|\sfB(\square_x) -1 \big|\leq \const e^{3\mc L^4\la \mf r(\upepsilon + 12g\bar\phi \, \mf r^2)\ra}\,  \big[\mc L^4V_{\upepsilon} ( \mc L^{6}) \big] 
\end{align}
for some numerical constant, and whenever $|\eps|\leq \upepsilon $. This bound says that $\sfB(\square_\sfx)\approx1  $ (i.e. zeroth order perturbation theory) is accurate for block sizes that are small enough so that, in addition to (\ref{1blockcond1}),
\begin{align}\blabel{1blockcond2}
\mc L^4  \upepsilon\,  \mf r \ll1\;,\qquad \mc L^4  g\bar\phi \, \mf r^3 \ll 1
\end{align}
and
\begin{align}
\mc L^{10}\upepsilon \ll 1\;,\qquad \mc L^4V_{0} ( \mc L^{6})  \ll 1\blabel{1blockcond3}
\end{align}

\para{Activity on two blocks.}{tbheur} We now proceed to bound $\sfB(\square_\sfx,\square_\sfy) $. Inserting (\ref{cvds}) into (\ref{ce2ndterm}), and performing the derivative using the Cauchy formula and (\ref{cauchycv}), we get
\begin{align}
|\sfB(\square_\sfx,\square_\sfy)|&\leq \frac{\Vert \mc A\Vert \, \Vert \Gamma\Vert_\sfy }{\mf r}\sup_{\substack{s\in[0,1] \\ \Vert r\Vert_\infty\leq 2\mf r}} \Bigg|\int (1+\Vert\dot\upphi\Vert_{\{\sfx\}}^2) \,e^{-\int_{\xi\in\boxdot_\sfx}V_\eps((\mc A\Gamma(\upphi|_{\square_\sfx}+s\upphi|_{\square_\sfy}))_\xi)} \nnb&\qqquad\times e^{-\int_{\xi\in\boxdot_\sfy}V_\eps((\mc A\Gamma(\upphi|_{\square_\sfy}+s\upphi|_{\square_\sfx}))_\xi + r_\xi) } \ud\mu(\upphi|_{\square_\sfx\cup\square_\sfy})\Bigg|\label{1stboundb2block}
\end{align}
where
\begin{align}
\Vert \Gamma\Vert_\sfy &=\sup_{\substack{\sfx\in\Uplambda\\ k\in\ul{\mc L^4} }} \sums{\sfx'\in\Uplambda\\ j\in\ul{\mc L^4} }\mc L^{4\delta_{j>1} + 2\delta_{k>1} - 2\delta_{k=1}  }e^{-\half c_{\mc A} d(\square_{\sfx'},\square_\sfy) }\, |\ddot\Gamma_{\sfx',j;\sfx,k}|\\
\Vert \mc A\Vert &= \sup_{\xi\in\Lambda_0} \sum_{x\in\Lambda} e^{\frac12 c_{\mc A}L^{-n}|L^nx-\xi|}\, |\mc A_{\xi,x}|.\label{normacal}
\end{align}
Clearly, $\Vert\mc A\Vert\leq  \;$const, with a constant depending only on $L$. We have
\begin{align}
\int_{\xi\in\boxdot_\sfx}V_\eps((\mc A\Gamma(\upphi|_{\square_\sfx}+s\upphi|_{\square_\sfy}))_\xi) +  \int_{\xi\in\boxdot_\sfy}V_\eps((&\mc A\Gamma(\upphi|_{\square_\sfy}+s\upphi|_{\square_\sfx}))_\xi + r_\xi)\nnb &= \int_{\xi\in\boxdot_\sfx\cup\boxdot_\sfy}V_\eps((\mc A\Gamma^s\upphi|_{\square_\sfx\cup\square_\sfy})_\xi + r_\xi)
\end{align}
where
\begin{align}
(\mc A\Gamma^s)_{\xi,x} &= \left\{\begin{array}{ll} s(\mc A\Gamma)_{\xi,x} & \xi\in\square_\sfx\text{ and }x\in\square_\sfy \text{ or } \xi\in\square_\sfy\text{ and }x\in\square_\sfx \\ (\mc A\Gamma)_{\xi,x} & \text{else}  \end{array}  \right.
\end{align}
and we set $r_\xi =0 $ if $\xi\not\in\boxdot_\sfy $. Using again the Gaussian bound (\ref{eqbasicstab}), the Schur bound (\ref{schurbound}), and $\Vert \mc A\Gamma^s\Vert_{\text{HS}(\sfX,\sfX')}\leq \Vert \mc A\Gamma\Vert_{\text{HS}(\sfX,\sfX')} $, we get 
\begin{align}
\Re\int_{\xi\in\boxdot_\sfx}V_\eps((\mc A\Gamma(\upphi|_{\square_\sfx}+&s\upphi|_{\square_\sfy}))_\xi) +\Re \int_{\xi\in\boxdot_\sfy}V_\eps((\mc A\Gamma(\upphi|_{\square_\sfy}+s\upphi|_{\square_\sfx}))_\xi + r_\xi)\nnb &\geq -\big\la g\bar\phi^2\big\ra \Vert \,  \mc A\Gamma\Vert^2_{\text{HS}(\{\sfx,\sfy\},\{\sfx,\sfy\})} \Vert \upphi\Vert_{\{\sfx,\sfy\}}^2 - 6\mc L^4 \la \mf r(\upepsilon + 12g\bar\phi \, \mf r^2)\ra 
\end{align}
for any $s\in[0,1] $ and $\Vert r\Vert_\infty\leq \mf r $. We could use (\ref{roughboundcov}) to conclude that this gives an integrable bound for the integrand in (\ref{1stboundb2block}), but this would require a condition on $\mc L $ that is $4$ times stronger than the condition (\ref{1blockcond1}) for the single block integral. Instead, we shall take the opportunity to use a bound that is uniform in the number of blocks, and thus will also work for any term in the cluster expansion. Indeed, by (\ref{sharpboundcov}), 
\begin{align}\label{stabboundfields}
-\big\la g\bar\phi^2\big\ra \Vert \,  \mc A\Gamma\Vert^2_{\text{HS}(\{\sfx,\sfy\},\{\sfx,\sfy\})} \Vert \upphi\Vert_{\{\sfx,\sfy\}}^2 \geq - \la\tfrac 13\ra\,\Vert \upphi\Vert_{\{\sfx,\sfy\}}^2.
\end{align}
We make the important observation
\begin{align}
\frac 13<\frac 12,
\end{align}
and conclude from it that 
\begin{align}
|\sfB(\square_\sfx,\square_\sfy)| &\leq \const \frac{ \Vert\Gamma\Vert_\sfy}{\mf r},
\end{align}
which holds under the conditions (\ref{stabcond}), (\ref{condhsnorm}) and (\ref{1blockcond2}).\\

\para{Heuristics for the remainder.}{remainheur} The bounds derived in the previous two paragraphs can now be used to discuss heuristics for the remainder term $M'(h) $ of the magnetization, see (\ref{mpapprox}). One expects $M'(h) $ to be approximately given by (\ref{mpapprox2}), and we now bound the two terms on the right hand side of this equation, and compare the bounds with the leading contribution to $M(h)$, namely $L^{-n}\bar\phi $. For the first, by the bound on the activity on a single block, and analyticity in $\eps$,
\begin{align}
\big|\partial_\eps\log \sfB(\square_\sfx)\big|_{\eps=0} & = \big\la \partial_\eps \big(\sfB(\square_\sfx) -1 \big) \big|_{\eps=0} \big\ra \ll   \upepsilon^{-1}
\end{align} 
For the second term, we first bound
\begin{align}
\bigg| \sum_\sfy\frac{\sfB(\square_\sfx,\square_\sfy)}{\sfB(\square_\sfx)\sfB(\square_\sfy)}  \bigg|  &\leq \const \frac{ \sum_\sfy\Vert\Gamma\Vert_\sfy}{\mf r } 
\end{align}
We have
\begin{align}\label{entrbound}
\sum_\sfy\Vert\Gamma\Vert_\sfy \leq \const \Vert \Gamma\Vert
\end{align}
with an $L$ dependent constant and (we give a symmetric definition for later use)
\begin{align}\label{normgamma}
\Vert \Gamma\Vert &= \max\Big\{\sup_{\sfx\in\Uplambda} \sum_{\sfx'\in\Uplambda} \Vert \Gamma\Vert_{\sfx,\sfx'}, \sup_{\sfx'\in\Uplambda} \sum_{\sfx\in\Uplambda} \Vert \Gamma\Vert_{\sfx,\sfx'} \Big\}\nnb
\Vert \Gamma\Vert_{\sfx,\sfx'} &= \sup_{ k\in\ul{\mc L^4} } \sums{ j\in\ul{\mc L^4} }\mc L^{4\delta_{j>1} + 2\delta_{k>1} - 2\delta_{k=1}  } \, |\ddot\Gamma_{\sfx',j;\sfx,k}|
\end{align}
We will prove in the Appendix that, under the condition (\ref{condhsnorm}),
\begin{align}\label{basiccovestimate}
\Vert \Gamma\Vert &\leq \const \mc L^{-2}(g\bar\phi^2)^{-\half}.
\end{align}
Therefore,
\begin{align}
\bigg| \sum_\sfy\frac{\sfB(\square_\sfx,\square_\sfy)}{\sfB(\square_\sfx)\sfB(\square_\sfy)}  \bigg|  &\leq \const (\mc L^2\, \mf r\,  \bar\phi\, g^\half)^{-1}
\end{align}
We shall now impose the condition
\begin{align}\blabel{condconvce}
\mc L^2\, \mf r\,  \bar\phi\, g^\half\gg 1.
\end{align}
We shall later see that this condition implies convergence of the cluster expansion. It is crucial to note that it works in the opposite way to all conditions previously imposed, most prominently (\ref{1blockcond2}), which required $\mc L $ and $\mf r $ to be not too large. We will shortly check that all conditions can actually be satisfied. Assuming that this is the case, by analyticity in $\eps$, we get the bound
\begin{align}
\bigg|\partial_\eps \frac{\sfB(\square_\sfx,\square_\sfy)}{\sfB(\square_\sfx)\sfB(\square_\sfy)}  \bigg|_{\eps=0} &\leq\const (\mc L^2\, \mf r\,  \bar\phi\, g^\half\,\upepsilon)^{-1} .
\end{align}
We conclude
\begin{align}
|M'(h)| &\ll   L^{-n} \,\upepsilon^{-1}\, \mc L^{-4} ,
\end{align}
and this will be smaller than the leading contribution $L^{-n}\bar\phi $ if
\begin{align}\blabel{condeps}
\mc L^4\upepsilon \gg  \bar\phi^{-1}  
\end{align}
Again, this condition works in the opposite direction as the ones of the previous paragraphs, most prominently (\ref{1blockcond2}).\\

\para{Choice of parameters.}{cparaheur} We now show that the various conditions we needed to derive the previous bounds are consistent. We start by discussing (\ref{condconvce}) and the second conditions in (\ref{stabcond}) and (\ref{1blockcond2}). All three of them can be satisfied by choosing
\begin{align}\label{regr}
\big(\mc L^4 g \bar\phi^2\big)^{-\half}\ll \mf r\ll \big(\mc L^4 g \bar\phi\big)^{-\frac13},
\end{align}
which is possible whenever
\begin{align}\label{lfcondshift}
\mc L\bar\phi\gg g^{-\frac14}.
\end{align}
This important conclusion has to be contrasted with (\ref{sfcondshift}): While the latter demands that the ``magnetization scale'' $n$ at which we stop the symmetric flow of \gk be small enough so that the minimizer $\bar\phi$ of the effective potential still belongs to the small field region, the former demands that the minimizer should be a large field (though not quite as large as $c_2\,g^{-\frac14} $) at the scale $n + \log_L \mc L $ of the cluster expansion.\\
The conditions (\ref{condhsnorm}), (\ref{1blockcond1}) and the second condition in (\ref{1blockcond3}) all restrict the size of $\mc L $ to be not too large. Indeed, the latter is the strongest and is equivalent to
\begin{align}\label{upperboundlcal}
\mc L^{22}  g\bar\phi  \ll 1. 
\end{align}
Together with (\ref{lfcondshift}), they therefore give a lower bound on allowed choices for $n$, which complements the upper bound that follows from (\ref{sfcondshift}). More precisely, (\ref{lfcondshift}) and (\ref{upperboundlcal}) can be satisfied by choosing 
\begin{align}\label{regl}
(g^{\frac14}\bar\phi)^{-1} \ll \mc L \ll (g\bar\phi )^{-\frac1{22}},
\end{align}
which is possible whenever
\begin{align}\label{lbshift}
g\bar\phi^{\frac {14}3}\gg 1.
\end{align}
Note $\frac{14}3 >4 $. As in the discussion leading up to (\ref{stabregn}), the conditions (\ref{sfcondshift}) and (\ref{lbshift}) translate into the conditions
\begin{align}\label{regn}
n = \tfrac 13\log_L h^{-1} - x& \qquad\text{with}\\\nonumber (\tfrac13-\tfrac14) \log_L\log h^{-1}  + x'\leq x&\leq  (\tfrac13 - \tfrac3{14})  \log_L\log h^{-1}-x''  ,
\end{align}
(with $x',x''\gg 1  $) on the magnetization scale, and any choice of $n$, $\mc L $ and $\mf r $ compatible with (\ref{regn}), (\ref{regl}) and (\ref{regr}) will allow for the heuristics to be made rigorous.\\
Finally, the bound on the remainder term $M'(h) $ for the magnetization depends on the choice of the analyticity radius $\upepsilon $. This choice is restricted by the first conditions of (\ref{stabcond}), (\ref{1blockcond2}) and (\ref{1blockcond3}), and by (\ref{condeps}). All these are satisfied if 
\begin{align}\label{regeps}
\mc L^{-4} \bar\phi^{-1}\ll\upepsilon\ll \min\{\mc L^{-10}, \mc L^{-4}\mf r^{-1}\}.
\end{align}
Note that $\mc L^{-4} \bar\phi^{-1}\ll \min\{\mc L^{-10}, \mc L^{-4}\mf r^{-1}\}$ holds for any choice of $n,\mc L,\mf r $ made as above. \\
Optimal choices of the parameters $\upepsilon,\mf r, \mc L $ and $n$ allowed by the above discussion (namely, $\upepsilon $ as big as possible, $\mf r$ as small as possible, $\mc L\sim (g\bar\phi^2)^{-\frac1{16}} $ and $n $ such that $\bar\phi \sim g^{-\frac14} $) yield for the magnetization
\begin{align}\label{heurboundm}
|M(h) -L^{-n}\bar\phi| & = |M'(h)| \leq \const  L^{-n} \bar\phi\cdot g^{\frac1{16}},
\end{align}
with a constant that only depends on the constants implied in the $\ll$ symbols of the bounds (\ref{regr}), (\ref{regl}), (\ref{regn}) and (\ref{regeps}). The bound (\ref{heurboundm}) isolates the leading contribution to the magnetization and, together with (\ref{approxphi}) and (\ref{hovergheur}), concludes our heuristics.\\
In the rigorous argument of sections \ref{sce} and \ref{bounds}, there will be an additional constraint on $\mc L $ that is much stronger than the upper bound of (\ref{regl}), and keeps us from optimizing parameters as just explained. Instead of the factor $\const g^{\frac1{16}} $, we will only be able to achieve a constant $\delta $ that can be chosen as small as we like, as long as $g$ and $h$ are small enough. This implies Theorem \ref{mainthm}. Note that any factor $\const g^{\delta} $ with $\delta>0 $ would imply stronger asymptotics of the kind discussed in Remark \ref{remstrongsympt}.

\section{The cluster expansion for the magnetization}\label{sce}

We now turn to a rigorous analysis of (\ref{eqmagn2}), using the representation of Theorem \ref{thmgk}. We notice that (\ref{eqmagn2}), (\ref{gkpsrep}), a translation by the constant field $\bar\phi$, and a scaling of the translated field, like in section \ref{remheur}, imply
\begin{align}
M(h) &= L^{-n}\bar\phi  + M'(h)
\end{align}
with
\begin{align}\label{defmp}
M'(h) &= \tfrac{1}{L^n|\Lambda|} \partial_\eps\log \int  \tilde Z_{\geq 4}(\mc A\Gamma\upphi) \ud\mu(\upphi) \bigg|_{\eps=0} =: \tfrac{1}{L^n|\Lambda|} \partial_\eps \log Z|_{\eps=0}.
\end{align}
Here
\begin{align}
\tilde Z_{\geq 4}(\psi) &= e^{\frac g4 \bar\phi^4|\Lambda| + \int_{\xi\in\Lambda_0} (\hbar + \eps-\bar G^{-1}\bar\phi)\psi_\xi +\frac32 g\bar\phi^2\psi_\xi^2 }Z_{\geq 4}(\psi+\bar\phi).
\end{align}

\vspace{5pt}

\para{A polymer system representation}{inpsrep}
Our first step towards an analysis of $M'(h) $ is to write $Z$ of (\ref{defmp}) as the integral of a polymer system, with polymers of scale $\ell$, using the representations for $Z_{\geq 4} $ of (\ref{gkpsrep2}). This is done in the Proposition below. In its statement, we use the following objects: We define 
\begin{align}
\Psi(X) &= \{\uppsi = (\psi,\psi'),\, \psi = \mc A\phi|_{\square_X}\text{ for some }\phi\in\mb R^\Lambda,\,\psi'\in\mc K_{\mf r}(X) \}\\
\mc K_{\mf r}(X) &= \Big\{\psi = \mc A \phi\big|_{\square_X} ,\, \phi\in \mb C^{\Lambda},\, |\phi_x|<  2\,\Vert \mc A\Vert^{-1}\, \mf r \, e^{\half c_{\mc A} d(x,X) }\Big\}.
\end{align}
We shall also sometimes write $\mc A(\phi,\phi') := (\mc A\phi,\mc A\phi') $. Whenever $\uppsi $ is specified, we will refer to its two components by $ \psi,\psi'$ (instead of something like $\uppsi_1,\uppsi_2 $), and write $\uppsi_\xi = (\psi_\xi,\psi'_\xi) $. For $\upphi\in\mb R^\Lambda $, we also define
\begin{align}
\Psi(X,\upphi) &= \{\uppsi\in\Psi(X)\text{ such that } \psi\in \mc D(\upphi,X)\}.
\end{align}
Here,
\begin{align}
\mc D(\upphi,X) &= \{\psi = \mc A\Gamma\upphi'|_{\square_X},\, \upphi'\in\mb R^\Lambda,\,  \upphi'_x=\upphi_x\,\forall x\in X,\, D(2\Gamma(\upphi'|_{X^c}))\cap X = \emptyset \} .
\end{align}
Note that $\Psi(X,\upphi) $ depends only on $\upphi_x $ for $x\in X $. It is clear that
\begin{align}\label{lfinclproperty}
\mc D(\upphi,X)+\mc K_{\mf r}(X)+\bar\phi\subset \mc D(R(\upphi|_X),X) ,
\end{align}
where
\begin{align}
R(\upphi) &=  D(2\Gamma\upphi+\bar\phi)
\end{align}

\begin{prop}\label{proppsz}
We have
\begin{align}\label{psz}
Z &= \int  \sum_{\{X_m\}_1^n\in\mc P(\Lambda)}  \prod_{m=1}^n g(X_m;(\mc A\Gamma\upphi|_{\square_{X_m}},0);\upphi|_{X_m})  \ud\mu(\upphi)
\end{align}
where
\begin{align}\label{defg}
g(X;\uppsi;\upphi) &= e^{\frac g4 \bar\phi^4|R(\upphi|_X)| + \int_{\xi\in\square_{R(\upphi|_X)}} (\hbar + \eps-\bar G^{-1}\bar\phi)(\psi_\xi+\psi'_\xi) +\frac32 g\bar\phi^2(\psi_\xi+\psi'_\xi)^2 - \int_{\xi\in \square_{X\setminus R(\upphi|_X)}} V_\eps(\psi_\xi+\psi'_\xi)} \nnb &\qqquad \times g^{R(\upphi|_X)}(X;(\psi + \psi')|_{\square_X} + \bar\phi)\times \delta\big(X\supset \overline{ R(\upphi|_X)}\big)\\\nonumber&\qqquad\times \delta\big((\psi+\psi')|_{\square_X}+\bar\phi\in \mc D(R(\upphi|_X),X)\big).
\end{align}
depends only on $\uppsi_\xi,\upphi_x $ for $\xi\in\square_X,x\in X $. $g(X,\uppsi;\upphi)$ is smooth in $\psi $ and analytic in $\psi' $ if $\uppsi\in\Psi(X,\upphi) $, and in this domain we have $\partial_{\psi_\xi}g(X,\uppsi;\upphi) = \partial_{\psi'_\xi}g(X,\uppsi;\upphi) $ and similarly for higher derivatives.

\end{prop}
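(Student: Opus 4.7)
The starting point is to substitute the GK polymer representation (\ref{gkpsrep2}) for $Z_{\geq 4}(\mc A\Gamma\upphi+\bar\phi)$ into $Z=\int \tilde Z_{\geq 4}(\mc A\Gamma\upphi)\,\ud\mu(\upphi)$. My first step is to combine exponential prefactors. The factor $e^{-\tfrac{g}{4}\int_{\xi\in\square_{D^c}}\psi_\xi^4}$ from (\ref{gkpsrep2}), evaluated at the shifted field $\psi=\mc A\Gamma\upphi+\bar\phi$, expands into powers of $\mc A\Gamma\upphi$ and $\bar\phi$. Using the stationarity relation $\hbar-\bar G^{-1}\bar\phi=g\bar\phi^3$, which follows from the minimizer equation (\ref{defbarphi}) together with the zero-momentum value $\bar\mu(0)=\nu$ coming from (\ref{smallpasympt}), the linear-in-$\mc A\Gamma\upphi$ and cubic-in-$\mc A\Gamma\upphi$ terms on $\square_{D^c}$ cancel against the tadpole counterterms in $\tilde Z_{\geq 4}$. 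What survives is exactly the prefactor written in (\ref{defg}): the constant $\tfrac{g}{4}\bar\phi^4|D|$, the linear and quadratic stabilizing terms on $\square_D$, and $-V_\eps$ on $\square_{D^c}$, which splits across a partition $\{X_m\}$ using that $\Lambda_0=\sqcup_m\square_{X_m}$ and, for $D\subset\Lambda$ consistent with the partition, $\square_D=\sqcup_m\square_{D\cap X_m}$.

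\textbf{Localization of the large-field set.} The next step is to convert the GK sum, which is over $\{X_m\}\in\mc P_D(\Lambda)$ for a single global $D$, into a sum over all partitions of $\Lambda$ in which each polymer carries the local large-field set $R(\upphi|_{X_m})=D(2\Gamma\upphi|_{X_m}+\bar\phi)$. The lever is the analytic-continuation identity (\ref{ancont}): $g^D(X;\cdot)$ depends on $D$ only through $D\cap X$. For a partition satisfying the delta constraint $X_m\supset\overline{R(\upphi|_{X_m})}$ written in (\ref{defg}), the sets $R(\upphi|_{X_m})$ are contained in distinct polymers, so their union $D_{\{X_m\}}:=\bigsqcup_m R(\upphi|_{X_m})$ is a valid choice making $\{X_m\}\in\mc P_{D_{\{X_m\}}}(\Lambda)$ with $D_{\{X_m\}}\cap X_m=R(\upphi|_{X_m})$. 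A resummation/reorganization over partitions of $\Lambda$ and the associated admissible $D$'s, using the consistency provided by (\ref{ancont}), converts the GK expansion into the form (\ref{psz}), with each activity $g^{R(\upphi|_{X_m})}(X_m;\cdot)$ depending only on $\upphi|_{X_m}$.

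\textbf{Domain and regularity.} For $\uppsi=(\psi,\psi')\in\Psi(X,\upphi)$, the inclusion (\ref{lfinclproperty}) guarantees that $(\psi+\psi')|_{\square_X}+\bar\phi\in\mc D(R(\upphi|_X),X)$; this is the precise reason for the doubling in $R(\upphi|_X)$ and the $2\Gamma\upphi'|_{X^c}$ condition in the definition of $\mc D(\upphi,X)$, both of which provide exactly enough slack for the shift by $\bar\phi$ and the complex fluctuation $\psi'$ of radius $\mf r\ll g^{-1/4}$. On this domain Theorem \ref{thmgk} ensures that $g^{R(\upphi|_X)}(X;\cdot)$ is smooth in the real and analytic in the complex directions of the field; combined with the entire exponential prefactor and with the indicator-type delta factors (which do not involve $\psi'$), this yields smoothness of $g(X;\uppsi;\upphi)$ in $\psi$ and analyticity in $\psi'$. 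Since $\psi$ and $\psi'$ appear everywhere in (\ref{defg}) only through the sum $\psi+\psi'$, the derivative identity $\partial_{\psi_\xi}g=\partial_{\psi'_\xi}g$ is automatic. Locality of the $\upphi$-dependence reads off directly from the definitions of $R(\upphi|_X)$ and of the other factors in (\ref{defg}).

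\textbf{Main obstacle.} The delicate step is the localization: reconciling the GK expansion for a fixed global $D$ with the partition-dependent $D_{\{X_m\}}$ needed in (\ref{psz}). The subtlety is that $\Gamma\upphi|_{X_m}$ is not the restriction of $\Gamma\upphi$ to $X_m$, so a naive choice like $D(\Gamma\upphi+\bar\phi)$ does not agree with $R(\upphi|_{X_m})$ on $X_m$. The combination of the delta constraint forcing $R(\upphi|_X)\subset X$ with a buffer, the doubling defining $R$, and the flexibility afforded by the $\upphi'|_{X^c}$ parameter in $\mc D(\upphi,X)$, is what allows the GK analytic continuation property (\ref{ancont}) to be invoked so that the local activities reassemble, partition by partition, into the correct total.
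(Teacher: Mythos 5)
Your proposal follows the same route as the paper: substitute the GK representation into $Z$, rewrite the exponential prefactor so that the $\tilde Z_{\geq 4}$ counterterms combine with $-\frac{g}{4}(\psi+\bar\phi)^4$ on $\square_{D^c}$ into $-V_\eps(\psi)$ plus the stabilizing terms on $\square_D$, then localize the global large-field set $D=R(\upphi)$ to the polymer-local sets $R(\upphi|_{X_m})$ via the $\delta$-constraints and the analytic continuation identity (\ref{ancont}), with (\ref{lfinclproperty}) justifying the domain of applicability. One small inaccuracy in the description: under the stationarity relation $\hbar-\bar G^{-1}\bar\phi=g\bar\phi^3$ (which, as you say, follows from (\ref{defbarphi}) and $\bar\mu(0)=\nu$), it is only the \emph{linear} term that cancels; the constant and quadratic terms cancel identically without stationarity, and the cubic term $g\bar\phi\,\psi^3$ does not cancel at all but survives as the cubic term of $V_\eps$, so the phrase ``the linear and cubic terms cancel against the tadpole counterterms'' overstates what stationarity does.
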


\begin{proof}
We need to show that, with $g(X;\uppsi;\upphi) $ defined as in (\ref{defg}), we have
\begin{align}
\sum_{\{X_m\}_1^n\in\mc P(\Lambda)}  \prod_{m=1}^n g(X_m;(\mc A\Gamma\upphi|_{\square_{X_m}},0);\upphi|_{X_m}) = \tilde Z_{\geq 4}(\mc A\Gamma\upphi)
\end{align}
for any fixed $\upphi $. Since
\begin{align}
&e^{\frac g4 \bar\phi^4|X| + \int_{\xi\in\square_X} (\hbar + \eps-\bar G^{-1}\bar\phi)(\psi_\xi+\psi'_\xi) +\frac32 g\bar\phi^2(\psi_\xi+\psi'_\xi)^2 - \int_{\xi\in \square_{X\setminus R}} \frac g4 (\psi_\xi+\psi'_\xi+\bar\phi)^4} \nnb & \qquad = e^{\frac g4 \bar\phi^4|X\cap R| + \int_{\xi\in\square_{R}} (\hbar + \eps-\bar G^{-1}\bar\phi)(\psi_\xi+\psi'_\xi) +\frac32 g\bar\phi^2(\psi_\xi+\psi'_\xi)^2 - \int_{\xi\in \square_{X\setminus R}} V_\eps(\psi_\xi+\psi'_\xi)} ,
\end{align}
this is equivalent to
\begin{align}
Z_{\geq 4}(\mc A\Gamma \upphi + \bar\phi ) &= \sum_{\{X_m\}_1^n\in\mc P(\Lambda)}  \prod_{m=1}^n e^{-\int_{\xi\in\square_{X_m}\setminus R(\upphi|_{X})}\frac g4 \psi_\xi^4 } g^{R(\upphi|_{X_m})} (X;\psi)\Big|_{\psi=\mc A\Gamma\upphi  + \bar\phi} \nnb &\qqquad\times \delta_1(\upphi,\{X_m\}_1^n)\times \delta_2(\upphi,\{X_m\}_1^n),\label{eqproofpsr}
\end{align}
where
\begin{align}
\delta_1(\upphi,\{X_m\}_1^n) &= \prod_{m=1}^n\delta(X_m\supset R(\upphi|_{X_m}))\\
\delta_2(\upphi,\{X_m\}_1^n) &= \prod_{m=1}^n\delta(\mc A\Gamma\upphi|_{\square_{X_m}}+\bar\phi\in \mc D(R(\upphi|_{X_m}),X_m ) )
\end{align}
Set $ D=R(\upphi) $. Note that $\delta_1(\cdots)=0 $ unless $D=\cup_m R(\upphi|_{X_m}) $, and therefore $R(\upphi|_{X_m}) = R(\upphi|_{X_m}) \cap X_m = D\cap X_m $, and $\delta_2(\upphi,\{X_m\}_1^n)=1 $ on the support of $\delta_1 $. By the analytic continuation property of $g^D $ (see (\ref{ancont})), which is applicable by (\ref{lfinclproperty}), we conclude that the right hand side of (\ref{eqproofpsr}) equals
\begin{align}
\cdots &= e^{-\int_{\xi\in \square_{D^c}} \frac g4 \psi_\xi^4}\sum_{\{X_m\}_1^n\in\mc P(\Lambda)}  \prod_{m=1}^n g^{D} (X;\psi)  \times \delta\big(X_m\supset \overline{D\cap X_m} \big)\Big|_{\psi=\mc A\Gamma\upphi + \bar\phi}\\\nonumber
&= e^{-\int_{\xi\in \square_{D^c}} \frac g4 \psi_\xi^4}\sum_{\{X_m\}_1^n\in\mc P_D(\Lambda)} \prod_{m=1}^n g^{D} (X;\psi) \Big|_{\psi=\mc A\Gamma\upphi + \bar\phi} = Z_{\geq 4}(\mc A\Gamma\upphi+\bar\phi)
\end{align}
by (\ref{gkpsrep2}), as was to be shown.

\end{proof}

\begin{rem}\label{propPhi}
While $g(X;\uppsi;\upphi) $ is defined for all $\uppsi\in\Psi(X) $, it is set equal to a trivial zero unless $(\psi + \psi')|_{\square_X}+\bar\phi \in \mc D(R(\upphi|_X),X) $. In (\ref{psz}), as well as in various quantities appearing in the cluster expansion of Proposition \ref{propce} below, expressions like $g(X;(\mc A\Gamma\upphi|_{\square_X},0);\upphi|_X) $ appear, and one might wonder whether these are ever evaluated to such trivial values of $g(X;\uppsi;\upphi) $.\\
It is a somewhat subtle point that, while checking whether $\mc A\Gamma\upphi |_{\square_X}+\bar\phi \in \mc D(R(\upphi|_X),X) $ requires the knowledge of $\upphi_x $ also for $x\in X^c$, for products over a partition such as in (\ref{psz}), it is sufficient for each factor $g(X_m;\uppsi;\upphi) $ to check a \emph{local} condition such as $X\supset \overline{R(\upphi|_X)} $. According to (\ref{lfinclproperty}), this will ensure that $g(X;\uppsi;\upphi) $ will never be evaluated trivially as long as it appears in such a product over a partition. \\
Philosophically speaking, this is similar as in a social contract where nobody needs to defend themselves against others because everybody agreed to restrain their own violent tendencies: As long as every activity $g(X;(\mc A\Gamma\upphi|_{\square_X},0);\upphi|_X) $ checks the local condition that any point $x\in X $ at which $\upphi_x $ is large is, in fact, contained deep inside $X$ (namely, $X\supset \overline{R(\upphi|_X)} $), none of these factors needs to check the size of $\upphi_x $ for $x\not\in X $ (a non-local condition), since if $\upphi_x $ was large at a point close to $X$, $X$ would have to overlap with the polymer containing this point, which is a contradiction. This ensures that polymer system representations with large field conditions, such as (\ref{gkpsrep2}), still ``factorize'', i.e. can be written as polymer systems with large field conditions checked locally.

\erem
\end{rem}

\para{Cluster and Mayer expansion for $Z$}{cmexp}
Our next step is to use a cluster expansion that writes $Z$ of (\ref{psz}) as a polymer system with blocks of size $\mc L^4 $, and to compute the logarithm of this polymer system via the Mayer expansion, if the expansion converges. More precisely, defining the set of partitions (polymers) of $\sfX\subset \Uplambda $ as
\begin{align}
\mc P(\sfX) &= \big\{\{\sfX_m\}_1^n,\, \sfX_{m}\cap \sfX_{m'}=\emptyset,\, \cup_m\sfX_m = \sfX\big\}
\end{align}
we have the following Proposition.

\begin{prop}\label{propce}
For any $\mc L\in \ell\cdot\mb N $, we have the cluster expansion
\begin{align}\label{cealg}
Z= \sum_{\{\sfX_m\}_1^n\in\mc P(\Uplambda)}\prod_{m=1}^n \sfA(\sfX_m),
\end{align}
where 
\begin{align}\label{is}
\sfA(\sfX) &= \int A(\sfX;\upphi|_{\square_\sfX})  \ud\mu(\upphi|_{\square_\sfX}),
\end{align}
with the localized activities
\begin{align}\label{loc}
A(\sfX;\upphi|_{\square_\sfX}) &= \sums{\{\sfX_m\}_1^n\in\mc P(\sfX) \\  T\text{ d-tree on }\ul n} \prod_{e\in T}\Big[\int_0^1 \ud s_e\Big] \ \prod_{m=1}^n \Big[\mf D^{\upphi}_{m,T,\{\sfX_m\}} \sfg(\sfX_m;\mc A( \Gamma(\bd s_m\cdot\upphi) ,\phi');\upphi|_{\square_{\sfX_m}}) \Big]_{\phi' =0}\\\label{diffop}
\mf D^{\upphi}_{m,T,\{\sfX_m\}} &= \prod_{m':(m,m')\in T} \Big[ \sums{x\in\Lambda \\  y\in \square_{\sfX_{m'}}} \Gamma_{x,y}\upphi_y\partial_{\phi'_x} \Big]\\
(\bd s_m )_x &= \min\{s_e,\, e\in \text{the }T\text{ path from }m\text{ to }m',\text{ where }x\in\square_{\sfX_{m'}}\}\label{defintpol}
\end{align}
and the reblocked activities
\begin{align}\label{rebloc}
\sfg(\sfX;\uppsi;\upphi ) &= \sums{\{X_m\}_1^n \in \mc P(\square_\sfX) \text{ s.t.}\\ \mc G(\square(\{X_m\}_1^n))\text{ connected}} \prod_{m=1}^n g(X_m;\uppsi;\upphi)\\
\mc G(\square(\{X_m\}_1^n)) &= \text{graph on }\ul n\text{ with: }\{m,m'\}\in \mc G \Leftrightarrow \exists\, \sfx\text{ s.t. }\square_\sfx\cap  X_m ,\square_\sfx\cap  X_{m'} \neq\emptyset  .
\end{align}
$\sfg(\sfX;\uppsi;\upphi) $ depends only on $\uppsi_\xi,\upphi_x $ for $\xi\in\boxdot_\sfX,x\in\square_X $, and satisfies $\partial_{\psi_\xi}\sfg(\sfX;\uppsi;\upphi) = \partial_{\psi'_\xi}\sfg(\sfX;\uppsi;\upphi) $ and similarly for higher derivatives, whenever $\uppsi\in \Psi(\square_\sfX,\upphi) $. Furthermore, we have
\begin{align}\label{mealg}
\log Z &= \sum_{\sfx\in\Uplambda} \log \sfA(\{\sfx\}) + \sum_{n\geq 1}\tfrac1{n!}\sums{\sfX_1,\ldots,\sfX_n\subset\Uplambda\\ |\sfX_m|>1} \rho\big((\sfX_m)_1^n\big) \prod_{m=1}^n \frac{\sfA(\sfX_m)}{\prod_{\sfx\in\sfX_m}\sfA(\{\sfx\})}
\end{align}
whenever $\Re\sfA(\{\sfx\}) >0$ for all $\sfx$ and the series converges absolutely. Here, the Ursell function
\begin{align}
\rho\big((\sfX_m)_1^n\big) &= \sum_{\mc G \text{ connected graph on }\ul n} \prod_{\{m,m'\}\in \mc G} (-\delta_{\sfX_m\cap\sfX_{m'} \neq\emptyset}).
\end{align}
\end{prop}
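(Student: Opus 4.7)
\textbf{Proof plan for Proposition \ref{propce}.} I would prove (\ref{cealg}) by combining two standard constructions: a reblocking from $\ell$-polymers to $\mc L$-polymers that converts the representation of Proposition \ref{proppsz} into a sum of $\mc G$-connected activities on $\mc L$-sub-partitions, and a Brydges--Kennedy--Abdesselam--Rivasseau (BKAR) tree-graph interpolation that decouples the residual non-local $\upphi$-coupling through the kernel $\mc A\Gamma$. The identity (\ref{mealg}) then follows from (\ref{cealg}) by the Mayer/Ursell expansion for hard-core polymers.

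\textbf{Step 1 (reblocking).} Starting from (\ref{psz}), declare two $\mc L$-blocks $\sfx,\sfy\in\Uplambda$ equivalent in a partition $\{X_m\}\in\mc P(\Lambda)$ iff some $X_m$ meets both $\square_\sfx$ and $\square_\sfy$. Grouping the $\ell$-polymers by these classes partitions them into $\mc G$-connected sub-partitions of each $\square_{\sfX_j}$, and the factorisation of $\prod_m g(X_m;\cdots)$ over disjoint polymers gives
\begin{align*}
\sum_{\{X_m\}\in\mc P(\Lambda)}\prod_m g(X_m;(\mc A\Gamma\upphi,0);\upphi|_{X_m}) = \sum_{\{\sfX_j\}\in\mc P(\Uplambda)}\prod_j \sfg(\sfX_j;(\mc A\Gamma\upphi,0);\upphi|_{\square_{\sfX_j}})
\end{align*}
with $\sfg$ as in (\ref{rebloc}); its locality, smoothness/analyticity in $\uppsi$, and the doubling identity $\partial_\psi=\partial_{\psi'}$ on $\Psi(\square_\sfX,\upphi)$ are inherited immediately from the corresponding properties of $g$ stated in Proposition \ref{proppsz}.

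\textbf{Step 2 (tree-graph decoupling).} Each factor $\sfg(\sfX_j;(\mc A\Gamma\upphi,0);\upphi|_{\square_{\sfX_j}})$ still depends on $\upphi$ outside $\square_{\sfX_j}$ through $\mc A\Gamma$, so $\int d\mu(\upphi)$ does not yet factorise. To decouple, I use the doubling identity to move the non-local part of $\mc A\Gamma\upphi$ into the analytic variable $\phi'$ (writing the argument as $\mc A(\Gamma(\bd s_m\cdot\upphi),\phi')$) and then apply the BKAR forest formula interpolating the pairwise couplings among the sub-pieces of a sub-partition $\{\sfX_m\}\in\mc P(\sfX_j)$ with parameters $s_e$ on the edges of a spanning tree. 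The $s_e$-derivatives act on each $\sfg(\sfX_m;\cdots)$ through its first argument, and the doubling identity converts them into derivatives $\partial_{\phi'_x}$ weighted by $\sum_y\Gamma_{x,y}\upphi_y$; the standard BKAR tree-distance bookkeeping yields exactly the interpolation profile (\ref{defintpol}) (with $(\bd s_m)_x=1$ for $x\in\square_{\sfX_m}$ and, by convention, $=0$ for $x\notin\square_\sfX$) and the differential operator $\mf D^\upphi_{m,T,\{\sfX_m\}}$ of (\ref{diffop}). After this expansion each summand depends on $\upphi$ only through $\upphi|_{\square_\sfX}$, so the Gaussian integral factorises and (\ref{cealg}) follows with $\sfA(\sfX)$ as in (\ref{is}).

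\textbf{Step 3 (Mayer) and main obstacle.} Given (\ref{cealg}) and $\Re\sfA(\{\sfx\})>0$, division by $\prod_\sfx\sfA(\{\sfx\})$ turns the right-hand side into a hard-core polymer partition function on $\Uplambda$ with Mayer bond $-\delta_{\sfX\cap\sfX'\neq\emptyset}$ and unit weight on singletons, whose logarithm is the classical Ursell expansion, giving (\ref{mealg}) under absolute convergence. The main obstacle is Step 2: one must assemble the BKAR forest formula, the doubling identity, and the analytic extension provided by $\mc K_{\mf r}(X)$ so that the identity produces \emph{exactly} the sub-partition-plus-tree expression (\ref{loc}) together with (\ref{defintpol}) and (\ref{diffop}). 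The underlying algebra is a standard forest-formula calculation (provable by induction on the number of sub-pieces), but care is needed to justify interchanging the interpolation derivatives with $\partial_{\phi'_x}$ via $\partial_\psi=\partial_{\psi'}$ on $\Psi(X,\upphi)$, and to verify that the nested sub-partition combinatorics in (\ref{rebloc}) and (\ref{loc}) are mutually consistent so that no double-counting occurs.
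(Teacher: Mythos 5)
Your proposal is correct and follows essentially the same route as the paper's proof: reblocking by $\mc G$-connected components to pass from $\ell$-polymers to $\mc L$-polymers, then a BKAR forest-formula interpolation to localize the non-local $\mc A\Gamma\upphi$-dependence (with the doubling identity $\partial_\psi=\partial_{\psi'}$ used to trade the chain-rule derivatives for $\partial_{\phi'}$-derivatives, which is what makes the later Cauchy estimates work), followed by the standard Ursell/Mayer expansion. The one organizational nuance worth flagging: the paper applies BKAR to a \emph{fixed} partition $\{\sfX_m\}\in\mc P(\Uplambda)$, obtaining a sum over directed forests, and only then groups forest components into trees to produce the partition in (\ref{cealg}) together with the sub-partition-plus-tree sum in (\ref{loc}); your phrasing of ``interpolating among sub-pieces of a sub-partition of $\sfX_j$'' presupposes the $\sfX_j$'s, which do not exist until the forest has been factored into trees, but the resulting identity is the same, and your convention $(\bd s_m)_x=0$ for $x\notin\square_\sfX$ is exactly the standard BKAR convention that makes the integrand depend only on $\upphi|_{\square_\sfX}$ and hence lets the Gaussian integral factorize.
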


\vspace{10pt}

\noindent In the Proposition, we used the notion of a d-tree, which is a directed graph that becomes a tree when the orientation of the edges are disregarded. In other words, it is a tree together with an orientation of each edge. 
\begin{proof}

The identity 
\begin{align}
\sum_{\{X_m\}_1^n\in\mc P(\Lambda)}\prod_{m=1}^n g(X_m;\uppsi;\upphi) &= \sum_{\{\sfX_m\}_1^n\in\mc P(\Uplambda)}\prod_{m=1}^n\sfg(\sfX_m;\uppsi;\upphi)
\end{align}
follows immediately by decomposing the connectedness graph $\mc G(\square(\{X_m\}_1^n)) $ of all terms on the left hand side into connected components. The claim that derivatives of $\sfg(\sfX;\uppsi;\upphi)$ w.r.t. $\psi$ equal those w.r.t. $\psi'$ as long as $\uppsi\in\Psi(\square_\sfX,\upphi) $ is inherited from the same property of $g$ (see also Remark \ref{propPhi} for why this holds on all of $\Psi(\square_\sfX,\upphi) $).\\ 
Even though $\sfg(\sfX;\uppsi;\upphi) $ depends only on $\uppsi_\xi $ for $\xi\in\boxdot_\sfX $ and $\upphi_x $ for $x\in\square_\sfX $, $\sfg(\sfX;(\mc A\Gamma\upphi,\psi');\upphi) $ depends on all the $\upphi_x $, and we localize this dependence using the BKAR interpolation formula \cite{abdesselam1995trees}. For any fixed partition $\{\sfX_m\}_1^n\in\mc P(\Uplambda) $, we have, by the BKAR formula
\begin{align}
\prod_{m=1}^n\sfg(\sfX_m;(\mc A\Gamma\upphi,0);\upphi)&=  \prod_{m=1}^n\sfg(\sfX_m;(\mc A\phi,0);\upphi)\big|_{\phi = \Gamma(\bd s_m \cdot \upphi)} \ \bigg|_{ s(\{m,m'\})=1\,\forall\, m,m'\in\ul n}\nnb
&\mquad\mquad= \sum_{F\text{ forest on }\ul n} \prod_{e\in F}\Big[\int_0^1\ud s_e\Big] \prod_{e\in F} \partial_{s(e)} \prod_{m=1}^n\sfg(\sfX_m;(\mc A\phi,0);\upphi)\big|_{\phi = \Gamma(\bd s_m \cdot \upphi)} \ \bigg|_{ s(\{m,m'\} )=s^F(\{m,m'\})}\nonumber
\end{align}
where $(\bd s_m)_x = s(\{m,m'\}) $ if $x\in\square_{\sfX_{m'}} $ and 
\begin{align}
s^F(\{m,m'\}) &= \min\{s_e,\, e\in \text{the }F\text{ path from }m\text{ to }m' \}
\end{align}
Each derivative $\partial_{s(e)} $ can act on precisely two factors in the product over $m $, namely $m=m_1$ or $m=m_2 $ if $e=\{m_1,m_2\} $. If it acts on $m_1 $, we regard the edge $e = (m_1,m_2)$ as directed towards $m_2$, and the other way around. By the product rule, this produces a sum over directed forests:
\begin{align}\label{dforests}
\prod_{m=1}^n\sfg(\sfX_m;(\mc A\Gamma\upphi,0);\upphi)&=\sum_{F\text{ d-forest on }\ul n} \prod_{e\in F}\Big[\int_0^1\ud s_e\Big]  \\\nonumber &\qquad\qquad\times \prod_{e\in F}\partial_{s(e)} \prod_{m=1}^n\sfg(\sfX_m;(\mc A\phi,0);\upphi)\big|_{\phi = \Gamma(\bd s_m \cdot \upphi)} \ \bigg|_{ s(m,m')=s^F(m,m')},
\end{align}
where now $(\bd s_m)_x = s(m,m') $ if $x\in\square_{\sfX_{m'}} $ and 
\begin{align}
s^F(m,m') &= \min\{s_e,\, e\in \text{the }F\text{ path from }m\text{ to }m' \}.
\end{align}
In (\ref{dforests}), every derivative now acts on a unique factor in the product over $m$, and we can write
\begin{align}
\prod_{e\in F} \partial_{s(e)} \prod_{m=1}^n\sfg(\sfX_m;(\mc A\phi,0);\upphi)\big|_{\phi = \Gamma(\bd s_m \cdot \upphi)} &=  \prod_{m=1}^n \Big[\prod_{m':(m,m')\in F}\partial_{s(m,m')}\sfg(\sfX_m;(\mc A\phi,0);\upphi)\big|_{\phi = \Gamma(\bd s_m \cdot \upphi)}\Big].
\end{align}
By the chain rule, and noting, as in Remark \ref{propPhi}, that we are allowed to use the property that derivatives of $\sfg $ w.r.t. $\psi$ equal those w.r.t. $\psi'$, we get
\begin{align}
\partial_{s(m,m')}\sfg(\sfX_m;(\mc A\phi,0);\upphi)\big|_{\phi = \Gamma(\bd s_m \cdot \upphi)} &= \sums{x \in\Lambda\\ y\in\square_{\sfX_{m'}}} \Gamma_{x,y} \upphi_y \partial_{\phi'_x }\sfg(\sfX_m;\mc A( \Gamma(\bd s_m \cdot \upphi),\phi');\upphi)\big|_{\phi'=0}
\end{align}
Inserting, and factoring the sum over forests into a product of sums over trees, we deduce
\begin{align}
\sum_{\{\sfX_m\}_1^n\in\mc P(\Uplambda)}\prod_{m=1}^n\sfg(\sfX_m;(\mc A\Gamma\upphi,0);\upphi) &= \sum_{\{\sfX_m\}_1^n\in\mc P(\Uplambda)}\prod_{m=1}^n A(\sfX_m;\upphi|_{\square_{\sfX_m}}).
\end{align}
Since the measure $\ud\mu(\upphi) $ is a product measure and the activities $ A(\sfX;\upphi|_{\square_\sfX}) $ are localized, this immediately implies the cluster expansion (\ref{cealg}). The Mayer expansion (\ref{mealg}) follows from (\ref{cealg}) by a standard argument similar to the above. See, e.g., \cite{salmhofer1999renormalization}.

\end{proof}

\noindent
All activities are also functions of $\eps  $, but we will usually not display this dependence explicitly.

\section{Bounds on the cluster expansion}\label{bounds}
In this section, we prove bounds on the cluster expansion described in Proposition \ref{propce}. The activity $\sfA(\sfX) $ is built by an integration, a localization, and a reblocking step, and we will establish a bound for each of these steps, using appropriate norms for the intermediate activities $A(\sfX;\upphi)$ and $\sfg(\sfX;\uppsi;\upphi) $.\\ 
We now first define the norms we need. We set
\begin{align}
\Vert \sfA\Vert &:= \sup_{\substack{\sfx\in\Uplambda \\ |\eps|\leq \upepsilon}} \sums{\sfX\ni\sfx\\|\sfX|>1}  (2e)^{|\sfX|-1} |\sfA(\sfX)|\\
\Vert A\Vert &:=\sup_{\substack{\sfx\in\Uplambda \\ |\eps|\leq \upepsilon}}\sums{\sfX\ni\sfx\\|\sfX|>1} (2\,e\,\mc N )^{|\sfX|-1} \sup_{\upphi\in\mb R^{\square_\sfX}} e^{-(\la\frac13\ra + \frac1{10})\Vert\upphi\Vert_\sfX^2} |A(\sfX;\upphi|_{\square_\sfX})|\\
\Vert\sfg\Vert &:= \sup_{\substack{\sfx\in\Uplambda \\ |\eps|\leq \upepsilon}} \sums{\sfX\ni\sfx\\|\sfX|>1} (2\, e^3\,\mc N )^{|\sfX|-1} \sup_{\substack{\uppsi\in\Psi(\square_\sfX,\upphi),\\ \upphi\in\mb R^{\square_\sfX} }} e^{-  \la g\bar\phi^2\ra\Vert\psi\Vert_{L^2(\boxdot_\sfX)}^2 -\frac1{20}\Vert\upphi\Vert_{\sfX}^2  } |\sfg(\sfX;\uppsi;\upphi)|\\
\Vert\sfg\Vert_\square &:= \sup_{\substack{\sfx\in\Uplambda \\ |\eps|\leq \upepsilon}}  \sup_{\substack{\uppsi\in\Psi(\square_\sfx,\upphi),\\\upphi\in\mb R^{\square_\sfx} }} e^{-  \la g\bar\phi^2\ra\Vert\psi\Vert_{L^2(\boxdot_\sfx)}^2 -\frac1{20}\Vert\upphi\Vert_{\{\sfx\}}^2   } |\sfg(\{\sfx\};\uppsi;\upphi)|\\
\Vert g\Vert &:= \sup_{\substack{\Delta\in\mc C \\ |\eps|\leq\upepsilon}} \sums{X\supset\Delta \\|X|_\ell>1}   (2\, e) ^{|X|_\ell-1} \sup_{\substack{\uppsi\in\Psi(X,\upphi),\\ \upphi\in\mb R^X }} e^{-   \la g\bar\phi^2\ra\Vert\psi\Vert_{L^2( \square_ X)}^2 -\frac1{20}\Vert\upphi\Vert_{X}^2  } |g(X;\uppsi;\upphi)|\label{normg}\\
\Vert g\Vert_\Delta &:= \sup_{\substack{\Delta\in\mc C \\ |\eps|\leq\upepsilon}} \sup_{\substack{\uppsi\in\Psi(\Delta,\upphi),\\\upphi\in\mb R^{\Delta} }} e^{-   \la g\bar\phi^2\ra\Vert\psi\Vert_{L^2(  \square_\Delta)}^2 -\frac1{20}\Vert\upphi\Vert_{\Delta}^2 } |g(\Delta;\uppsi;\upphi)|\label{normgdelta}
\end{align} 
$\mc N $ and all parameters are still the same as in section \ref{remheur}, and we used
\begin{align}
\Vert\psi\Vert_{L^2(X_0)}^2 &= \int_{\xi\in X_0}|\psi_\xi|^2 \\\label{defXnorm}
\Vert\upphi\Vert_X &= \mc L^{-4}\sum_{x\in X}\upphi_x^2
\end{align}
Also, we will separately need to study the activity $\sfA(\{\sfx\}) $ on a single point, which is just a complex number and the rigorous analog of the single point activity $\sfB(\square_\sfx) $ of section \ref{sbheur}.\\
The choice of our norms is justified by the fact that, as we shall show, each of them can be controlled naturally in terms of the next one, and that the Mayer expansion satisfies the following well known bound, in terms of $\Vert \sfA\Vert $:

\begin{prop}\label{propboundme}
Let $\log Z $ be defined in terms of $\sfA(\sfX) $ as in (\ref{mealg}). Suppose that $\Re \sfA(\{\sfx\})>\frac12 $, and that $ \Vert\sfA\Vert<  \min_\sfx|\sfA(\{\sfx\})| $ in the norm defined above. Then, we have the bound
\begin{align}\label{boundme}
|\log Z| &\leq \sum_{\sfx\in\Uplambda} |\log \sfA(\{\sfx\})|  + |\Uplambda|\cdot \tfrac12\cdot \sum_{n\geq 1}\tfrac1{n^2}\Big( \tfrac{ \Vert\sfA\Vert}{ \min_\sfx|\sfA(\{\sfx\})|}\Big)^n .
\end{align}
\end{prop}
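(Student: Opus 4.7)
My plan is to follow the standard Mayer-series convergence argument based on the Penrose tree-graph inequality and Ruelle--Salmhofer-style bookkeeping; compare \cite{salmhofer1999renormalization}, already cited in the previous proof. I would first split $\log Z = \sum_{\sfx\in\Uplambda}\log \sfA(\{\sfx\}) + R$, where $R$ is the tail of \eqref{mealg} (the $n\geq 1$ terms with $|\sfX_m|>1$), apply the triangle inequality, and bound $|R|$ by the tree-graph inequality for the Ursell function,
\begin{align}
|\rho((\sfX_m)_1^n)|\;\leq\;\sum_{T\text{ tree on }\ul n}\;\prod_{\{m,m'\}\in T}\delta_{\sfX_m\cap\sfX_{m'}\neq\emptyset},\notag
\end{align}
which reduces the Ursell sum to a sum over labelled trees on $\ul n$.

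For each such tree I would pick a root vertex $1\in\ul n$ and base point $\sfx_1\in\sfX_1$, and then sum the polymer configurations in leaf-to-root order. At each non-root vertex $m$, the polymer $\sfX_m$ must intersect its parent $\sfX_{p(m)}$, so the relevant sum is bounded by
\begin{align}
\sum_{\sfx\in\sfX_{p(m)}}\;\sum_{\sfX\ni\sfx,\,|\sfX|>1}\frac{|\sfA(\sfX)|}{\prod_{\sfx'\in\sfX}|\sfA(\{\sfx'\})|}.\notag
\end{align}
The hypothesis $\Re\sfA(\{\sfx\})>\thalf$ forces $\min_\sfx|\sfA(\{\sfx\})|>\thalf$, hence $(2e\cdot\min|\sfA|)^{|\sfX|-1}\geq 1$, which lets me pay the $|\sfX|$ denominator factors using the $(2e)^{|\sfX|-1}$ weight in $\Vert\sfA\Vert$ and bound the inner sum by $\alpha:=\Vert\sfA\Vert/\min_\sfx|\sfA(\{\sfx\})|$. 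Iterating up the tree also produces a combinatorial factor $\prod_m|\sfX_m|^{d_m}$, where $d_m$ is the outgoing degree of $m$ in $T$ (arising from the choice of intersection points between parent and child); these factors I would absorb into the $2^{|\sfX_m|-1}$ portion of the norm weight, using $|\sfX|\leq 2^{|\sfX|-1}$ and $\sum_m d_m = n-1$.

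For the final collection I would invoke Cayley's formula ($n^{n-2}$ labelled trees on $\ul n$) combined with Stirling ($n!\geq n^n e^{-n}\sqrt{2\pi n}$) to get $n^{n-2}/n!\leq e^n/(n^2\sqrt{2\pi n})\leq \tfrac{1}{2n^2}\,e^n$; the residual $e^n$ is exactly the $e^{|\sfX_m|-1}$ piece of the norm weight that Step~2 has not yet consumed. Summing the root point $\sfx_1$ over $\Uplambda$ produces the prefactor $|\Uplambda|$, while the $n-1$ tree edges, together with the initial root-polymer sum $\sfX_1\ni\sfx_1$, contribute $\alpha^n$, yielding \eqref{boundme}. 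The main obstacle, in my view, is the careful matching of three distinct sources of combinatorial growth---the intersection factors $|\sfX_m|^{d_m}$, the Cayley count $n^{n-2}$, and the Stirling ratio $n!/n^n$---against the single weight $(2e)^{|\sfX|-1}$ built into $\Vert\sfA\Vert$. Since this bookkeeping is delicate but entirely classical, I would cite \cite{salmhofer1999renormalization} for the explicit verification rather than reprove it from scratch.
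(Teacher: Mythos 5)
Your overall strategy is the same as the paper's: tree-graph bound on the Ursell function, then a root-to-leaf recursive sum over polymers, absorbing combinatorial growth into the $(2e)^{|\sfX|-1}$ weight in $\Vert\sfA\Vert$ and paying the $\prod|\sfA(\{\sfx\})|^{-1}$ denominator with $\min_\sfx|\sfA(\{\sfx\})|>\tfrac12$. However, there is a genuine gap at the step where you dispose of the intersection factors $\prod_m|\sfX_m|^{d_m}$. You propose to absorb them into the $2^{|\sfX_m|-1}$ part of the weight ``using $|\sfX|\leq 2^{|\sfX|-1}$ and $\sum_m d_m = n-1$''. But $|\sfX_m|\leq 2^{|\sfX_m|-1}$ only consumes a \emph{single} power of $|\sfX_m|$, while $d_m$ can be as large as $n-1$ (a star tree): $|\sfX_m|^{d_m}\leq 2^{d_m(|\sfX_m|-1)}$ is in no way dominated by $2^{|\sfX_m|-1}$ unless $d_m\leq 1$. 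There is nothing left in your weight to pay the remaining $2^{(d_m-1)(|\sfX_m|-1)}$, and the fact that $\sum_m d_m=n-1$ does not rescue it because the $|\sfX_m|$ are unbounded.

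The standard (and the paper's) fix is to bound $|\sfX|^{d-1}\leq (d-1)!\,e^{|\sfX|-1}$, converting the size factor into a \emph{coordination-number factorial} plus an $e^{|\sfX|-1}$ that the weight in $\Vert\sfA\Vert$ is built to absorb. But this produces $\prod_m(d^T_m-1)!$, and you then cannot use Cayley in the crude form $n^{n-2}$, because $n^{n-2}\cdot\sup_T\prod(d^T_m-1)! = n^{n-2}(n-1)!$, whose ratio to $n!$ is $\sim n^{n-3}$ and is \emph{not} $O(e^n/n^2)$. What makes the bookkeeping close is the refined Cayley count with prescribed coordination numbers, $\#\{T:\,\deg_T(m)=d_m\}=(n-2)!\prod_m(d_m-1)!^{-1}$, together with the fact that there are only $\binom{2n-3}{n-1}\leq\tfrac12\,4^{n-1}$ admissible degree sequences: then $\sum_T\prod_m(d^T_m-1)! = (n-2)!\binom{2n-3}{n-1}$, and dividing by $n!$ yields exactly the $\tfrac12\,n^{-2}$ prefactor after the residual powers of $2$ from the weight are accounted for. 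That substitution of the refined Cayley formula for the crude one is precisely the step your proposal is missing, and deferring to the reference does not close it since the proposal as written commits to a scheme (crude Cayley plus $|\sfX|\leq 2^{|\sfX|-1}$) that does not converge.
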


\noindent
We include a proof of this standard result for the convenience of the reader. 
\begin{proof}
It is a well-known property of the Ursell function that
\begin{align}
|\rho((\sfX_m)_1^n)|\leq \sum_{T\subset \mc G((\sfX_m)_1^n)} \prod_{\{m,m'\}\in T}\delta_{\sfX_m\cap\sfX_{m'}\neq\emptyset}.
\end{align}
Here, $\{m,m'\}\in\mc G((\sfX_m)_1^n)\Leftrightarrow \sfX_m\cap\sfX_{m'}\neq\emptyset $, and the sum is over spanning trees of that graph (in particular, $ \rho((\sfX_m)_1^n)=0$ if the graph is not connected). Inserting this bound and $\sum_{\sfX_1}f(\sfX_1)\leq |\Uplambda|\sup_{\sfx_1}\sum_{\sfX_1\ni\sfx_1}|\sfX_1|^{-1}f(\sfX_1) $ into (\ref{mealg}), we get
\begin{align}
|\log Z| &\leq \sum_{\sfx\in\Uplambda} |\log \sfA(\{\sfx\})|  + |\Uplambda|\cdot \sum_{n\geq 1}\tfrac1{n!}\sum_{T\text{ tree on }\ul n} Z(T)
\end{align}
with
\begin{align}
Z(T) &= \sup_{\sfx_1\in\Uplambda}   \sums{\sfX_1,\ldots,\sfX_n\subset\Uplambda \\  \sfX_1\ni \sfx}Z(T;(\sfX_m)_1^n)\label{treerepbme}\\
Z(T;(\sfX_m)_1^n) &= |\sfX_1|^{-1} \cdot \prod_{\{m,m'\}\in T}\delta_{\sfX_m\cap\sfX_{m'}\neq \emptyset} \cdot \prod_{m=1}^n \frac{|\sfA(\sfX_m)|\cdot \delta_{|\sfX_m|>1}}{\prod_{\sfx\in\sfX_m}|\sfA(\{\sfx\})| }
\end{align}
Consider the case $n\geq 2 $. Regard $1$ as the root of $T$. We now describe a procedure that bounds $Z(T) $ by a product of identical quantities, each associated to a tree rooted at one of the neighbors of $1$, and obtained by deleting the edges incident to $1$. Deleting a root $m$ will produce a factor 
\begin{align}\label{meindfactors}
\sfA_m &= (\tfrac 12)^{\delta_{m>1}}\sup_{\sfx\in\Uplambda} \sum_{\sfX\ni\sfx} |\sfX|^{d^T_m-1}  \frac{|\sfA(\sfX)|\cdot \delta_{|\sfX|>1}}{\prod_{\sfx\in\sfX}|\sfA(\{\sfx\})| } 
\end{align}
where $d^T_m $ is the degree of $m$ in $T$. Applying the same procedure to the factors so obtained, until one reaches the leafs of the tree, will thus give the bound
\begin{align}\label{meindbound}
Z(T)  \leq \prod_{m=1}^n\sfA_m ,
\end{align}
from which the Proposition will follow by a combinatorial argument.\\
Denote the neighbors of $1 $ by $m_1,\ldots,m_d $ with $d= d^T_1 $, and let $T_1,\ldots, T_d $ be the trees of the forest obtained from $T$ by deleting all edges incident to $1$. We denote the vertex set of $T_c$ by $N_c$, and $N_1,\ldots,N_d $ forms a partition of $\{2,\ldots,n\} $. Using
\begin{align}
\delta_{\sfX_1\cap\sfX_{m_c}\neq \emptyset} \leq \frac{|\sfX_1\cap\sfX_{m_c}|}{|\sfX_{m_c}|} =  |\sfX_{m_c}|^{-1}\sum_{\sfx\in\sfX_{m_c}} \delta_{\sfx\in\sfX_1}
\end{align}
and
\begin{align}
\sum_{\sfx\in\Uplambda }\sum_{\sfX_{m_c}\ni \sfx}  \delta_{\sfx\in\sfX_1} \, f(\sfX_{m_c})\leq |\sfX_1| \sup_{\sfx\in\Uplambda} \sum_{\sfX_{m_c}\ni \sfx}f(\sfX_{m_c}),
\end{align}
we bound
\begin{align}
Z(T) &\leq \Big[\sup_{\sfx_1\in\Uplambda} \sum_{\sfX_1\ni\sfx_1} |\sfX_1|^{d-1}\frac{|\sfA(\sfX_1)|\cdot \delta_{|\sfX_1|>1}}{\prod_{\sfx\in\sfX_1}|\sfA(\{\sfx\})| } \Big] \times \prod_{c=1}^d Z(T_c),
\end{align}
with $Z(T_c)$ defined just like $Z(T) $, but for the smaller tree $T_c$, rooted at $m_c$. We may thus proceed to remove roots inductively. Since the coordination number of $m_c $ in $T_c $ is one smaller than the coordination number in $T$, we produce in this way factors like (\ref{meindfactors}), but without the factor $\frac12  $ and $|\sfX|^{d^T_m-1} $ replaced by $ |\sfX|^{d^T_m-2}$. Bounding $|\sfX|^{d^T_m-2}\leq \frac 12|\sfX|^{d^T_m-1}$ then gives (\ref{meindbound}).\\
Since $|\sfX|^{d-1}\leq (d-1)!e^{|\sfX|-1} $, we have 
\begin{align}
\sfA_m\leq (\tfrac12)^{\delta_{m>1}}\cdot (d^T_m-1)!\cdot  \frac{1\,}{2 \min_\sfx|\sfA(\{\sfx\})|}\Vert \sfA\Vert,
\end{align}
and thus get
\begin{align}\label{cayleayinput}
Z(T) &\leq (\tfrac12)^{n-1}\Big( \tfrac{ \Vert\sfA\Vert}{2 \min_\sfx|\sfA(\{\sfx\})|}\Big)^n\prod_{m=1}^n (d^T_m-1)!.
\end{align}
For $n\geq 2$, Cayley gives the formula $(n-2)! \prod_{m=1}^n (d^T_m-1)!^{-1} $ for the number of trees on $\ul n$ with coordination numbers $d^T_1+\cdots+d^T_n = 2(n-1) $, and there are $\binom{2(n-1)-1}{n-1} $ choices for such coordination numbers. Thus 
\begin{align}
\sum_{n\geq 2}\tfrac1{n!}\sum_{T\text{ tree on }\ul n}Z(T)&\leq \sum_{n\geq 2}\tfrac1{n(n-1)} \binom{2(n-1)-1}{n-1}(\tfrac12)^{n-1}\Big( \tfrac{ \Vert\sfA\Vert}{2 \min_\sfx|\sfA(\{\sfx\})|}\Big)^n\nnb &\leq  \tfrac12 \sum_{n\geq 2}\tfrac1{n^2}\Big( \tfrac{  \Vert\sfA\Vert}{ \min_\sfx|\sfA(\{\sfx\})|}\Big)^n
\end{align}
The bound on the $n=1 $ term is trivial. This implies the proposition.

\end{proof}

\para{Bound on the integration step.}{bis} We start our analysis of the cluster expansion by bounding the integration step, which is extremely simple.

\begin{prop}
Let $\sfA $ be defined in terms of $ A$ as in (\ref{is}). Then, we have the bound
\begin{align}\label{boundis}
\Vert \sfA\Vert \leq \mc N\cdot \Vert A\Vert,
\end{align}
where $\mc N $ is as in (\ref{defncal}), with $\alpha = \la\frac13\ra + \frac1{10}<\frac12 $.
\end{prop}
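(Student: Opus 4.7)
The plan is to carry out the bound in three elementary steps: (i) move the absolute value inside the integral, (ii) pull out the sup through the Gaussian weight that is built into the definition of $\Vert A\Vert$, and (iii) use the product structure of the Gaussian measure on $\square_\sfX$ to evaluate the leftover exponential moment via formula (\ref{defncal}).

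In more detail, for a fixed $\sfX\subset\Uplambda$ with $|\sfX|>1$, the triangle inequality applied to (\ref{is}) gives
\begin{align*}
|\sfA(\sfX)| \leq \int |A(\sfX;\upphi|_{\square_\sfX})|\,\ud\mu(\upphi|_{\square_\sfX}) \leq \Big(\sup_{\upphi\in\mb R^{\square_\sfX}} e^{-\alpha\Vert\upphi\Vert_\sfX^2}|A(\sfX;\upphi|_{\square_\sfX})|\Big)\int e^{\alpha\Vert\upphi\Vert_\sfX^2}\,\ud\mu(\upphi|_{\square_\sfX}),
\end{align*}
with $\alpha = \la\tfrac13\ra+\tfrac1{10}$. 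The definition (\ref{defsfXnorm}) shows that $\Vert\upphi\Vert_\sfX^2 = \sum_{\sfx\in\sfX}\Vert\upphi\Vert_{\{\sfx\}}^2$, and the Gaussian measure $\ud\mu(\upphi|_{\square_\sfX})$ factorizes as a product over $\sfx\in\sfX$ of independent copies of $\ud\mu(\upphi|_{\square_\sfx})$. Hence the exponential integral factorizes and is equal to $\mc N^{|\sfX|}$ by (\ref{defncal}), which is finite precisely because $\alpha<\tfrac12$ (this is where the numerical inequality $\tfrac13+\tfrac1{10}<\tfrac12$ is used, via the $\la\cdot\ra$ notation).

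Combining these and summing over $\sfX\ni\sfx$ with $|\sfX|>1$, a single factor of $\mc N$ is left over:
\begin{align*}
\sum_{\sfX\ni\sfx,|\sfX|>1}(2e)^{|\sfX|-1}|\sfA(\sfX)| \leq \mc N\sum_{\sfX\ni\sfx,|\sfX|>1}(2e\mc N)^{|\sfX|-1}\sup_{\upphi}e^{-\alpha\Vert\upphi\Vert_\sfX^2}|A(\sfX;\upphi|_{\square_\sfX})| \leq \mc N\cdot\Vert A\Vert,
\end{align*}
and taking the supremum over $\sfx$ and $|\eps|\leq\upepsilon$ yields (\ref{boundis}). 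There is no genuine obstacle here: the norm on $A$ was designed precisely so that the only price of the Gaussian integration is the factor $\mc N$ per block that can then be absorbed into the weight $(2e\mc N)^{|\sfX|-1}$ appearing in $\Vert A\Vert$. The slight asymmetry between $(2e)^{|\sfX|-1}$ on the $\sfA$ side and $(2e\mc N)^{|\sfX|-1}$ on the $A$ side is exactly what makes the bound sharp up to a single overall $\mc N$.
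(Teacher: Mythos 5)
Your proof is correct and follows the same route as the paper: triangle inequality, peeling off the sup with the weight $e^{-\alpha\Vert\upphi\Vert_\sfX^2}$, and evaluating the remaining Gaussian moment as $\mc N^{|\sfX|}$ via the block-factorization of $\Vert\cdot\Vert_\sfX^2$ and of the product measure. The paper compresses this into a single two-line display; you simply spell out the factorization step that makes the $\mc N^{|\sfX|}$ factor appear.
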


\begin{proof}
This follows immediately from
\begin{align}
|\sfA(\sfX)| & \leq \int e^{(\la\frac13\ra + \frac1{10})\Vert\upphi\Vert_\sfX^2}\ud\mu(\upphi|_{\square_\sfX})\cdot \sup_{\upphi\in\mb R^{\square_\sfX}} e^{-(\la\frac13\ra + \frac1{10})\Vert\upphi\Vert_\sfX^2} |A(\sfX;\upphi|_{\square_\sfX})| \nnb
&= \mc N^{|\sfX|} \cdot \sup_{\upphi\in\mb R^{\square_\sfX}} e^{-(\la\frac13\ra + \frac1{10})\Vert\upphi\Vert_\sfX^2} |A(\sfX;\upphi|_{\square_\sfX})|\nonumber 
\end{align}
\end{proof}

\para{Bound on the localization step.}{bloc} We now prove a bound on the localization step. As in (\ref{entrbound}) for the case of two blocks, the factors of $\Gamma$ in (\ref{diffop}) (and connectedness of trees) will be used to control the diameter of polymers, and as in (\ref{1stboundb2block}), the derivatives in (\ref{diffop}) will be used to control the number of blocks in each polymer. The derivatives will again be estimated using the Cauchy formula, but for large numbers of blocks, factorials will be produced, and we shall need the following combinatorial tool to control these factorials:
\begin{align}\label{treebound}
\sum_{T\text{ d-tree on }\ul n} \prod_{m=1}^n d^T_m! &\leq  (n-2)! \; 2^{4(n-1)}\qquad\forall \, n\geq 2
\end{align}
Above, $d^T_m $ is the degree of $m$ in $T$ (i.e. number of adjacent edges of either orientation). The above bound follows from the same argument as the one after (\ref{cayleayinput}), and is well known.

\begin{prop}\label{proplocbound}
Let $A$ be defined in terms of $\sfg $ as in (\ref{loc}). Then, we have the bound
\begin{align}\label{boundloc}
\Vert A\Vert &\leq  \Vert\sfg\Vert +   \sum_{n= 2}^{|\Uplambda|} \tfrac1{n-1}  \big(c_{loc} \cdot   \mf r^{-1} \cdot \Vert\mc A\Vert\cdot \Vert\Gamma\Vert\big)^{n-1}\cdot  \big(\Vert\sfg\Vert_\square+\Vert\sfg\Vert \big)^{n} ,
\end{align}
where $\Vert \mc A\Vert $ is as in (\ref{normacal}), $\Vert\Gamma\Vert $ as in (\ref{normgamma}), and $c_{loc} $ depends only on $L$. 
\end{prop}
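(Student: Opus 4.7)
My plan is to take absolute values in (\ref{loc}) and estimate each factor by a multivariate Cauchy bound in the variable $\phi'$, and then sum over directed trees and polymer partitions. The $n=1$ contribution is just $\sfg(\sfX;(\mc A\Gamma\upphi,0);\upphi)$; its contribution to $\Vert A\Vert$ is bounded directly by $\Vert\sfg\Vert$ after noting that $(2e\mc N)^{|\sfX|-1}\leq(2e^3\mc N)^{|\sfX|-1}$ and that the exponential weight of $\Vert A\Vert$ dominates the one of $\Vert\sfg\Vert$ once $\psi=\mc A\Gamma\upphi$ is substituted and (\ref{schurbound}), (\ref{sharpboundcov}) are applied.

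For $n\geq 2$, at each vertex $m$ of the d-tree $T$ the function $\sfg(\sfX_m;\mc A(\Gamma(\bd s_m\cdot\upphi),\phi');\upphi|_{\square_{\sfX_m}})$ is analytic in $\phi'$ on the polydisc $\{|\phi'_x|<2\Vert\mc A\Vert^{-1}\mf r\,e^{\frac12 c_{\mc A}d(x,\sfX_m)}\}$, corresponding exactly to $(\psi,\mc A\phi')\in\Psi(\square_{\sfX_m},\upphi)$. A multivariate Cauchy estimate bounds the $d^T_{m,\mathrm{out}}$ derivatives coming from $\mf D^\upphi_m$ by $d^T_m!$ times a product over outgoing edges of $(2\Vert\mc A\Vert^{-1}\mf r)^{-1}e^{-\frac12 c_{\mc A}d(x_e,\sfX_m)}$ times the supremum of $|\sfg|$ on the polydisc. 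This supremum is controlled by $\Vert\sfg\Vert$ when $|\sfX_m|>1$ or by $\Vert\sfg\Vert_\square$ when $|\sfX_m|=1$, each equipped with its weight factor and the exponential $e^{\la g\bar\phi^2\ra\Vert\psi\Vert^2_{L^2(\boxdot_{\sfX_m})}+\Vert\upphi\Vert^2_{\sfX_m}/20}$. Summing over $m$ and specializing to $\psi=\mc A\Gamma(\bd s_m\cdot\upphi)$ yields $\la g\bar\phi^2\ra\Vert\mc A\Gamma\upphi\Vert^2_{L^2(\boxdot_\sfX)}+\Vert\upphi\Vert^2_\sfX/20$, which by the Schur inequality (\ref{schurbound}) and (\ref{sharpboundcov}) is bounded by $(\la\frac13\ra+\frac1{20})\Vert\upphi\Vert^2_\sfX$; this is comfortably dominated by the exponent $(\la\frac13\ra+\frac1{10})\Vert\upphi\Vert^2_\sfX$ of $\Vert A\Vert$, leaving a spare factor $e^{-\Vert\upphi\Vert^2_\sfX/20}$ to absorb the polynomial $\upphi$-factors produced by $\mf D^\upphi$ via Gaussian-type sup bounds.

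To finish, I reorganize the spatial sum $\sum_{x,y}|\Gamma_{x,y}||\upphi_y|/\rho_x$ per edge using (\ref{cvds}) together with the definitions (\ref{normacal}), (\ref{normgamma}) of $\Vert\mc A\Vert$ and $\Vert\Gamma\Vert$, extracting a factor bounded by $\mathrm{const}\cdot\mf r^{-1}\Vert\mc A\Vert\Vert\Gamma\Vert$ per edge. The sum over d-trees on $\ul n$, weighted by the Cauchy factorials $\prod_m d^T_m!$, is handled by (\ref{treebound}), producing $(n-2)!\,2^{4(n-1)}$. Finally, the double sum over $\sfX\ni\sfx$ and unordered partitions $\{\sfX_m\}_1^n\in\mc P(\sfX)$ is converted to a sum over ordered tuples with $\sfx\in\sfX_1$ divided by $(n-1)!$; bounding each ordered polymer sum by $\Vert\sfg\Vert_\square+\Vert\sfg\Vert$ produces the $n$-th power, and the combinatorial ratio $(n-2)!/(n-1)!=\tfrac1{n-1}$ yields the prefactor of (\ref{boundloc}). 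The factor $2^{4(n-1)}$ and all remaining $L$-dependent constants (from the spatial sums, from the Gaussian absorption of the $\upphi$-factors, and from the $e^2$ slack per polymer between $(2e^3\mc N)^{|\sfX_m|-1}$ and $(2e\mc N)^{|\sfX|-1}$) are collected into $c_{loc}^{n-1}$. The principal technical obstacle will be the translation of the position-space operator $\sum_{x,y}\Gamma_{x,y}\upphi_y\partial_{\phi'_x}$ into block-spin coordinates so that the norm $\Vert\Gamma\Vert$ of (\ref{normgamma}) can be invoked cleanly, together with the careful bookkeeping needed to ensure all combinatorial and $\mc L$-dependent factors cancel against (\ref{treebound}) and the Gaussian sup bounds used to dissipate $\prod_e|\upphi_{y_e}|$.
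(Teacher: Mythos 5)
Your high-level strategy (Cauchy estimates on the polydiscs, absorbing the $\prod_e|\upphi_{\sfy_e}|$ factors into the Gaussian weight via Schur and (\ref{sharpboundcov}), summing over trees with (\ref{treebound}), and passing to ordered partitions to produce the $\tfrac1{n-1}$ prefactor) matches the shape of the paper's argument, and your $n=1$ step and the accounting of the $e^{-\Vert\upphi\Vert^2_\sfX/20}$ slack are both right. But there is a genuine gap in the middle, which you gesture at in your last sentence but do not resolve.

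The multivariate Cauchy estimate does \emph{not} directly produce the global coordination factorial $d^T_m!$. The derivatives at vertex $m$ are $\partial_{\phi'_{\sfx_e,j_e}}$ for the outgoing edges $e=(m,m')$, and these act at \emph{a priori distinct} sites $(\sfx_e,j_e)$ that are themselves summed over $\Uplambda$. What Cauchy gives is a product of local multiplicity factorials $\prod_{\sfx,j}d_{\sfx,j}!$ (and, on the Gaussian side, the absorption of $\prod_e|\dot\upphi_{\sfy_e,k_e}|$ gives another local factorial $\prod_\sfy\sqrt{d_\sfy!}$). The difficulty is that these local factorials are tangled with the sums over positions $\sfx_e\in\Uplambda,\sfy_e\in\sfX_{m'}$ \emph{and} the sum over partitions $\{\sfX_m\}$: the naive estimate $\sum_\sfx \gamma_\sfx\, d_\sfx! \leq d!\,\sum_\sfx\gamma_\sfx$ is false (the worst case is when many $\sfx_e$ coincide), so the sum does not simply factor into $(\text{const}\cdot\Vert\Gamma\Vert)^{n-1}\cdot\prod_m d^T_m!$ as your outline assumes. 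This is precisely the role of Lemma \ref{pinchandsum} in the paper's proof: it is a root-peeling recursion, supported by the local-to-global factorial conversions (\ref{localfactorials1}) and (\ref{localfactorials2}) (which cost exponential factors that are recorded in $c_{loc}$), that simultaneously performs the position sums, the polymer sums, and the reorganization of the local factorials into $\prod_m d^T_m!$, so that (\ref{treebound}) can then be applied. Without that lemma or a substitute for it, the step from (\ref{locboundcauchy}) to (\ref{treeactiva}) is missing, and the factorials are not actually under control. You correctly flag this as "the principal technical obstacle," but the obstacle is not just bookkeeping: it is the core combinatorial content of the proposition, and your plan as stated does not contain an argument for it.
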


\begin{proof}
We order the sums in (\ref{loc}) as
\begin{align}
A(\sfX;\upphi) &= \sum_{n= 1}^{|\Uplambda|} \frac1{n!} \sum_{T\text{ d-tree on }\ul n}  \prod_{e\in T}\Big[\int_0^1 \ud s_e\Big] A_{T,\bd s}(\sfX;\upphi)
\end{align}
with 
\begin{align}
A_{T,\bd s}(\sfX;\upphi) &= \sum_{(\sfX_m)_1^n\in \mc P_n(\sfX)}\prod_{m=1}^n \Big[\mf D^{\upphi}_{m,T,\{\sfX_m\}} \sfg(\sfX_m;\mc A(\Gamma(\bd s_m\cdot\upphi),\phi');\upphi|_{\square_{\sfX_m}}) \Big]_{\phi' = 0}
\end{align}
Here, we have changed to use ordered partitions of $\sfX\subset\Uplambda$ of length $n$ (there are $n!$ of them for each unordered one), the set of which we denote by $ \mc P_n(\sfX) $. Our goal is to show that  
\begin{align}
\label{treeactiva}
\Vert A_{T,\bd s}\Vert &\leq   n\cdot    \big(\tfrac{c_{loc}}{16} \cdot \mf r^{-1}\cdot \Vert\mc A\Vert\cdot\Vert\Gamma\Vert \big)^{n-1}\cdot \big(\Vert \sfg\Vert_\square+ \Vert\sfg\Vert\big)^n\cdot  \prod_{m=1}^n d^T_m!
\end{align}
for any $n\geq 2$ and $\bd s $ as in (\ref{defintpol}), and that, for $n=1 $ and $T=\emptyset $ (the only tree on one vertex),
\begin{align}\label{1blockbounda}
\Vert A_{\emptyset,-}\Vert &\leq \Vert\sfg\Vert.
\end{align}
We can then invoke (\ref{treebound}) to get the conclusion.\\
We first prove (\ref{1blockbounda}). Since
\begin{align}
A_{\emptyset,-} (\sfX;\upphi) &= \sfg(\sfX;(\mc A\Gamma \upphi|_{\square_\sfX},0);\upphi|_{\square_\sfX}),
\end{align}
(\ref{1blockbounda}) will follow if we can prove 
\begin{align}
\sup_{\upphi\in\mb R^{\square_\sfX}} e^{-(\la\frac13\ra + \frac1{20})\Vert\upphi\Vert_\sfX^2 +   \la g\bar\phi^2\ra \Vert \mc A\Gamma \upphi\Vert_{L^2(\boxdot_\sfX)}^2 }\leq 1,
\end{align}
and this follows immediately from (\ref{schurbound}) and (\ref{sharpboundcov}).\\
To prove (\ref{treeactiva}), we start by inserting (\ref{cvds}) into (\ref{diffop}), and collect the sums over $\sfx,\sfy $ for all edges in $T$. This gives 
\begin{align}
A_{T,\bd s}(\sfX;\upphi) &= \sum_{(\sfX_m)_1^n\in \mc P_n(\sfX)}\sums{\sfx_{e}\in\Uplambda,\, \sfy_e\in \sfX_{m'}\\ \forall \, e=(m,m')\in T}A\big(T;\bd s;\bm\sfx,\bm\sfy;(\sfX_m)_1^n;\upphi\big)
\end{align}
with
\begin{align}
A(\cdots) &= \sums{k_e,j_e\in\ul{\mc L^4}\\\forall\, e\in T }   \prod_{e \in T}\ddot\Gamma_{\sfx_e,j_e;\sfy_e,k_e} \prod_{m=1}^n \bigg[\Big[\prod_{m':e=(m,m')\in T}  \dot\upphi_{\sfy_{e},k_{e}} \partial_{\dot\phi'_{\sfx_{e},j_{e}}}\Big] \sfg(\sfX_m;\mc A(\Gamma(\bd s_m\cdot\upphi),\phi'),\upphi|_{\square_{\sfX_m}}) \bigg]_{\phi' = 0}\nonumber
\end{align}
We will prove below that
\begin{align}\label{locboundcauchy}
(2\,e^3\,\mc N )^{|\sfX|-1}\sup_\upphi e^{-(\la\frac13\ra + \frac1{10})\Vert\upphi\Vert_\sfX^2}\big|A(\cdots)\big|&\leq \big(210\cdot \Vert\mc A\Vert\cdot  \mc N  \cdot \mf r^{-1}\big)^{n-1}\prod_{m=1}^n \Vert\sfg (\sfX_m)\Vert \\  \nonumber &\quad\times  \prod_{e=(m,m')\in T} \Vert \Gamma\Vert_{\sfx_e,\sfy_e} e^{-\frac{c_{\mc A}}2 d(\square_{\sfx_e},\square_{\sfX_m})} \prod_{\sfx\in\Uplambda} \sqrt{\vec d^{\bm \sfy}_\sfx!} \,\ivec d^{\bm \sfx}_\sfx!
\end{align}
where $\Vert \Gamma\Vert_{\sfx,\sfy}$ is as in (\ref{normgamma}) and 
\begin{align}
\Vert\sfg (\sfX)\Vert &= (2\,e^3\,\mc N )^{|\sfX|-1} \sup_{\substack{\uppsi\in\Psi(\square_\sfX,\upphi),\\\upphi\in\mb R^{\square_\sfX}}}  e^{- \la g\bar\phi^2\ra \Vert\psi\Vert_{L^2(\boxdot_\sfX)}^2  - \frac1{20}\Vert\upphi\Vert_\sfX^2 } |\sfg(\sfX;\uppsi;\upphi)|
\end{align}
and we defined the factorials of incoming and outgoing local coordination numbers as 
\begin{align}
\vec d^{\bm\sfy}_\sfx! &= \big|\{e\text{ s.t. }\sfy_e = \sfx  \}  \big|!\\
\ivec d^{\bm \sfx}_\sfx! &= \prod_{m=1}^n \big|\{e=(m,m')\text{ s.t. }\sfx_e=\sfx\}\big|!
\end{align}
We then use the following Lemma, which is also proven below.

\begin{lem}\label{pinchandsum}
Let $n\geq 1$ and $T$ be a d-tree on $\ul n $.  Let $\gamma_{\sfx,\sfy} $ and $\mf d_{\sfx,\sfy} $, $\sfx,\sfy\in\Uplambda$, be nonnegative, and define $\mf d(\sfx,\sfX) =  \max_{\sfy\in\sfX}\mf d_{\sfx,\sfy} $. For numbers $\sfg(\sfX) $, $\sfX\subset \Uplambda $, set 
\begin{align}
a(\sfX) &= \sum_{(\sfX_m)_1^n\in\mc P_n(\sfX)} \sums{\sfx_{e}\in\Uplambda,\, \sfy_e\in \sfX_{m'}\\ \forall \, e=(m,m')\in T} \prod_{m=1}^n \sfg(\sfX_m) \prod_{e=(m,m')\in T}\gamma_{\sfx_e,\sfy_e}   \mf d(\sfx_e,\sfX_m)  \prod_{\sfx\in\Uplambda} \sqrt{\vec d^{\bm \sfy}_\sfx!} \,\ivec d^{\bm \sfx}_\sfx!.
\end{align}
Then,
\begin{align}\label{pinchbound}
\sup_{\sfx\in\Uplambda}\sum_{\sfX\ni\sfx}e^{-2|\sfX|} a(\sfX) &\leq n\cdot \big(2\cdot \Vert\gamma\Vert \cdot  \Vert\mf d\Vert \big)^{n-1} \Vert \sfg\Vert^n\cdot \prod_{m=1}^n d^T_m!
\end{align}
with
\begin{align}
\Vert\gamma\Vert &= \max\Big\{\sup_{\sfx\in\Uplambda}\sum_{\sfy\in\Uplambda}\gamma_{\sfx,\sfy},\sup_{\sfy\in\Uplambda}\sum_{\sfx\in\Uplambda}\gamma_{\sfx,\sfy}\Big\}\\
\Vert\mf d\Vert &= \max\Big\{\sup_{\sfx\in\Uplambda,d\geq 1}\Big[\sum_{\sfy\in\Uplambda}\mf d_{\sfx,\sfy}^d\Big]^{\frac1d},\sup_{\sfy\in\Uplambda,d\geq 1}\Big[\sum_{\sfx\in\Uplambda}\mf d_{\sfx,\sfy}^d\Big]^{\frac1d}\Big\} \\
\Vert\sfg\Vert &= \sup_\sfx\sum_{\sfX\ni\sfx}\sfg(\sfX)  
\end{align}
\erem\end{lem}

\noindent
We use the Lemma with $\sfg(\sfX) \to\Vert \sfg(\sfX)\Vert $, $\gamma_{\sfx,\sfy} = \Vert \Gamma\Vert_{\sfx,\sfy} $ and  $\mf d_{\sfx,\sfy} = e^{-\frac 12c_{\mc A} d(\square_\sfx,\square_\sfy)} $, which satisfy $\Vert\gamma\Vert = \Vert \Gamma\Vert $ and $\Vert\mf d\Vert\leq  $ const, for a constant depending only on $L$. The Lemma directly gives (\ref{treeactiva}), which concludes the proof of Proposition \ref{proplocbound}.

\end{proof}

\begin{proof}[Proof of (\ref{locboundcauchy})]
Let us denote the left hand side of (\ref{locboundcauchy}) by $S $. We have
\begin{align}
S&\leq (2\, e^3\mc N)^{n-1}\times \prod_{e=(m,m')\in T}\Vert \Gamma\Vert_{\sfx_e,\sfy_e}e^{-\frac{c_{\mc A}}2 d(\square_{\sfx_e},\square_{\sfX_m})}\times \Phi\times \prod_{m=1}^n\sfg_m
\end{align}
where
\begin{align}
\Phi &= \sup_\upphi e^{-\frac1{20}\Vert\upphi\Vert_\sfX^2} \prod_{e\in T}\Big[\sum_{k\in\ul{\mc L^4}} \mc L^{-2\delta_{k>1} + 2\delta_{k=1}} |\dot\upphi_{\sfy_e,k}|\Big]\\
\sfg_m &= (2\,e^3\mc N)^{|\sfX_m|-1}\sup_{j_e\in\ul{\mc L^4}} \sup_\upphi e^{-(\la\frac13\ra + \frac1{20})\Vert\upphi\Vert_{\sfX_m}^2} \Bigg|  \prod_{m':e=(m,m')\in T} \nnb &\qqquad  \mc L^{-4\delta_{j_e>1}}e^{\frac{c_{\mc A}}2 d(\square_{\sfx_e},\square_{\sfX_m})} \partial_{\dot\phi'_{\sfx_e,j_e}} \sfg(\sfX_m;\mc A(\Gamma(\bd s_m\cdot\upphi),\phi');\upphi)\;\Bigg|_{\phi' = 0}
\end{align}
To bound $ \Phi$, note that
\begin{align}
\sum_{k\in\ul{\mc L^4}} \mc L^{-2\delta_{k>1} + 2\delta_{k=1}} |\dot\upphi_{\sfy,k}|&\leq \prod_{k\in\ul{\mc L^4}}\big(1+ \mc L^{-2\delta_{k>1} + 2\delta_{k=1}} |\dot\upphi_{\sfy,k}|\big)\nnb &\leq \sqrt{20d_\sfy}\;e^{\frac1{20d_\sfy}\Vert \upphi\Vert_{\{\sfy\}}^2 }
\end{align}
since $1+x\leq \sqrt d e^{-\frac1d x^2} $. Choosing $d_\sfy = |\{e\text{ s.t. }\sfy_e=\sfy\}| $, and using $\sum_\sfy d_\sfy =  n-1  $, we get
\begin{align}
\Phi &\leq \prod_{\sfy\in\Uplambda}\sqrt{20 d_\sfy}^{d_\sfy}\leq  (20\cdot e)^{\frac{n-1}2}\prod_{\sfy\in\Uplambda} \sqrt{\vec d^{\bm\sfy}_\sfy!}.
\end{align}
For $\sfg_m $, we use the Cauchy formula to bound
\begin{align}
\Big|\prod_{\sfx,j}\partial_{\dot\phi'_{\sfx,j}}^{d_{\sfx,j}}  \sfg(\sfX;\mc A(\phi, \phi' );\upphi) \Big|_{\phi'=0}&\leq \prod_{\sfx,j}\big[d_{\sfx,j}!\cdot \big(\mc L^{4\delta_{j>1}}\cdot \mf r^{-1} \cdot \Vert \mc A\Vert \cdot e^{-\half d(\square_\sfx,\square_\sfX) } \big)^{d_{\sfx,j}} \big] \nnb &  \qquad \times \sup_{|\dot\eta_{\sfx,j}|\leq  \mc L^{-4\delta_{j>1}} \mf r\,\Vert\mc A\Vert^{-1} \, e^{\frac{c_{\mc A}}2 d(\square_\sfx,\square_\sfX) }} \big|\sfg(\sfX;\mc A(\phi,\eta);\upphi)\big|
\end{align}
By (\ref{cauchycv}), the field $\eta$ satisfies $|\eta_x| \leq 2 \mf r\,\Vert \mc A\Vert^{-1} e^{\frac{c_{\mc A}}2 d(\square_\sfx,\square_\sfX)} $ for all $x\in\square_\sfx $ and all $\sfx\in\Uplambda $, and thus $\mc A\eta\in\mc K_{\mf r}(\square_\sfX) $. Using these bounds with $d_{\sfx,j} = |\{e=(m,m')\text{ s.t. } (\sfx_e,j_e)=(\sfx,j)\}|$, which satisfies $ \sum_jd_{\sfx,j} = d^{\bm\sfx_m}_\sfx$ with $d^{\bm\sfx_m}_\sfx:= |\{e=(m,m')\text{ s.t. } \sfx=\sfx\}| $, we conclude
\begin{align}
\sfg_m &\leq  (e^2\mc N\,K)^{|\sfX_m|-1}\prod_{\sfx\in\Uplambda}d^{\bm\sfx_m}_\sfx!\cdot (\mf r^{-1}\cdot\Vert\mc A\Vert)^{d^{\bm\sfx_m}_\sfx} \\\nonumber  &\qquad\qquad\qquad \times  \sup_{\upphi}e^{- (\la\frac13\ra + \frac1{20})   \Vert \upphi\Vert_{\sfX_m}^2} \sup_{\psi'\in\mc K_{\mf r}(\square_{\sfX_m})} \big|\sfg(\sfX_m;(\mc A\Gamma(\bd s_m\cdot\upphi) , \psi');\upphi)\big|
\end{align}
The result now follows from $\mc A\Gamma(\bd s_m\cdot\upphi)|_{\boxdot_\sfX}\in \Psi(\square_\sfX,\upphi)  $ (unless $S=\prod_m\sfg_m = 0$, as explained in Remark \ref{propPhi}) and
\begin{align}
\la1\ra \Vert \upphi\Vert_{\sfX}^2\geq   \la 3g\bar\phi^2\ra\Vert\psi\Vert^2_{L^2(\boxdot_\sfX)}\big|_{\psi=\mc A\Gamma(\bd s\cdot\upphi)}  ,
\end{align}
which is proven in the same way as (\ref{stabboundfields}).

\end{proof}

\begin{proof}[Proof of Lemma \ref{pinchandsum}]

Let $S$ denote the left hand side of (\ref{pinchbound}). By symmetry,
\begin{align}\label{pinchboundn}
S= n\sup_{\sfx_1\in\Uplambda} \sums{\sfX_1,\ldots,\sfX_n\subset\Uplambda\\\sfX_m\cap\sfX_{m'}=\emptyset\\ \sfX_1\ni\sfx_1} S\big((\sfX_m)_1^n\big),
\end{align}
where
\begin{align}
S\big((\sfX_m)_1^n\big) &= \sums{\sfx_{e}\in\Uplambda,\, \sfy_e\in \sfX_{m'}\\ \forall \, e=(m,m')\in T} \prod_{m=1}^n e^{-2|\sfX_m|}\sfg(\sfX_m) \prod_{e=(m,m')\in T}\gamma_{\sfx_e,\sfy_e}   \mf d(\sfx_e,\sfX_m)  \prod_{\sfx\in\Uplambda} \sqrt{\vec d^{\bm \sfy}_\sfx!} \,\ivec d^{\bm \sfx}_\sfx!.
\end{align}
Regard $1$ as the root of (the undirected version of) $T$. We now describe a procedure, similar to the one of the proof of Proposition \ref{propboundme}, that bounds $S$ by a product of almost identical quantities, each associated to a tree rooted at one of the neighbors of $1$, and obtained by deleting the edges incident to $1$. Deleting an edge produces a factor of $\Vert\gamma\Vert\cdot\Vert\mf d\Vert $, while deleting a root produces the factor $d^T_m !\cdot 2^{d^T_m-1}\cdot\Vert\sfg\Vert $. The same procedure can then be applied to the factors so obtained, until one reaches the leafs of the tree. The bound (\ref{pinchbound}) is simply the product of all factors produced by the iteration of this procedure.\\
Let $\vec d $ (resp. $\ivec d $) be the number of outgoing (resp. incoming) edges at $1$, so that $\vec d+ \ivec d = d^T_1 $, let $\vec m_c ,\, c = 1,\ldots, \vec d $ (resp. $\ivec m_c,c=1,\ldots,\ivec d $) be the vertices adjacent to $1$ via the outgoing (resp. incoming) edges. Deleting the edges adjacent to $1$ produces a forest of $ d^T_1$ d-trees, denoted $T_{\vec m_c}$ and $T_{\ivec m_c} $, and we regard them as rooted at the vertices $\vec m_c,\ivec m_c  $, and denote the connected component of $\{2,\ldots,n\}$ containing $\vec m_c $ (resp. $\ivec m_c $) as $\vec N_c $ (resp. $\ivec N_c $). They form a partition of $\{2,\ldots,n\} $.\\
With these definitions, we have
\begin{align}
S\big((\sfX_m)_1^n\big) &= e^{-2|\sfX_1|} \sfg(\sfX_1) \sums{\vec\sfx_c\in\Uplambda,\vec\sfy_{c}\in \sfX_{\vec m_c}\\ c = 1,\ldots,\vec d} \prod_{c=1}^{\vec d} \gamma_{\vec\sfx_c,\vec\sfy_c} \mf d(\vec\sfx_c,\sfX_1) S((\sfX_m)_{m\in \vec N_c};\vec\sfy_c) \cdot \prod_{\sfx\in\Uplambda} d_\sfx^{\vec{\bm\sfx}}!  \\\nonumber
&\qquad\qquad\qquad\times \sums{\ivec\sfx_c\in\Uplambda,\ivec\sfy_{c}\in \sfX_{1}\\ c = 1,\ldots,\ivec d} \prod_{c=1}^{\ivec d} \gamma_{\ivec\sfx_c,\ivec\sfy_c} \mf d(\ivec\sfx_c,\sfX_{\ivec m_c})S((\sfX_m)_{m\in \ivec N_c};\ivec \sfx_c)\cdot \prod_{\sfy\in\sfX_1} \sqrt{d^{\ivec{\bm\sfy} }_\sfy! },
\end{align}
where 
\begin{align}\label{lempinchindidentity}
S((\sfX_m)_{m\in \vec N_c};\vec\sfy_c) &=  \sums{\sfx_{e}\in\Uplambda,\, \sfy_e\in \sfX_{m'}\\ \forall \, e=(m,m')\in T_{\vec m_c}} \prod_{m\in \vec N_c} e^{-2|\sfX_m|}\sfg(\sfX_m) \prod_{e=(m,m')\in T_{\vec m_c}}\gamma_{\sfx_e,\sfy_e}   \mf d(\sfx_e,\sfX_m)  \prod_{\sfx\in\Uplambda} \sqrt{\vec d^{\bm \sfy,\vec\sfy_c}_\sfx!} \,\ivec d^{\bm \sfx}_\sfx! \nnb\nonumber 
S((\sfX_m)_{m\in \ivec N_c};\ivec\sfx_c)&= \sums{\sfx_{e}\in\Uplambda,\, \sfy_e\in \sfX_{m'}\\ \forall \, e=(m,m')\in T_{\ivec m_c}} \prod_{m\in \ivec N_c} e^{-2|\sfX_m|}\sfg(\sfX_m) \prod_{e=(m,m')\in T_{\ivec m_c}}\gamma_{\sfx_e,\sfy_e}   \mf d(\sfx_e,\sfX_m)  \prod_{\sfx\in\Uplambda} \sqrt{\vec d^{\bm \sfy}_\sfx!} \,\ivec d^{\bm \sfx,\ivec\sfx_c}_\sfx! 
\end{align}
and the factorials are given by
\begin{align}
d^{\vec{\bm\sfx}}_\sfx! &= \big|\{c\text{ s.t. }\vec\sfx_c = \sfx\}\big|!\\
d^{\ivec{\bm\sfy}}_\sfy! &= \big|\{c\text{ s.t. }\ivec\sfy_c = \sfy\}\big|!\\
\vec d^{\bm \sfy,\vec\sfy_c}_\sfx! &= (\big|\{e\text{ s.t. }\sfy_e = \sfx  \}  \big| + \delta_{\sfx,\vec\sfy_c})! \label{newfactorial1}\\ 
\ivec d^{\bm \sfx,\ivec\sfx_c}_\sfx! &=(\big|\{e=(\ivec m_c,m')\text{ s.t. }\sfx_e=\sfx\}\big|+\delta_{\sfx,\ivec\sfx_c})!  \times \prod_{\substack{m\in \ivec N_c \\ m\neq \ivec m_c  } } \big|\{e=(m,m')\text{ s.t. }\sfx_e=\sfx\}\big|! \,.\label{newfactorial2}
\end{align}
(The other factorials in $S((\sfX_m)_{m\in \vec N_c};\vec\sfy_c),S((\sfX_m)_{m\in \ivec N_c};\ivec\sfx_c) $ have the same definitions as before, but restricted to the respective subtrees). We will show below that, for $d\geq 1$,
\begin{align}\label{localfactorials1}
\sums{\sfx_c,\sfy_c\in\Uplambda \\  c=1,\ldots,d} \prod_{c=1}^d \gamma_{\sfx_c,\sfy_c} \mf d(\sfx_c,\sfX) \prod_{\sfx\in\Uplambda} d^{\bm\sfx}_\sfx! &\leq \tfrac 12 \big(2\cdot \Vert \gamma\Vert\cdot \Vert \mf d\Vert\big)^d \,d!\, e^{|\sfX|}\\\label{localfactorials2}
\sums{\sfx_c\in\Uplambda,\sfy_c\in\sfX \\  c=1,\ldots,d} \prod_{c=1}^d \gamma_{\sfx_c,\sfy_c}   \prod_{\sfy\in\sfX} \sqrt{d^{\bm\sfy}_\sfy!}&\leq \tfrac 12 \big(2 \cdot \Vert \gamma\Vert\big)^d \,d!\, e^{|\sfX|}
\end{align}
We neglect the disjointness condition between sets $\sfX_m $ in different connected components, and get
\begin{align}
S&\leq n\cdot \Big(\sup_\sfx\sum_{\sfX_1\ni\sfx} \sfg(\sfX)\Big) \cdot d^T_1!  \cdot  2^{d^T_1-1} \cdot  \Vert\gamma\Vert^{d^T_1} \cdot \Vert \mf d\Vert ^{d^T_1} \times \prod_{c=1}^{\vec d}  S_{\vec N_c}  \times \prod_{c=1}^{\ivec d}   S_{\ivec N_c} 
\end{align}
with
\begin{align}
S_{\vec N_c}  &=  \sup_{\vec \sfy_c}\sums{\sfX_m\subset\Uplambda,m\in\vec N_c\\ \sfX_m\cap\sfX_{m'} = \emptyset \\ \sfX_{\vec m_c}\ni \vec\sfy_c}S((\sfX_m)_{m\in \vec N_c};\vec\sfy_c)
\end{align}
and an identical definition for $S_{\ivec N_c}  $. $S_{\vec N_c}  $ and $S_{\ivec N_c}  $ are very similar to the right hand side of (\ref{pinchboundn}), without the factor of $n$, and for smaller trees (i.e. the subtrees that arise from deleting the edges incident to $1$), except that the factorial factors associated to the root $\vec m_c$ or $\ivec m_c $ have changed, according to (\ref{newfactorial1}) or (\ref{newfactorial2}). But this change just compensates the fact that coordination number of the root in the subtree is one less than the coordination number in $T$ itself (since the edge $(1,\vec m_c) $ or $(\ivec m_c,1) $ has been deleted). We can thus continue inductively to bound $S_{\vec N_c}  $ and $S_{\ivec N_c}  $ in the same way as above. Each edge deleted generates a factor of $\Vert\gamma\Vert \cdot \Vert \mf d\Vert $, while each vertex (root) $m$ processed generates a factor $d^T_m!\cdot 2^{d^T_m-1}\cdot \Vert\sfg\Vert  $. The latter statement holds also if $m$ is a leaf of $T$, since in this case we have $S_{\{\vec m\}} = S_{\{\ivec m\}} = \Vert\sfg\Vert $.  Using $\sum_m d^T_m = 2(n-1) $, we conclude (\ref{pinchbound})

\end{proof}

\begin{proof}[Proof of (\ref{localfactorials1}) and (\ref{localfactorials2})]
(\ref{localfactorials2}) is a consequence of (\ref{localfactorials1}), and so we only prove the latter. We call its left hand side $S$ and order the sum according to coinciding points $\sfx_c$:
\begin{align}
S &= \sum_{n=1}^d\frac1{n!}\sum_{D_1\dot\cup \cdots\dot\cup D_n = \ul d} \sums{\sfx_m\in\Uplambda\\ m\in\ul n  } \prod_{m=1}^n\big(\gamma_{\sfx_m}\mf d(\sfx_m,\sfX)\big)^{|D_m|}\cdot |D_m|!
\end{align}
where
\begin{align}
\gamma_\sfx &= \sum_{\sfy\in\Uplambda} \gamma_{\sfx,\sfy} .
\end{align}
Therefore
\begin{align}
S&\leq \Vert \gamma\Vert^d d!\sum_{n=1}^d\frac1{n!}\sum_{d_1+\cdots+d_n=d}\prod_{m=1}^n \sums{\sfx_m\in\Uplambda\\ m\in\ul n  }  \mf d(\sfx_m,\sfX)^{d_m} \nnb\nonumber
&\leq\big(\Vert \mf d\Vert\cdot \Vert \gamma\Vert\big)^d d!\sum_{n=1}^d\frac{|\sfX|^n} {n!}\binom{d-1}{n-1} \leq \tfrac 12 \big(2\cdot \Vert \gamma\Vert\cdot \Vert \mf d\Vert\big)^d \,d!\, e^{|\sfX|}
\end{align}
as claimed.

\end{proof}

\para{Bound on the reblocking step.}{brebl} We now analyze the reblocking step. We have the following Proposition:

\begin{prop}\label{boundreblock}

Let $\sfg $ be defined in terms of $g $ as in (\ref{rebloc}). Then, if 
\begin{align}\label{singleblockbounda}
\Vert g\Vert_\Delta^{(\frac{\mc L}\ell)^{4}} \leq 2 \cdot (2\, e^3\mc N)^{- 2(\frac{\mc L}\ell)^{-4}} ,
\end{align}
we have the bounds
\begin{align}\label{boundrebl1p}
\Vert \sfg\Vert_\square &\leq \Vert g\Vert_\Delta^{(\frac{\mc L}\ell)^{4}} \sum_{n= 0}^{\frac12(\frac{\mc L}\ell)^4}\tfrac1{n+1}\Big(8\cdot (\tfrac{\mc L}\ell)^4\cdot   \Vert g\Vert\Big)^n
\end{align}
and
\begin{align}\label{boundrebl}
\Vert\sfg\Vert \leq  \sum_{n= 1}^{\frac12|\Lambda|_\ell}\tfrac1{n}\Big(8\cdot (\tfrac{\mc L}\ell)^4\cdot   \Vert g\Vert\Big)^n.
\end{align}

\end{prop}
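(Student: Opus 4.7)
The proof expands the definition (\ref{rebloc}) of $\sfg(\sfX;\uppsi;\upphi)$ as a sum over partitions of $\square_\sfX$ into $\ell$-polymers, and separates ``single-block'' polymers ($|X|_\ell=1$, estimated by $\Vert g\Vert_\Delta$) from ``multi-block'' polymers ($|X|_\ell\geq 2$, estimated via the weight $(2e)^{|X|_\ell-1}$ in $\Vert g\Vert$). The two displayed bounds differ only in how the $\mc L$-scale connectedness in (\ref{rebloc}) is enforced.

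For the single-block bound (\ref{boundrebl1p}), the connectedness condition is vacuous since $|\sfX|=1$, so (\ref{rebloc}) is just the unrestricted sum of $\prod_m g(X_m)$ over all partitions of $\square_\sfx$. I reorganise each such partition by its ordered tuple $X_1,\dots,X_n$ of multi-block polymers (with a compensating $1/n!$) and the complementary set of singletons. If $d=\sum_m|X_m|_\ell\geq 2n$ denotes the number of $\ell$-blocks in multi-block polymers, the remaining $(\mc L/\ell)^4-d$ singletons contribute the factor $\Vert g\Vert_\Delta^{(\mc L/\ell)^4-d}$ after the stability-weighted supremum, and the upper limit $n\leq(\mc L/\ell)^4/2$ in (\ref{boundrebl1p}) is forced by $d\geq 2n$. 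I rewrite
\[
\Vert g\Vert_\Delta^{(\mc L/\ell)^4-d}\;=\;\Vert g\Vert_\Delta^{(\mc L/\ell)^4}\,\Vert g\Vert_\Delta^{-d}
\]
to extract the prefactor, and use (\ref{singleblockbounda}) (which in particular gives $\Vert g\Vert_\Delta\leq 2^{(\mc L/\ell)^{-4}}$, so that $\Vert g\Vert_\Delta^{-d}$ is $O(1)$ per block of each $X_m$) to absorb the spurious $\Vert g\Vert_\Delta^{-d}$ as a bounded multiplicative factor. Dropping the disjointness constraint and applying
\[
\sum_{X\subset\square_\sfx,\,|X|_\ell\geq 2}\!\!\sup e^{-\mathrm{stab}}|g(X)|\;\leq\;(\mc L/\ell)^4\!\!\sum_{X\ni\Delta,\,|X|_\ell\geq 2}\!\!\sup e^{-\mathrm{stab}}|g(X)|\;\leq\;\frac{(\mc L/\ell)^4}{2e}\Vert g\Vert,
\]
which is immediate from the weight $(2e)^{|X|_\ell-1}\geq 2e$ in (\ref{normg}), then produces a geometric series in $8(\mc L/\ell)^4\Vert g\Vert$ (the $8$ absorbing both the $1/(2e)$ loss and the $\Vert g\Vert_\Delta^{-d}$). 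The $\frac{1}{n+1}$ factor in (\ref{boundrebl1p}) arises from a slightly sharper tracking of the combinatorics, essentially by keeping one distinguished single-block slot unsymmetrised, so that $\frac{1}{(n+1)!}\cdot n!=\frac{1}{n+1}$.

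For the multi-block bound (\ref{boundrebl}), the $\mc L$-scale connectedness condition in (\ref{rebloc}) is non-trivial. I handle it by extracting a spanning tree $T$ of $\mc G(\square(\{X_m\}))$: each of the $|\sfX|-1\geq 1$ edges $\{\sfx,\sfx'\}\in T$ is decorated with a designated multi-block polymer $X_m$ meeting both $\square_\sfx$ and $\square_{\sfx'}$, which is automatically multi-block since it bridges two different $\mc L$-blocks. I bound each tree-edge polymer by $\Vert g\Vert$, with $(\mc L/\ell)^4$ choices of a root $\ell$-block in each spanned $\mc L$-block, apply the single-block argument of the previous paragraph to the remaining $\ell$-polymers inside each $\mc L$-block, and enumerate spanning trees by the Cayley-type bound already used in the proof of Proposition~\ref{propboundme}. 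This yields the geometric bound $\sum_n\frac{1}{n}(8(\mc L/\ell)^4\Vert g\Vert)^n$; the sum starts at $n=1$ because $|\sfX|\geq 2$ forces at least one bridge polymer, and the $\frac{1}{n}$ is the standard output of rooted-tree enumeration.

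The main obstacle is the combinatorial bookkeeping: the ordered partition structure of $\square_\sfX$ into $\ell$-polymers, the connectedness constraint at the $\mc L$-scale, the spanning-tree enumeration, and the stability weights entering the four norms $\Vert g\Vert_\Delta$, $\Vert g\Vert$, $\Vert\sfg\Vert_\square$, $\Vert\sfg\Vert$ must all be tracked consistently, with (\ref{singleblockbounda}) playing the key role of controlling the factor $\Vert g\Vert_\Delta^{-d}$ that appears when $\Vert g\Vert_\Delta^{(\mc L/\ell)^4}$ is extracted as a clean prefactor.
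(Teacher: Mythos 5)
Your overall strategy mirrors the paper's: decompose each partition of $\square_\sfX$ into the ``multi-block'' $\ell$-polymers (at least two $\ell$-blocks, estimated through $\Vert g\Vert$) and the complementary single-block fillers (estimated through $\Vert g\Vert_\Delta$), use the weight $(2e)^{|X|_\ell-1}$ built into (\ref{normg}) to produce the factor $(\mc L/\ell)^4$ via pinning, and for (\ref{boundrebl}) enforce $\mc L$-scale connectedness by extracting a spanning tree and running the same rooted-tree/Cayley enumeration as in Proposition \ref{propboundme}. The paper presents only the proof of (\ref{boundrebl}) and declares (\ref{boundrebl1p}) ``much simpler''; your sketch of the single-block case is a reasonable version of what is intended. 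So the route is essentially the paper's route.

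There is, however, one genuine logical gap. You propose to write $\Vert g\Vert_\Delta^{(\mc L/\ell)^4-d}=\Vert g\Vert_\Delta^{(\mc L/\ell)^4}\Vert g\Vert_\Delta^{-d}$ and then use the hypothesis (\ref{singleblockbounda}) to argue that $\Vert g\Vert_\Delta^{-d}$ is ``$O(1)$ per block''. But (\ref{singleblockbounda}) is an \emph{upper} bound on $\Vert g\Vert_\Delta$. An upper bound $\Vert g\Vert_\Delta\leq 2^{(\mc L/\ell)^{-4}}$ gives a \emph{lower} bound $\Vert g\Vert_\Delta^{-d}\geq 2^{-d(\mc L/\ell)^{-4}}$, which is useless; to control $\Vert g\Vert_\Delta^{-d}$ from above you need a \emph{lower} bound $\Vert g\Vert_\Delta\geq c$ with $c$ close to $1$. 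Nothing in (\ref{singleblockbounda}) rules out $\Vert g\Vert_\Delta$ being arbitrarily small, in which case $\Vert g\Vert_\Delta^{-d}$ blows up. The same difficulty reappears in your sketch of (\ref{boundrebl}) when you ``apply the single-block argument to the remaining $\ell$-polymers inside each $\mc L$-block''. The paper sidesteps this explicit extraction: rather than pulling out $\Vert g\Vert_\Delta^{(\mc L/\ell)^4}$ as a prefactor and dividing by $\Vert g\Vert_\Delta^d$, it packages $\Vert g\Vert_\Delta^{|\square(X)\setminus X|_\ell}$ into a modified activity $g'(X)$ per polymer (via the inclusion $\mc C((X_m)_1^n)\subset\cup_m\mc C(X_m)$) and then uses (\ref{singleblockbounda}), together with the built-in weight $K^{(\mc L/\ell)^{-4}|X|_\ell}$ and the $(2e)^{|X|_\ell-1}$ of $\Vert g\Vert$, to show the per-polymer contribution is $\leq(\mc L/\ell)^4\Vert g\Vert$. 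You should follow that bookkeeping rather than the prefactor extraction, since your stated justification for absorbing $\Vert g\Vert_\Delta^{-d}$ does not follow from the stated hypothesis.

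Two smaller comments. First, the explanation of the $\tfrac1{n+1}$ factor in (\ref{boundrebl1p}) (``keeping one distinguished single-block slot unsymmetrised'') is hand-waving, though the paper does not spell this out either. Second, the reduction from full partition connectedness to connectedness of the multi-block polymers (needed for the spanning-tree extraction) is implicitly assumed and should be argued: a single-block polymer lives in exactly one $\mc L$-block, so any two polymers it links are already directly linked in $\mc G(\square(\cdot))$, and if $|\sfX|>1$ every $\mc L$-block of $\sfX$ must be touched by a multi-block polymer, else it would form an isolated component.
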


\begin{proof}

We prove only (\ref{boundrebl}), as (\ref{boundrebl1p}) is much simpler (see also Proposition \ref{bound1p} below). Any partition appearing (\ref{rebloc}) can be thought to consist of a collection of sets $ (X_m)$ with $|X_m|_\ell>1 $ such that $\mc G(\square((X_m))) $ is connected, and a number of blocks $\Delta\in\mc C $ that fill up $\cup X_m $ to become a set of the form $\square_\sfX $. Therefore, changing to ordered partitions and defining
\begin{align}
\mc P_c  &:= \big\{ (X_m)_1^n,\, X_m\cap X_{m'}=\emptyset, \mc G(\square(X_m)_1^n)\text{ connected} , |X_m|_\ell>1\big\}\nnb\nonumber
\mc C((X_m)_1^n) &:= \{\Delta\in\mc C,\,\Delta\subset \square (\cup_m X_m)\setminus \cup_m X_m\},
\end{align}
where $\square(X) =  $ smallest set $\sfX $ with $X\subset\square_\sfX $, we have
\begin{align} 
\Vert\sfg\Vert &\leq   K^{-1}\sum_{n= 1}^{|\Lambda|_\ell/2}\tfrac1{(n-1)!}\sup_{\sfx\in\Uplambda} \sums{(X_m)_1^n\in\mc P_c \\ X_1\cap \square_\sfx \neq\emptyset }\prod_{m=1}^n g(X_m)\times \prod_{\Delta\in \mc C((X_m)_1^n)}g(\Delta)
\end{align}
with $  K = 2\, e^3\mc N  $ and 
\begin{align}
g(X) &=   K^{(\frac{\mc L}\ell)^{-4}|X|_\ell}\sup_{\substack{\uppsi\in\Psi(X,\upphi),\\\upphi\in\mb R^X }}e^{-   \la g\bar\phi^2\ra\Vert\psi\Vert_{L^2( \square_ X)}^2 - \frac1{20}\Vert\upphi\Vert_X^2  } |g(X;\uppsi;\upphi)|.
\end{align}
We used here that $\prod g(X_m;\uppsi;\upphi)\neq 0 $ implies $\uppsi|_{X_m} \in\Psi(X_m,\upphi) $ for all $m$ (see Remark \ref{propPhi}), and that $\Vert\upphi\Vert_\sfX \geq \Vert \upphi\Vert_{\square_\sfX} $ (see (\ref{l2dots}), (\ref{defsfXnorm}) and (\ref{defXnorm})). Overcounting the filling cubes using the inclusion $\mc C((X_m)_1^n)\subset \cup_m\mc C(X_m) $, and discarding disjointness constraints and the explicit factor $  K^{-1} $, may thus bound 
\begin{align}
\Vert\sfg\Vert & \leq    \sum_{n= 1}^{|\Lambda|_\ell/2}\tfrac1{(n-1)!}\sum_{T\text{ tree on }\ul n} g(T)
\end{align}
with
\begin{align}
g(T) &= \sup_{\sfx\in\Uplambda} \sums{X_1,\ldots,X_m\subset \Lambda   \\ X_1\cap \square_\sfx \neq\emptyset }\prod_{m=1}^n g'(X_m) \prod_{\{m,m'\}\in T}\delta_{\square( X_m)\cap\square( X_{m'})\neq\emptyset}  \\
g'(X) &= \delta_{|X|>1}  K^{(\frac{\mc L}\ell)^{-4}|X|_\ell} \Vert g\Vert_\Delta^{|\square(X)\setminus X|_\ell} \sup_{\substack{\uppsi\in\Psi(X,\upphi),\\ \upphi\in\mb R^X }}e^{-   \la g\bar\phi^2\ra\Vert\psi\Vert_{L^2( \square_ X)}^2 - \frac1{20}\Vert\upphi\Vert_X^2 } |g(X;\uppsi;\upphi)|
\end{align}
This is essentially of the same form as (\ref{treerepbme}) in the proof of the bound on the Mayer expansion, and we proceed as in that proof. We regard $1$ as the root of $T$, denote its neighbors by $m_1,\ldots,m_d $, $ d= d^T_1$, which we regard as the roots of the subtrees $T_1,\ldots, T_d $ obtained by deleting edges incident to $1$. Bounding
\begin{align}
\delta_{\square(X_1)\cap\square(X_{m_c})}\leq |\square(X_{m_c})|^{-1} \sum_{\sfx\in \square(\sfX_{m_c})}\delta_{\sfx\in\square(X_1)} 
\end{align}
we get
\begin{align}
g(T) \leq \Big[\sup_{\sfx\in\Uplambda}\sum_{X_1\cap \square_\sfx\neq\emptyset} |\square(X_1)|^{d} g'(X)  \Big]\times\prod_{c=1}^d g(T_c)
\end{align}
with $g(T_c) $ defined just like $g(T) $, but for the smaller tree $T_c$, rooted at $m_c$, and with an extra factor $|\square(X_{m_c})|^{-1} $. Note that
\begin{align}
\sup_{\sfx\in\Uplambda}\sum_{X_1\cap \square_\sfx\neq\emptyset} |\square(X_1)|^{d} g'(X) & \leq d! \,(\tfrac{\mc L}\ell)^4 \sup_{\Delta\in\mc C}\sum_{X\supset\Delta} e^{|X|_\ell-1}g'(X)\nnb
&\leq d! \,(\tfrac{\mc L}\ell)^4\cdot  \tfrac{ K^{2(\mc L/\ell)^{-4}}\, \Vert g\Vert_\Delta^{(\mc L/\ell)^{4} }}{2}\cdot \Vert g\Vert.
\end{align}
by (\ref{singleblockbounda}). Repeating the process and using a bound like (\ref{treebound}), we get the claim of the Proposition.

\end{proof}

\para{Bound on the activity at a single point.}{bsp} Finally, we derive a bound on the activity $\sfA(\{\sfx\}) $ at a single point. By Proposition \ref{propce}, it is given by
\begin{align}\label{1ptactexpl}
\sfA(\{\sfx\}) &=\int  \Big[\sum_{\{X_m\}_1^n\in\mc P(\square_\sfx)}\prod_{m=1}^n g(X_m;((\mc A\Gamma\upphi|_{\square_\sfx})|_{ \square_{X_m}},0);\upphi|_{X_m} )\Big]\,\ud\mu(\upphi|_{\square_\sfx}).
\end{align}
While the proofs of the previous Propositions in principle provide upper bounds on this quantity, we will need a more precise analysis which is suitable to also study $ \log \sfA(\{\sfx\}) $ and $\sfA(\{\sfx\})^{-1} $, as needed in the Mayer expansion. We will achieve this by showing that $\sfA(\{\sfx\})$ is close to the quantity $\sfB(\square_\sfx) $, as was assumed in the heuristics.

\begin{prop}\label{bound1p}

Let $\sfA(\{\sfx\}) $ be defined in terms of $g$ as in (\ref{1ptactexpl}). Then, we have the bound
\begin{align}\label{ebound1p}
\sup_{|\eps|\leq\upepsilon} |\sfA(\{\sfx\})-\sfB(\square_\sfx)| &\leq \la1\ra \sum_{n=1}^{(\frac{\mc L}\ell)^4}\big((\tfrac{\mc L}\ell)^4 \Vert \tilde g\Vert\big)^n
\end{align}
where 
\begin{align}\label{defgtilde}
\tilde g(X;\uppsi;\upphi) &= g(X;\uppsi;\upphi)-\delta_{|X|_\ell=1}e^{-\int_{\xi\in \square_X}V_\eps(\psi_\xi)}
\end{align}
and
\begin{align}\label{normgtilde}
\Vert \tilde g\Vert &:= \sup_{\substack {\sfx\in\Uplambda \\ |\eps|\leq\upepsilon}}\sup_{\emptyset\neq X\subset\square_\sfx} \Bigg[   \sup_{\substack{\uppsi\in\Psi(X,\upphi) \\ \upphi\in\mb R^{X}}} e^{-  \la g\bar\phi^2\ra\Vert\psi\Vert_{L^2(  \square_X)}^2 - \frac1{20}\Vert\upphi\Vert_X^2 } |\tilde g(X;\uppsi;\upphi)|\Bigg]^{1/|X|_\ell}
\end{align}

\end{prop}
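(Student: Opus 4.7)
My plan is to decompose $g=\tilde g+g_\triangle$ with $g_\triangle(X;\uppsi;\upphi):=\delta_{|X|_\ell=1}\,e^{-\int_{\square_X}V_\eps(\psi_\xi)}$, substitute into each factor of the partition sum in (\ref{1ptactexpl}), and reorganize. Expanding $\prod_m(\tilde g+g_\triangle)(X_m)$ binomially, any partition $\{X_m\}_1^n\in\mc P(\square_\sfx)$ paired with a subset $S\subset\ul n$ of ``$\tilde g$-slots'' is parametrized uniquely by a paved region $Y:=\cup_{m\in S}X_m\subset\square_\sfx$, a paved partition of $Y$ (the parts indexed by $S$), and the forced decomposition of $\square_\sfx\setminus Y$ into the remaining single blocks. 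This gives
\begin{align*}
\sfA(\{\sfx\})=\sum_{Y\subset\square_\sfx}\int\Big[\prod_{\Delta\subset\square_\sfx\setminus Y}e^{-\int_\Delta V_\eps(\psi_\xi)}\Big]\sum_{\{X_m\}\in\mc P(Y)}\prod_m\tilde g\big(X_m;((\mc A\Gamma\upphi|_{\square_\sfx})|_{\square_{X_m}},0);\upphi|_{X_m}\big)\,\ud\mu(\upphi|_{\square_\sfx}),
\end{align*}
and the $Y=\emptyset$ term is exactly $\sfB(\square_\sfx)$ because $\cup_{\Delta\subset\square_\sfx}\square_\Delta=\boxdot_\sfx$.

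For $Y\neq\emptyset$, I would bound the trivial $V_\eps$ factors using the stability estimate (\ref{eqbasicstab}), obtaining $|e^{-\int_\Delta V_\eps}|\leq e^{\la g\bar\phi^2\ra\int_\Delta|\psi|^2+3\ell^4\la\mf r(\upepsilon+12g\bar\phi\mf r^2)\ra}$ per block, and bound the $\tilde g$ factors by the norm (\ref{normgtilde}). After multiplication and extraction of $\Vert\tilde g\Vert^{|Y|_\ell}$, the quadratic exponents amalgamate to $e^{\la g\bar\phi^2\ra\Vert\psi\Vert_{L^2(\boxdot_\sfx)}^2+\frac1{20}\Vert\upphi\Vert_{\square_\sfx}^2}$, while the per-block stability constants combine into $e^{3\mc L^4\la\mf r(\upepsilon+12g\bar\phi\mf r^2)\ra}$, which is $\la1\ra$ \emph{uniformly in $Y$} by the stability conditions (\ref{stabcond}) and (\ref{1blockcond2}). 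Substituting $\psi=\mc A\Gamma\upphi$, applying the Schur bound (\ref{schurbound}) with (\ref{sharpboundcov}) (valid by (\ref{condhsnorm})), and using $\Vert\upphi\Vert_{\square_\sfx}^2\leq\Vert\upphi\Vert_{\{\sfx\}}^2$ from (\ref{l2dots}), the integrand is dominated by $\la1\ra\cdot\Vert\tilde g\Vert^{|Y|_\ell}\cdot e^{(\la1/3\ra+1/20)\Vert\upphi\Vert_{\{\sfx\}}^2}$; since $\la1/3\ra+\tfrac1{20}<\tfrac12$, the Gaussian integral is bounded by $\mc N=\la1\ra$ via (\ref{defncal}).

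The final step is a combinatorial collapse. Paved subsets $Y$ with $|Y|_\ell=k$ correspond to choosing $k$ of the $(\mc L/\ell)^4$ atomic blocks in $\square_\sfx$, and paved partitions of such a $Y$ are set partitions of these $k$ blocks, counted by the Bell number $B_k\leq k!$. Since $\binom{N}{k}k!\leq N^k$, we obtain
\begin{align*}
\big|\sfA(\{\sfx\})-\sfB(\square_\sfx)\big|\leq\la1\ra\sum_{k=1}^{(\mc L/\ell)^4}\binom{(\mc L/\ell)^4}{k}B_k\,\Vert\tilde g\Vert^k\leq\la1\ra\sum_{k=1}^{(\mc L/\ell)^4}\big((\tfrac{\mc L}{\ell})^4\Vert\tilde g\Vert\big)^k,
\end{align*}
which is (\ref{ebound1p}). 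The main obstacle I anticipate is bookkeeping the trivial-block stability prefactor so that it genuinely sits in $\la1\ra$ independently of $Y$: this is the analog at the single-point level of the uniform stability analysis surrounding (\ref{stabboundfields}) for two-block activities, and it relies on the parameter choices consolidated in section \ref{cparaheur}.
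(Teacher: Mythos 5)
Your decomposition $g=\tilde g+g_\triangle$, the binomial expansion parametrized by the paved region $Y$ carrying the $\tilde g$-factors, the identification of the $Y=\emptyset$ term with $\sfB(\square_\sfx)$, and the final Bell-number combinatorics $\binom{N}{k}B_k\leq N^k$ are exactly the paper's strategy. The one place you diverge is that you are more careful than the paper: the paper's reorganized identity reads
\begin{align}
\sum_{\{X_m\}\in\mc P(\square_\sfx)}\prod_m g(X_m) = e^{-\int_{\boxdot_\sfx}V_\eps}+\sum_{\emptyset\neq X\subset\square_\sfx}\sum_{\{X_m\}\in\mc P(X)}\prod_m\tilde g(X_m),
\end{align}
which, as you implicitly noticed, drops the prefactor $\prod_{\Delta\subset\square_\sfx\setminus X}e^{-\int_{\square_\Delta}V_\eps}$ coming from the $g_\triangle$-slots whenever $X\neq\emptyset$, and the subsequent estimate in the paper never re-introduces or bounds these factors. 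You correctly retain them, control them uniformly in $Y$ by the per-block stability estimate (\ref{eqbasicstab}), and fold the resulting $e^{3\mc L^4\la\mf r(\upepsilon+12g\bar\phi\mf r^2)\ra}$ prefactor into $\la1\ra$ via (\ref{stabcond}) and (\ref{1blockcond2}); the amalgamated quadratic weight then naturally becomes $L^2(\boxdot_\sfx)$ rather than just $L^2(\square_X)$, which is handled by (\ref{sharpboundcov}) exactly as in your write-up (the paper instead invokes (\ref{1blockcond1}) with the rough bound, which works for either version of the $L^2$ domain). So your argument is a slightly tightened version of the paper's proof; no genuine gap, and if anything you fill one.
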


\noindent
Note that $\Vert\tilde g\Vert $ also tests the size of $\tilde g $ on a single block, as opposed to $\Vert g\Vert $. Indeed, subtracting $\delta_{|X|_\ell=1}e^{-\int V_\eps} $ from $g$ corresponds to ``first order perturbation theory'', and while $g $ is not small on a single block, $\tilde g $ will be.

\begin{proof}

Since 
\begin{align}
\sum_{\{X_m\}_1^n\in\mc P(X)}\prod_{m=1}^n\delta_{|X_m|_\ell=1} = 1,
\end{align}
for $X$ any paved set, we have
\begin{align}
\sum_{\{X_m\}_1^n\in\mc P(\square_\sfx)}\prod_{m=1}^n g(X_m;\uppsi;\upphi) &= e^{-\int_{\xi\in \boxdot_\sfx}V_\eps(\psi_\xi)}+ \sum_{\emptyset\neq X\subset \square_\sfx }\sum_{\{X_m\}_1^n\in\mc P(X)}\prod_{m=1}^n \tilde g(X_m;\uppsi;\upphi)
\end{align}
and thus
\begin{align}
\sfA(\{\sfx\})-\sfB(\square_\sfx) &=\int \Big[ \sums{\{X_m\}_1^n\in\mc P(X)\\ \emptyset\neq X\subset\square_\sfx}\prod_{m=1}^n \tilde g(X_m;((\mc A\Gamma\upphi|_{\square_\sfx})|_{ \square_{X_m}} ,0);\upphi)\Big]\,\ud\mu(\upphi|_{\square_\sfx}).
\end{align}
We bound
\begin{align}
|\tilde g(X;((\mc A\Gamma\upphi|_{\square_\sfx})|_{ \square_X},0);\upphi )|\leq  \Vert \tilde g\Vert ^{|X|_\ell }\times e^{ \la g\bar\phi^2\ra\Vert \mc A\Gamma\upphi|_{\square_\sfx}\Vert_{L^2(  \square_X)}^2 + \frac1{20}\Vert\upphi\Vert_X^2} .
\end{align}
Similarly to the argument in section (\ref{sbheur}), under the condition (\ref{1blockcond1}), we have
\begin{align}
\int e^{ \la g\bar\phi^2\ra\Vert \mc A\Gamma\upphi|_{\square_\sfx}\Vert_{L^2(  \square_X)}^2+ \frac1{20}\Vert\upphi\Vert_X^2 }  \ud\mu(\upphi|_{\square_\sfx})\leq  \la1\ra
\end{align}
for all $X\subset\square_\sfx $, and so, with $B_n $ the number of partitions of a set of size $n$,
\begin{align}
|\sfA(\{\sfx\})-\sfB(\square_\sfx)| &\leq  \la1\ra\cdot \sum_{\emptyset\neq X\subset\square_\sfx}\Vert \tilde g\Vert^{|X|_\ell} B_{|X|_\ell}\nnb
&\leq  \la1\ra \sum_{n=1}^{(\frac{\mc L}\ell)^4} \frac{((\tfrac{\mc L}\ell)^4 \Vert \tilde g\Vert)^n}{n!}B_n\leq   \la1\ra\sum_{n=1}^{(\frac{\mc L}\ell)^4}\big((\tfrac{\mc L}\ell)^4\Vert \tilde g\Vert\big)^n,
\end{align}
since $B_n\leq n! $. This proves the Proposition.

\end{proof}

\section{Proof of the main theorem}\label{pmt}

In the previous section, we established abstract bounds on the cluster expansion. For them to be useful for our goal, we need to estimate the output of \gk in the norms in which these bounds are formulated. We will see that, if all parameters are chosen appropriately, we can prove estimates that imply Theorem \ref{mainthm} via the machinery of section \ref{bounds}.\\

\para{The result of \gk as an input for the cluster expansion.}{gkce} We now prove bounds on $\Vert g\Vert $ of (\ref{normg}), $\Vert g\Vert_\Delta $ of (\ref{normgdelta}), and $\Vert \tilde g\Vert $ of (\ref{normgtilde}), with $g$ defined in terms of the output of \gk as in (\ref{defg}).\\
We start with the following Lemma, which is the essential estimate we need:

\begin{lem}\label{stabboundact}
Let $g$ be defined as in (\ref{defg}), where $g^D $ is as in Theorem \ref{thmgk}. Define
\begin{align}
\Vert g(X)\Vert  &= \sup_{\substack{\uppsi\in\Psi(X,\upphi),\\\upphi\in\mb R^X }} e^{-  \la g\bar\phi^2\ra\Vert\psi\Vert_{L^2( \square_ X)}^2- \frac1{20}\Vert\upphi\Vert_X^2 } |g(X;\uppsi;\upphi)|.
\end{align}
Then, under the conditions (\ref{sfcondshift}), (\ref{stabcond}), (\ref{1blockcond2}) and (\ref{condconvce}), we have
\begin{align}\label{gbound}
\Vert g(X)\Vert  &\leq e^{3|X|\la \mf r(\upepsilon + 12g\bar\phi\,\mf r^2) \ra  - c_g\mc L (X)}.
\end{align}
Further, let $\tilde g $ be defined as in (\ref{defgtilde}). Then
\begin{align}
\Vert \tilde g(\Delta)\Vert &\leq  e^{  -c(\ell)\,  c_2^2} ,
\end{align}
where $c(\ell)>0 $ depends only on $\ell$. Here, $c_g,c_2$ are the constants from Theorem \ref{thmgk}
\end{lem}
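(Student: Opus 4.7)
The plan is to split $X$ into the large--field region $R := R(\upphi|_X)$ (contained in $X$ by the $\delta(X \supset \overline{R(\upphi|_X)})$ indicator in (\ref{defg})) and its complement, and treat the two regions separately. On $\square_{X \setminus R}$ the only factor to control is $e^{-\int V_\eps(\psi + \psi')}$; since $\psi$ is real and $|\psi'_\xi| \leq 2\mf r$ (the standard consequence of $\psi' \in \mc K_{\mf r}(X)$ together with the definition (\ref{normacal}) of $\Vert\mc A\Vert$), the stability bound (\ref{eqbasicstab}) applies pointwise and, under the condition (\ref{stabcond}), yields a contribution to $|g(X;\uppsi;\upphi)|$ of at most
\begin{align*}
e^{\la g\bar\phi^2\ra \Vert\psi\Vert^2_{L^2(\square_{X \setminus R})} + 3|X \setminus R|\la\mf r(\upepsilon + 12g\bar\phi\,\mf r^2)\ra},
\end{align*}
which already matches the target shape of (\ref{gbound}) on this region.

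On $\square_R$, I would combine the explicit prefactor in (\ref{defg}) with the \gk bound (\ref{lfbound}) applied to $g^R(X; (\psi+\psi') + \bar\phi)$. A crucial simplification is that $\hbar - \bar G^{-1}\bar\phi - g\bar\phi^3 = 0$ on constant fields: $\bar\phi$ satisfies $\nu\bar\phi + g\bar\phi^3 = \hbar$ by (\ref{defbarphi}), while $\bar G^{-1}\bar\phi = \bar\mu(0)\bar\phi = \nu\bar\phi$ since $\bar\mu(0) = \nu$ exactly by evaluating (\ref{smallpasympt}) at $p = 0$. The linear coefficient in the prefactor therefore collapses to $\eps$. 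The \gk bound produces a decay $-g^{\frac12}\int_{\square_R}|\psi + \psi' + \bar\phi|^2$ which, under (\ref{sfcondshift}), dominates the $\tfrac32 g\bar\phi^2(\psi+\psi')^2$ coming from the prefactor (since $g^{\frac12} \gg g\bar\phi^2$); completing the square against this $g^{\frac12}$ reserve then absorbs the linear terms at a cost bounded by $3|R|\la\mf r(\upepsilon + 12g\bar\phi\mf r^2)\ra$, while the residual quadratic part is at most $\la g\bar\phi^2\ra\Vert\psi\Vert_{L^2(\square_R)}^2$ and combines with the small--field contribution above to produce the full $L^2(\square_X)$ exponent. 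What remains is an $R$--localised error
\begin{align*}
\mathrm{Err}(R) \leq c_1|R|_\ell + \tfrac g4 \bar\phi^4|R| + \const\cdot g\,\mf r^4|R|,
\end{align*}
together with the desired $-c_g\mc L(X)$ factor.

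The main obstacle will be to absorb $\mathrm{Err}(R)$ into the weight $\tfrac1{20}\Vert\upphi\Vert_X^2$ of the norm. By $R = D(2\Gamma\upphi|_X + \bar\phi)$ and (\ref{defdphi}), on every $\Delta \subset R$ there is a point $\xi$ close to $\square_\Delta$ at which $|(\mc A\Gamma\upphi|_X)_\xi| \geq \tfrac14 c_2 g^{-\frac14}$ (using (\ref{sfcondshift}) to discard the $\bar\phi$ contribution). To convert this pointwise lower bound into a lower bound on $\Vert\upphi\Vert_{\sfX'}^2$ for a neighborhood $\sfX'$ of $\Delta$ of $L$--dependent size, I would invert the Schur estimate (\ref{schurbound})--(\ref{sharpboundcov}): since $(\mc A\Gamma\upphi)_\xi^2 \leq \Vert\mc A\Gamma\Vert^2_{\text{HS}(\Delta,\sfX')}\Vert\upphi\Vert_{\sfX'}^2$ and $\Vert\mc A\Gamma\Vert^2_{\text{HS}} \leq \la(3g\bar\phi^2)^{-1}\ra$, one obtains $\Vert\upphi\Vert_{\sfX'}^2 \geq c\cdot c_2^2 g^{\frac12}\bar\phi^2$ for some constant depending only on $L$ and $\ell$. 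Summing over $\Delta \subset R$ yields $\Vert\upphi\Vert_R^2 \geq c\cdot c_2^2 g^{\frac12}\bar\phi^2 |R|_\ell$, which dominates $\mathrm{Err}(R)$ once $c_2 \geq c_2(c_1,\ell)$ is chosen large enough (using $g\bar\phi^4 \leq 1$ from (\ref{sfcondshift})), exactly as permitted by Theorem \ref{thmgk}. This proves (\ref{gbound}).

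For $\Vert\tilde g(\Delta)\Vert$, I would split according to whether $R(\upphi|_\Delta)$ is empty or equals $\Delta$. In the small--field case $R = \emptyset$, $\tilde g(\Delta;\uppsi;\upphi) = (g^\emptyset(\Delta;(\psi+\psi') + \bar\phi) - 1)\,e^{-\int V_\eps(\psi + \psi')}$, and (\ref{blocksfbound}) gives $|g^\emptyset(\Delta;\cdot) - 1| \leq g^{\frac13}$; by the parameter ordering of Remark \ref{remsmallnot}, $g_0$ may be chosen small enough depending on $c_2$ and $\ell$ so that $g^{\frac13} \leq e^{-c(\ell)c_2^2}$. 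In the large--field case $R = \Delta$, the analysis above produces, beyond what is needed to absorb $\mathrm{Err}(R)$, a surplus factor $e^{-c\cdot c_2^2}$ coming from $\Vert\upphi\Vert_\Delta^2 \geq c\,c_2^2 g^{\frac12}\bar\phi^2$ together with the lower bound on $g^{\frac12}\bar\phi^2$ valid at the magnetization scale (cf. (\ref{regr})--(\ref{regl})). Both cases yield the stated bound.
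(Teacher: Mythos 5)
Your outline matches the paper's structure in several respects: the split into $R=R(\upphi|_X)$ and $X\setminus R$, the use of the stability bound (\ref{eqbasicstab}) compensated exactly by the weight $e^{\la g\bar\phi^2\ra\Vert\psi\Vert^2_{L^2}}$, the application of (\ref{lfbound}) on $\square_R$, and the reduction to absorbing an $R$-localized error into $\tfrac1{20}\Vert\upphi\Vert_X^2$. But the step where you actually perform that absorption is where the argument breaks down, and it is the crux of the lemma.

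You claim that because $R=D(2\Gamma\upphi|_X+\bar\phi)$, each $\Delta\subset R$ forces $|(\mc A\Gamma\upphi|_X)_\xi|\geq \tfrac14 c_2 g^{-1/4}$ at some $\xi$ near $\square_\Delta$, and that ``inverting'' (\ref{schurbound}) then yields $\Vert\upphi\Vert_{\sfX'}^2\geq c\,c_2^2 g^{1/2}\bar\phi^2$ for a \emph{local} neighborhood $\sfX'$ of $\Delta$ of $L$-dependent size. This inference is not valid for two reasons. First, (\ref{schurbound}) is an $L^2$ bound $\int_{\boxdot_\sfX}(\mc A\Gamma\upphi)_\xi^2\leq\Vert\mc A\Gamma\Vert^2_{\text{HS}(\sfX,\sfX')}\Vert\upphi\Vert^2_{\sfX'}$ that is valid only for $\upphi$ \emph{supported in} $\square_{\sfX'}$; you apply it with $\upphi=\upphi|_X$ unrestricted and with the weight on the right localized to a small $\sfX'$, which the inequality does not permit. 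Second, and more fundamentally, the kernel $\Gamma$ decays at the mass scale $m^{-1}=(3g\bar\phi^2)^{-1/2}$, which under (\ref{condhsnorm}) is much larger than $\mc L$; a large value of $(\mc A\Gamma\upphi)_\xi$ near $\Delta$ does not localize to $\upphi$ being large near $\Delta$, so there is no local lower bound on $\Vert\upphi\Vert^2_{\sfX'}$ to extract. (This nonlocality is precisely why the cluster expansion needs the derivative couplings in (\ref{diffop}) to organize long-range contributions.) The paper instead imports the bound $|R|_\ell\leq \const c_2^{-2}g^{1/2}\int_{\square_R}|\mc A\phi_\xi|^2$ from \gk (their (10.48)), valid for any $\phi$ with $D(2\phi+\bar\phi)=R$, and then applies the Schur bound in the \emph{forward} direction to estimate $\int_{\square_R}|[\mc A\Gamma(\upphi|_X)]_\xi|^2\leq \mc L^4\Vert\mc A\Gamma\Vert^2_{\text{HS}}\Vert\upphi\Vert_X^2$ with the full-$X$ norm, arriving at $|R|_\ell\ll c_2^{-2}\mc L^8\Vert\upphi\Vert_X^2$. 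Without this imported GK estimate your argument has a genuine gap, and the flawed local inversion cannot replace it.

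A minor additional slip: from $\nu\bar\phi+g\bar\phi^3=\hbar$ and $\bar\mu(0)=\nu$ (correct observations) the linear coefficient in the prefactor of $g'_R$ is $\hbar+\eps-\bar G^{-1}\bar\phi=\eps+g\bar\phi^3$, not $\eps$. This does not damage your argument since $g\bar\phi^3\ll g^{1/4}$ under (\ref{sfcondshift}), but the stated cancellation is wrong. The $\tilde g(\Delta)$ bound in the large-field case reuses the flawed inversion; the paper simply observes $g(\Delta;\cdot;\cdot)=0$ when $R\neq\emptyset$ (since $\overline R\subset\Delta$ is impossible for a single cube) and bounds the remaining $e^{-\int V_\eps}$ by the same (correctly derived) $|R|_\ell$ estimate.
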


\begin{proof}

Note that (\ref{stabcond}), (\ref{sfcondshift}) and the definition (\ref{defbarphi}) of $\bar\phi$ imply, in particular,
\begin{align}
\mf r&\ll \sqrt{c_1} \, g^{-\frac14} \label{rcondsfset}\\
\upepsilon&\ll \sqrt{c_1}g^{\frac14} \label{condeps2}\\
\hbar &\ll \sqrt{c_1}g^{\frac14} \label{condhbar}
\end{align}
and
\begin{align}
\mc L^{-2}g^{-\frac14}\ll\bar\phi\ll g^{-\frac14}\label{condbarphi}.
\end{align}
Remember that $\ell $ will be chosen large enough, depending on $\frac{\mc L}\ell $, and so we may regard $\mc L $ as a fixed constant in this proof.  We have
\begin{align}
\Vert g(X)\Vert &\leq \sup_{R\text{ s.t. }\overline R\subset X}\sup_{\substack{\upphi\in \mb R^\Lambda,\, D(2\Gamma(\upphi|_X) + \bar\phi)=R,\\ D(2\Gamma(\upphi|_{X^c}))\cap X=\emptyset \\\psi'\in\mc K_{\mf r}(\square_X) }}  e^{ - \int_{\xi\in \square_{X }}\la g\bar\phi^2\ra \cdot\psi_\xi^2 - \frac1{20}\Vert\upphi\Vert_X^2 } \\\nonumber &\qqquad\times e^{-\int_{\xi\in \square_{X\setminus R}} V_\eps(\psi_\xi+\psi'_\xi)  } |g'_R(X;\psi+\psi')|\Big|_{\psi = \mc A\Gamma \upphi},
\end{align}
where $V_\eps(x) $ is as in (\ref{defveps}) and
\begin{align}
g_R'(X;\psi) &= e^{\frac g4\bar\phi^4 |R| + \int_{\xi\in\square_{R}}(\hbar + \eps - \bar G^{-1}\bar\phi)\psi_\xi + \frac 32 g\bar\phi^2\psi_\xi^2 } g^R(X,\psi+\bar\phi)
\end{align}
Note that we have chosen the weight $ e^{-  \la g\bar\phi^2\ra\Vert\psi\Vert_{L^2( \square_ X)}^2 } $ in the norms so as to exactly compensate the basic stability bound (\ref{eqbasicstab}) on $V_\eps $:
\begin{align}
e^{-\int_{\xi\in \square_{X\setminus R}} V_\eps(\psi_\xi+\psi'_\xi) - \int_{\xi\in \square_{X }}\la g\bar\phi^2\ra \cdot\psi_\xi^2 }\leq e^{3|X\setminus R|\la \mf r(\upepsilon + 12g\bar\phi\,\mf r^2) \ra- \int_{\xi\in \square_{R }}\la g\bar\phi^2\ra \cdot\psi_\xi^2}
\end{align}
for any $\psi,\psi' $ of the above form. For such fields, by (\ref{lfbound}) and (\ref{sfcondshift}), (\ref{rcondsfset}), (\ref{condeps2}), (\ref{condhbar}), we also have
\begin{align}
| g'_R(X,\psi+\psi')|&\leq e^{  \la c_1\ra  |R|_\ell + \int_{\xi\in\square_{R}}  C g^{\frac14} |\psi_\xi+\psi'_\xi|  -\la g^\half\ra |\psi_\xi+\psi'_\xi|^2 - c_g\mc L (X) }\nnb &\leq e^{  \la c_1\ra   |R|_\ell -\frac 12  g^\half  \int_{\xi\in\square_{R}}  |\psi_\xi |^2 - c_g\mc L (X) }
\end{align}
with $ C\ll \sqrt{c_1} $. To compensate for the factor $\la c_1\ra |R|_\ell $, note that, by a simple modification of (10.48) in \gk, under the assumption (\ref{sfcondshift}), we have
\begin{align}
|R|_\ell &\leq \const c_2^{-2}\, g^\half \, \int_{\square_R} |\mc A \phi_\xi|^2
\end{align} 
for any $\phi $ with $\mc D(2\phi+ \bar\phi)=R$, with a constant that depends only on $\ell$. Therefore, if $ R=  D(2\Gamma(\upphi|_X)+\bar\phi) $,
\begin{align}
|R|_\ell &\leq \const c_2^{-2}\, g^\half \int_{\square_R} |[\mc A\Gamma(\upphi|_X)]_\xi |^2 \leq \const c_2^{-2}\, g^\half \,\mc L^4\, \Vert \mc A\Gamma\Vert_{\text{HS}}^2 \,\Vert\upphi\Vert_X^2
\end{align}
where
\begin{align}
\Vert \mc A\Gamma\Vert_{\text{HS}}^2 &= \sup_{x\in X}\sums{y\in X \\ z\in R} \left| \int_{\xi\in\square_z} (\mc A\Gamma)_{\xi,x}(\mc A\Gamma)_{\xi,y}  \right| 
\end{align}
We have the following, less subtle version of (\ref{sharpboundcov}) (see the Appendix): Under the assumption (\ref{condhsnorm}),
\begin{align}\label{roughboundcov2}
\Vert \mc A\Gamma\Vert_{\text{HS}}^2 &\leq \const (g\bar\phi^2)^{-1},
\end{align}
with a constant that depends only on $L$. Therefore, by (\ref{condbarphi}),
\begin{align}
|R|_\ell &\ll \const c_2^{-2}   \mc L^8 \Vert\upphi\Vert_X^2\ll \tfrac1{20 \la c_1\ra } \Vert\upphi\Vert_X^2
\end{align}
Since we assumed $c_2\geq c_2(c_1) $, the first assertion of the Lemma follows immediately.\\
We now turn to the bound on 
\begin{align}
\tilde g(\Delta;\uppsi;\upphi) &= g(\Delta;\uppsi;\upphi) - e^{-\int_{\xi\in\square_\Delta}V_\eps(\psi_\xi) }.
\end{align}
In the supremum over $\upphi\in\mb R^\Lambda $, we need to distinguish the cases $R(\upphi|_\Delta)=\emptyset $ and $R(\upphi|_\Delta)\neq \emptyset $. In the latter case, we have $g(\Delta;\uppsi;\upphi) = 0 $ (since $\overline {R(\upphi)}\subset \Delta$ cannot hold), and so we need to show that 
\begin{align}\label{1pboundlf}
\sup_{\substack{\upphi\in \mb R^\Lambda,\, D(2\Gamma(\upphi|_\Delta) + \bar\phi)\supset\Delta,\\ D(2\Gamma(\upphi|_{\Delta^c}))\cap \Delta=\emptyset \\\psi'\in\mc K_{\mf r}(\square_ \Delta) }}  e^{-\int_{\xi\in \square_{\Delta}} V_\eps(\psi_\xi+\psi'_\xi) + \la g\bar\phi^2\ra\psi_\xi^2 - \frac1{20}\Vert\upphi\Vert_\Delta^2  }\Big|_{\psi=\mc A\Gamma\upphi} \leq e^{-c(\ell) \, c_2^2}
\end{align}
This is follows in the same was as (\ref{gbound}). In the former case, if $R(\upphi|_\Delta) = \emptyset $, and $ \uppsi\in\Psi(\Delta,\upphi)$, we have 
\begin{align}
\tilde g(\Delta;\uppsi;\upphi) &= e^{-\int_{\xi\in\square_\Delta}V_\eps(\psi_\xi)} \big[g^\emptyset(\Delta;\psi+\psi'+\bar\phi)-1\big].
\end{align}
Therefore, using (\ref{blocksfbound}) and the basic stability bound (\ref{eqbasicstab}) as before,
\begin{align}
\sup_{\substack{\uppsi\in \Psi(\Delta,\upphi) \\\upphi\in \mb R^\Delta,R(\upphi)=\emptyset }} e^{-\la g\bar\phi^2\ra\int_{\xi\in\square_\Delta}\psi_\xi^2} |\tilde g(\Delta;\uppsi;\upphi)| \leq g^{\frac13}\, e^{3\ell^4\la \mf r(\upepsilon + 12g\bar\phi\,\mf r^2) \ra}\leq e^{-c(\ell)c_2^2}, 
\end{align}
and the second claim of the Lemma follows.

\end{proof}

\noindent
Note that the norms $\Vert g\Vert $, $\Vert g\Vert_\Delta $ and $\Vert \tilde g\Vert $ can be expressed in terms of $\Vert g(X)\Vert  $ of the Lemma as follows
\begin{align}
\Vert g\Vert &= \sup_{ \Delta\in\mc C  } \sums{X\supset\Delta\\ |X|_\ell>1} (2\,e)^{|X|_\ell-1} \Vert g(X)\Vert\\
\Vert g\Vert_\Delta  &= \sup_{ \Delta\in\mc C  } \Vert g(\Delta)\Vert\\
\Vert \tilde g\Vert &\leq  \max\Big\{ \max_{\Delta\in\mc C} \Vert \tilde g(\Delta)\Vert ,\; \max_{\substack{X\subset\square_\sfx \\ |X|_\ell>1}} \Vert g(X)\Vert^{1/|X|_\ell}  \Big\}
\end{align}
We now use the Lemma to derive bounds on the norms of $g$ and $\tilde g$.

\begin{prop}\label{boundsinput}

Assume (\ref{stabcond}), (\ref{sfcondshift}) and (\ref{1blockcond2}). Then
\begin{align}
\Vert g\Vert_\Delta^{(\frac{\mc L}\ell)^4} &\leq \la1 \ra\label{1blockbound2}\\
\Vert \tilde g\Vert &\leq e^{-\frac {c_g}3 \ell}\label{boundtildeginput}\\
\Vert g\Vert &\leq \sum_{n=2}^{|\Lambda|_\ell} \tfrac{ C(\ell)^{n-1} }{n-1}\label{boundinputg}
\end{align}
where
\begin{align}
C(\ell) &= 30\sum_{0\neq k\in\mb Z^4}e^{-c_g\,\ell\, |k|}
\end{align}

\end{prop}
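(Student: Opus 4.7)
The proof is a direct application of Lemma \ref{stabboundact} combined with standard combinatorial tree bounds, together with the smallness conditions already imposed on the parameters. I would proceed in three parts, one for each estimate.

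For the bound on $\Vert g\Vert_\Delta^{(\mc L/\ell)^4}$, I would apply Lemma \ref{stabboundact} with $X = \Delta$ a single block, so $|\Delta| = \ell^4$ and $\mc L(\Delta) = 0$. This gives $\Vert g(\Delta)\Vert \leq \exp\bigl(3\ell^4\la\mf r(\upepsilon + 12g\bar\phi\mf r^2)\ra\bigr)$. Raising to the power $(\mc L/\ell)^4$ turns the exponent into $3\mc L^4\la\mf r(\upepsilon + 12g\bar\phi\mf r^2)\ra$, which is $\ll 1$ by (\ref{1blockcond2}), yielding $\la 1\ra$ as claimed.

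For $\Vert\tilde g\Vert$, I would handle the two cases in the max separately. The single-block case is immediate: Lemma \ref{stabboundact} gives $\Vert\tilde g(\Delta)\Vert \leq e^{-c(\ell)c_2^2}$, which is $\leq e^{-c_g\ell/3}$ since $c_2$ is chosen large depending on $c_1$ (and hence on $\ell$) by the parameter hierarchy of Remark \ref{remsmallnot}. For $|X|_\ell \geq 2$, the key geometric observation is that the centers of blocks in $X$ lie on a lattice of spacing $\ell$, so $\mc L(X) \geq \ell(|X|_\ell - 1) \geq (\ell/2)|X|_\ell$. Since $|X| = \ell^4\,|X|_\ell$, Lemma \ref{stabboundact} gives $\Vert g(X)\Vert^{1/|X|_\ell} \leq \exp\bigl(3\ell^4\la\mf r(\cdots)\ra - c_g\ell/2\bigr)$, and by (\ref{1blockcond2}) the first term is $\ll 1$, leaving $\leq e^{-c_g\ell/3}$ for $\ell$ large enough.

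For the bound on $\Vert g\Vert$, I would write
\[
\Vert g\Vert \leq \sup_{\Delta\in\mc C}\sum_{n=2}^{|\Lambda|_\ell}(2e)^{n-1}\sum_{X\supset\Delta,\,|X|_\ell = n} e^{3n\ell^4\la\mf r(\cdots)\ra}\,e^{-c_g\mc L(X)},
\]
absorb the first exponential (which is $\la 1\ra^n$) into an absolute constant per block, and estimate the sum over $X$ by a Cayley tree bound: using $e^{-c_g\mc L(X)} \leq \sum_{T}\prod_{\{i,j\}\in T}e^{-c_g|c_i - c_j|}$, summing over spanning trees $T$ on the labeled centers $c_1,\ldots,c_n$ of the blocks in $X$, and processing leaves inward gives
\[
\sum_{X\supset\Delta,\,|X|_\ell = n} e^{-c_g\mc L(X)} \leq \frac{n^{n-2}}{(n-1)!}\,a^{n-1},\qquad a := \sum_{0\neq k\in\mb Z^4}e^{-c_g\ell|k|}.
\]
By Stirling, $n^{n-2}/(n-1)! \leq e^{n-1}/(n-1)$ up to a harmless factor. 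Collecting the $(2e)^{n-1}$, the $e^{n-1}$, the $\la 1\ra^n$, and $a^{n-1}$ into a single geometric series with ratio $C(\ell) = 30\,a$ yields precisely the claimed bound $\sum_n C(\ell)^{n-1}/(n-1)$.

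The only technical point requiring care is the combinatorial bookkeeping in the Cayley-style tree bound, specifically making sure the factor $30$ absorbs every universal constant (the $2e$ from the norm, $e$ from Stirling, and $e$ from the BKAR-style iterated leaf sum). The estimates from Lemma \ref{stabboundact} do the substantive work; everything else is standard polymer-gas arithmetic.
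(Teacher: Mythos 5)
Your proposal is substantively correct and follows essentially the same route as the paper: Lemma \ref{stabboundact} provides the block-wise bounds, and the polymer sum is controlled by a Cayley-style tree count. Two small points of bookkeeping are worth noting. First, in the $\Vert\tilde g\Vert$ bound the paper writes the $|X|_\ell\geq 2$ case as $\la e^{-c_g\ell(|X|_\ell-1)/|X|_\ell}\ra \leq e^{-c_g\ell/3}$, with the $\la\cdot\ra$ absorbing the $e^{3\ell^4\la\mf r(\cdots)\ra}$ factor; your version, which isolates that factor explicitly and discards it via (\ref{1blockcond2}), is the same estimate phrased differently. Second, your accounting of the absolute constant $30$ lists three factors of $e$ (``$2e$ from the norm, $e$ from Stirling, $e$ from the BKAR-style iterated leaf sum''), giving $2e^3 \approx 40 > 30$, which would not fit. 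In fact the leaf-removal step contributes no extra factor of $e$ (it just produces $a^{n-1}$), and it is not a BKAR interpolation but the standard tree-graph pinning argument; the only exponential slop is $\la 2e\ra^{n-1}$ from the norm and the $n^{n-2}/(n-1)!\leq e^{n-1}/(n-1)$ bound, giving roughly $(2e^2)^{n-1}\approx 15^{n-1}$, which is comfortably below $30^{n-1}$. So the claim holds; your accounting just overcounts one spurious factor.
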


\begin{proof}

The bound on $\Vert g\Vert_\Delta$ follows immediately from (\ref{1blockcond2}) and Lemma \ref{stabboundact}. For the second bound, using that $\mc L(X) \geq \ell \,(|X|_\ell-1)  $ we get
\begin{align}
\Vert g(X)\Vert^{1/|X|_\ell} &\leq \la  e^{ -c_g\,\ell\, \frac{|X|_\ell-1}{|X_\ell|} }\ra \leq e^{-\frac {c_g}3 \ell} 
\end{align}
if $|X|_\ell>1 $, and since $ \Vert \tilde g(\Delta)\Vert \leq e^{-c(\ell)c_2^2} $ can be chosen much smaller than this for $c_2 \geq c_2 (c_1(\ell)) $ (see Theorem \ref{thmgk}), the second bound follows.\\
Finally, for the third bound, we note that Lemma \ref{stabboundact} implies
\begin{align}
\Vert g\Vert &\leq \sup_{ \Delta\in\mc C  } \sums{X\supset\Delta\\ |X|_\ell>1} \la 2\, e\ra^{|X|_\ell-1} e^{-c_g\mc L(X)}\nnb
&= \sum_{n=2}^{|\Lambda|_\ell}\tfrac{\la 2\, e\ra^{n-1}}{(n-1)!} \sup_{\Delta_1\in\mc C}\sums{\Delta_2,\ldots,\Delta_n\in\mc C\\ \Delta_m\text{ all different}} \max_{T\text{ tree on }\ul n} \prod_{\{m,m'\}\in T} e^{-c_g |c(\Delta_m)-c(\Delta_{m'})| },
\end{align}
where $c(\Delta) $ is the center of $\Delta $. Bounding the maximum over trees by a sum, inductively removing vertices of the tree (similar, but much easier, than the argument in the proofs of Proposition \ref{propboundme} or Lemma \ref{pinchandsum}), and performing he sum over trees as in (\ref{treebound}), we get 
\begin{align}
\Vert g\Vert &\leq \sum_{n=2}^{|\Lambda|_\ell} \tfrac{  C(\ell)  ^{n-1} }{n-1}
\end{align}
with 
\begin{align}
C(\ell) &=  \sup_\Delta\sum_{\Delta'\neq\Delta} e^{-c_g|c(\Delta) - c(\Delta')|}\leq 30\,\sum_{0\neq k\in\mb Z^4}e^{-c_g\,\ell\, |k|}
\end{align}
This proves the Lemma. 

\end{proof}

\para{The choice of parameters.}{pchoicemthm} The construction of $\log Z $ in Proposition \ref{propce}, and the norms used in section \ref{bounds} to control it, depend on parameters $\mc L $ as well as $\upepsilon$ and $\mf r  $. We shall now describe the conditions on these parameters that are needed to ensure that the construction produces an error $M'(h) $ (cf. (\ref{defmp})) to the magnetization that is smaller than the main term $L^{-n}\bar\phi $. We will then make a choice for the parameters which satisfies all conditions, as long as the scale $\ell $ of the cluster expansion of Theorem \ref{thmgk} is large enough, and the magnetization scale $n$ at which we apply Theorem \ref{thmgk} is chosen appropriately. (Obviously the initial coupling constant $g_0 $ also has to be small enough for Theorem \ref{thmgk} to hold). Increasing $\mc L,\ell $ improves the size of the error $M'(h) $ (at the expense of stronger smallness conditions on $g_0,h$), and we shall put all pieces together in the proof of Theorem \ref{mainthm} and show that the relative error can be made smaller than any $\delta>0 $.\\
The bound (\ref{boundme}) on the Mayer expansion, together with the Cauchy formula, imply that $M'(h)\ll L^{-n}\bar\phi $ if 
\begin{align}
\sup_{\substack{\sfx\in\Uplambda \\  |\eps|\leq\upepsilon}} |\sfA(\{\sfx\})-1|&\ll1\label{cond1st}\\
\Vert \sfA\Vert&\ll 1 \label{cond2nd}
\end{align}
and $\upepsilon\gg (\mc L^4\bar\phi)^{-1} $, which is the same condition as (\ref{condeps}). These bounds could be established by our machinery if
\begin{enumerate}
	\item All assumptions that were used in the proofs of the Propositions of section \ref{bounds} hold. More precisely, we assumed (\ref{condhsnorm}) in the proof of Proposition \ref{proplocbound} (which contributes to the bound on $\sfA$), and (\ref{1blockcond1}) in the proof of Proposition \ref{bound1p} (which contributes to the bound on $|\sfA(\{\sfx\})-1| $).
	\item All assumptions that were used in the proof of Proposition \ref{boundsinput} hold. More precisely, we assumed the basic small field condition (\ref{sfcondshift}) for $\bar\phi$, the basic stability condition (\ref{stabcond}), and the conditions (\ref{1blockcond2}), (\ref{condconvce}).  
	\item The series (\ref{boundme}), (\ref{boundloc}), (\ref{boundrebl}), (\ref{boundinputg}) that are the output of these Propositions are bounded uniformly in the volume. These estimates contribute to our final bound on $\Vert\sfA\Vert$. More precisely, 
	\begin{enumerate}
		\item Whenever $\ell $ is large enough, (\ref{boundinputg}) proves that $\Vert g\Vert $ is small, depending on $\ell$.
		\item Whenever $(\frac{\mc L}\ell)^4\Vert g\Vert $ is small enough, (\ref{boundrebl}) proves that $\Vert \sfg\Vert $ is small, depending on $(\frac{\mc L}\ell)^4\Vert g\Vert $, and that $\Vert\sfg\Vert_\square $ is order $1$.
		\item Whenever (\ref{condconvce}) is satisfied, (\ref{boundloc}) and (\ref{boundis}) prove that $\Vert  \sfA\Vert $ is small, depending on $\Vert\sfg\Vert $ and $(\mc L^2\,\mf r\, g^\half\bar\phi)^{-1}\ll1  $ (see (\ref{basiccovestimate})).
	\end{enumerate}
	For step (b), we also needed (\ref{singleblockbounda}), which follows from (\ref{1blockbound2}) of Proposition \ref{boundsinput}.
	\item The series in (\ref{ebound1p}) is bounded uniformly, and the estimate (\ref{bound1pact}) on $\sfB(\square_\sfx) $ holds. These bounds contribute to our final estimate on $|\sfA(\{\sfx\})-1| $. More precisely, 
	\begin{enumerate}
		\item If $(\frac{\mc L}\ell)^4\Vert\tilde g\Vert $ is small, then so is $|\sfA(\{\sfx\}) - \sfB(\square_\sfx)| $, by (\ref{ebound1p}). Note that, by (\ref{boundtildeginput}), this is the case whenever $\ell $ is large enough, depending on $\frac{\mc L}\ell $.
		\item If the only condition of section \ref{sbheur} that hasn't been used already, namely (\ref{1blockcond3}), holds, then, by (\ref{bound1pact}), $|\sfB(\square_\sfx)-1| $ is small.
	\end{enumerate}
\end{enumerate}
There is an important difference\footnote{Another relevant difference is that, by 3(c), the size of $\Vert\sfA\Vert$ does not only depend on $(\mc L^2\,\mf r\, g^\half\bar\phi)^{-1} $ (which we were able to make $g$ dependent in the heuristics), but also on $\ell$, which is independent of $g$.} between the discussion above and the analogous discussion for the conditions on parameters in the heuristic section \ref{cparaheur}. Namely, by points 3(a) and 3(b), and also by point 4(a), $\ell $ has to be chosen large, depending on $\frac{\mc L}\ell $, and since the smallness condition on $g$ in Theorem \ref{thmgk} depends on $\ell$, in particular, $\mc L $ cannot be chosen $g$ dependent (in contrast to (\ref{regl})). This puts stronger limits the possible choices of $\upepsilon,\mf r $ and the magnetization scale $n$ than what was discussed in section \ref{cparaheur}. Most prominently, the conditions $\bar\phi\ll g^{-\frac14}\ll \mc L\bar\phi $ (see (\ref{sfcondshift}) and (\ref{lfcondshift})) now restrict our choice of the magnetization scale to 
\begin{align}\label{regnrigorous}
n = \tfrac 13\log_L h^{-1} -(\tfrac13-\tfrac14) \log_L\log h^{-1} - x& \qquad\text{with}\qquad  1\ll x\ll\log_L\mc L     ,
\end{align}
which is to be compared to (\ref{regn}).\\[5pt]
Despite the limited freedom to choose the parameters, there is sufficient flexibility to obtain a bound on $M'(h) $ that is smaller than any $\delta>0$ relative to the leading contribution $L^{-n}\bar\phi $ to the magnetization. We shall show in the next paragraph that, for given $\delta$, the following choice proves Theorem \ref{mainthm}:
\begin{align}
n &= \big\lfloor\tfrac 13\log_L h^{-1} -(\tfrac13-\tfrac14) \log_L\log h^{-1} +\log_L \delta\big\rfloor\nnb
\mf r &= \delta^2 \bar\phi\nnb
\mc L &= L^{\lfloor\log_L\delta^{-\frac 74}\rfloor} \label{finalchoice}\\
\upepsilon &=  (\mc L^4\,\bar\phi\,\delta)^{-1}\nnb
\ell &= \text{the smallest fraction of }\mc L \text{ such that }\ell e^{\frac{c_g}{12}\ell}\geq \mc L \nonumber
\end{align}
Note that, with our choice of $n$, $\bar\phi\leq \const \delta\, g^{-\frac14} $, with a constant that depends only on $L$. It is easy to check the above choices fulfill all conditions listed in points $1-4$ above, if $\delta$ is small enough, and $g_0 $ and $h $ are small enough depending on $\delta$.

\begin{rem}\label{remgkimprovement}
The parameter $\ell$ is used in Theorem \ref{thmgk} to encode the smallness of activities $\bar g(X)$ on large polymers $X$ (which essentially comes from perturbative arguments), as well as the size of the small field region (\ref{defsfs}), via the constant $c_2 = c_2(\ell)$. In Balaban's approach to the renormalization group (see, e.g. \cite{Balaban1983,doi:10.1142/S0129055X13300100,Balaban2010}), it is, in fact, shown that activities decay as $g^{\epsilon |\sfX|} $, and analyticity can be controlled in small field regions of the size $g^{-\frac14 - \epsilon} $, for some small $\epsilon $ (say, $\epsilon = \frac1{100} $). That is, it may be conjectured that the arguments of this paper also hold for a choice $\ell = g^{-\epsilon} $, if $\epsilon $ is small enough (but independent of the initial coupling $g_0$). If this was the case, we could also choose $\delta = g^{\epsilon} $, and Theorem \ref{mainthm} would, in fact, isolate exactly the leading term in the critical asymptotics of the magnetization (cf. Remark \ref{remstrongsympt}). Unfortunately (or, in the grand scheme of things, perhaps fortunately), the critical point of $\phi_4^4 $ has not been constructed by Balaban's method. We therefore restrict ourselves to finding the leading critical behavior of $m(h)$ only within an arbitrarily small error $\delta$.

\erem

\end{rem}

\para{Proof of the main theorem.}{ppmt}
Fix $L$, assume that $\delta >0 $ is small, that $\ell $ is chosen as in (\ref{finalchoice}), and that $g_0$ is small, depending on $\ell$, so that the conditions of Theorem \ref{thmgk} are satisfied. For any small enough $h>0$, define $n$ as in (\ref{finalchoice}), and apply Theorem \ref{thmgk} at this scale. It is easy to see, using (\ref{eqpert}), that the minimizer $\bar\phi $ of (\ref{defbarphi}) is unique, and that it satisfies $\bar\phi \leq\const  \delta\,g^{-\frac14} $.\\ 
For this choice of $\bar\phi$, we have that the finite volume magnetization $M(h) = L^{-n}\bar\phi + M'(h) $, where $M'(h) $ is given in (\ref{defmp}). Further, for $\mc L $ chosen as in (\ref{finalchoice}), we have $M'(h) = \frac1{L^n|\Lambda|}\partial_{\eps=0}\log Z(\eps)  $, where $\log Z $ is given by the Mayer expansion (\ref{mealg}), as long as this expansion converges.\\
We will presently check that, indeed, the Mayer expansion for $\log Z(\eps) $ converges for all $|\eps|\leq\upepsilon $, with $\upepsilon $ as in (\ref{finalchoice}), and that $|\log Z(\eps)|\leq \frac12 |\Uplambda| = \frac 12\mc L^{-4}|\Lambda| $ in this region. Using the Cauchy formula, this implies that $|M'(h)|\leq\frac 12 L^{-n} \mc L^{-4}\upepsilon^{-1} =  L^{-n}\bar\phi\cdot \frac\delta2 $, and thus 
\begin{align}
M(h) = L^{-n}\bar\phi \,\big(1+\mc E'(h)\big)
\end{align}
with $|\mc E'(h)|\leq \frac \delta2 $. Since $$L^{-n}\bar\phi = \Big(\frac{3h\log h^{-1}} {16\pi^2}\Big)^{\frac13} \big(1+O(1/\log\log h^{-1})\big) $$ by (\ref{hovergheur}), we conclude
\begin{align}\label{logcorrfv}
M(h) = \Big(\frac{3h\log h^{-1}} {16\pi^2}\Big)^{\frac13}(1+\mc E(h)), 
\end{align}
which is the finite volume version of the claim (\ref{logcorr2}) of the Theorem.\\
To prove convergence of the expansion and $|\log Z |\leq \frac12 |\Uplambda| $, note the following. By Proposition \ref{boundsinput}, $\Vert g\Vert_\Delta^{(\mc L/\ell)^4} \leq \frac 32 $, and $\Vert \tilde g\Vert,\Vert g\Vert\leq o(1)$ (we write $o(1)$ to mean ``as small as we like, if $ \delta>0$ is small enough''). Therefore, Propositions \ref{bound1p} and \ref{boundreblock} are applicable, and, by (\ref{finalchoice}), we get 
\begin{align}
|\sfA(\{\sfx\}) - \sfB(\square_\sfx)| &\leq o(1)\nnb
\Vert \sfg\Vert_\square&\leq \tfrac53 \\
\Vert \sfg\Vert &\leq o(1).\nonumber
\end{align}
By the first bound, and (\ref{bound1pact}), we get $|\sfA(\{\sfx\}) -1 | = o(1) $, and therefore also $\log \sfA(\{\sfx\})= o(1) $, $|\sfA(\{\sfx\})|^{-1} = 1+o(1) $. By the other two bounds, Proposition \ref{proplocbound} is applicable, and we get, by (\ref{finalchoice})
\begin{align}
\Vert A\Vert \leq o(1).
\end{align}
By Proposition \ref{boundis}, also $\Vert \sfA\Vert\leq o(1) $. Thus, Proposition \ref{propboundme} can be applied, and the Mayer expansion converges and $|\log Z|\leq |\Uplambda|\cdot o(1) $. All this holds uniformly in $|\eps|\leq\upepsilon. $ As soon as $\delta>0$ is small enough, we conclude the claim $|\log Z|\leq \frac 12|\Uplambda|  $ and with it (\ref{logcorrfv}).\\
To finish the proof of the Theorem, we need to show that (\ref{logcorrfv}) carries over to the thermodynamic limit. We give a sketch of the argument, similar to the sketch discussed in section 12 of \gk. Indeed, the discussion given there implies that: The coupling constants $g,\nu,z $ of (\ref{eqpert}) have an infinite volume limit; The kernels $\bar G,\mc A $ are the periodizations of infinite volume kernels satisfying the same estimates; and the activities $g^D $ converge in the following sense. Identify $\Lambda_0 $ with a block of size $L^{4N}  $ in $\mb Z^4 $, so that sets with different values of $N$ are nested consistently. For finite sets $X,D\subset\mb Z^4 $, define the domains $\mc K(X) $, $\mc D(D,X) $ as before, but with fields $\phi\in\mb C^{\mb Z^4} $ (resp. $\mb R^{\mb Z^4} $) with finite support, and with the infinite volume kernel $\mc A$ instead of the periodized one. For fixed, finite $X\subset\mb Z^4$, if $N$ is large enough so that $X$ can be identified with a subset of $\Lambda_0$, these domains are included in the finite volume domains (modulo a change of constants that becomes negligible as $N\to \infty$), and thus the activities $g^D(X) $ define analytic functions on these domains. \gk argue inductively that these analytic functions converge almost uniformly as $N\to\infty $.\\
Our activities $A(\sfX) $ can be expressed in terms of $g^D(X) $, $D\subset X\subset \square_\sfX $, through finite sums and derivatives, and therefore, they, too, define analytic functions on the infinite volume domains which converge almost uniformly. By dominated convergence, therefore also the numbers $\sfA(\sfX) $ converge in the thermodynamic limit, for any finite $\sfX$. Since the Mayer expansion converges absolutely, uniformly in $N$, $M'(h) = M'(h,N) $ converges as $N\to\infty $. Since the minimizer $\bar\phi $ defined in (\ref{defbarphi}) obviously also converges when $g,\nu$ do (and $n$ is independent of $N$), the finite volume magnetization $M(h) = M(h,N) $ converges, and the limit $m(h)$ satisfies the bound (\ref{logcorr2}). This proves the Theorem.\qed

\appendix
\section{Estimates for the covariance} 

In this Appendix, we prove the bounds (\ref{roughboundcov}), (\ref{sharpboundcov}), (\ref{basiccovestimate}), and (\ref{roughboundcov2}). The operators appearing in these bound implicitly depend on the volume $L^{4N}$, but, as noted in Remark \ref{remivl}, they are simply the periodizations of the $N=\infty$ operators. It is easy to see that the finite $N$ inequalities (\ref{roughboundcov}), (\ref{sharpboundcov}), (\ref{basiccovestimate}) are a consequence of the same inequalities at $N=\infty$, whenever $N$ is large enough (possibly depending on $h$). We will thus focus our attention on the case $N=\infty$ (which has the slight advantage that we can use the usual Fourier transform, as opposed to the discrete one).\\
Our bounds are based on the following Lemma, in which we abbreviate $m^2=3 g\bar\phi^2+\nu$, where $\nu$ is as in Theorem \ref{thmgk}. Note that, since we assume $1\ll \bar\phi\ll g^{-\frac14} $, we have $m^2 = \la 3 g \bar\phi^2\ra $.

\begin{lem}
(i) The kernel $\Gamma $ can be written as
\begin{align}\label{decompgamma}
\Gamma_{x,y} &= \sum_{j=0}^{\lfloor\log m^{-1}\rfloor} e^{-3j} \,\gamma_j\big(e^{-j}(x-y)\big)
\end{align}
where $\gamma_j(a),a\in\mb R^4 $ is even and satisfies the bounds $\Vert \gamma_j\Vert_\infty\leq C $ and
\begin{align}
\sum_{a\in e^{-j}\mb Z^4}|\gamma_j(a)|&\leq C\, e^{4j}\\
\sum_{a\in e^{-j}\mb Z^4}|\gamma_j(a+c) - \gamma_j(a)|&\leq C\, e^{3j} & \forall |c|_1&=e^{-j}\\
\sum_{a\in e^{-j}\mb Z^4} |\gamma_j(a+c+c')-\gamma_j(a+c')-\gamma_j(a+c)-\gamma_j(a)|&\leq C\, e^{2j} & \forall |c|_1&=|c'|_1=e^{-j}.
\end{align}
The constant $C$ is uniform in $m^2 $. \\
(ii) let $\tilde\Gamma = \Gamma - (-z\, \Delta_{\mb Z^4} + m^2)^{-\half} $. Then 
\begin{align}
\sup_{x\in\mb Z^4}\sum_{y\in\mb Z^4}|\Gamma_{x,y}| &\leq C
\end{align}
\end{lem}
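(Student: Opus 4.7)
The plan is to treat both parts as Fourier multiplier estimates for the symbol $\hat\Gamma(p)=G^{-1}(p)^{-1/2}$ with $G^{-1}(p)=\bar\mu(p)+3g\bar\phi^{\,2}(\mc A^\star\mc A)(p)$. By the averaging identities (\ref{mcai1})--(\ref{mcai2}) (applied with the $L^2$ inner product $\int_\xi$ built into the definition of $\mc A^\star$) one has $(\mc A^\star\mc A)(0)=1$ and hence $G^{-1}(0)=\nu+3g\bar\phi^{\,2}=m^2$. By Theorem~\ref{thmgk}, $\hat\Gamma$ is smooth, behaves like $(zp^2+m^2)^{-1/2}$ for small $|p|$ with an error of order $|p|^3+m^2|p|^2$ (the second term coming from Taylor-expanding $(\mc A^\star\mc A)(p)$ at $p=0$ using $\sum_y(\mc A^\star\mc A)_{x,y}=1$ and evenness), and is bounded above by $Cm^{-1}$ globally and by $C|p|^{-1}$ for $|p|\ge m$, together with analogous derivative bounds. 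These pointwise and regularity estimates, uniform in $m$, are the analytic input for both parts.

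For part (i), I would fix a smooth dyadic Littlewood--Paley partition $1=\sum_{j=0}^{J}\chi_j(p)$ with $J=\lfloor\log m^{-1}\rfloor$, where $\chi_j$ is supported in $\{|p|\sim e^{-j}\}$ for $0<j<J$, $\chi_0$ covers $|p|\gtrsim 1$, and $\chi_J$ absorbs the remaining $|p|\lesssim m$, and then set
\[
\gamma_j(a)\;=\;e^{3j}\int\chi_j(p)\,\hat\Gamma(p)\,e^{ip\cdot e^j a}\,\frac{dp}{(2\pi)^4},
\]
which immediately yields the decomposition (\ref{decompgamma}) after restricting $a$ to $e^{-j}\mb Z^4$. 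After the change of variable $p'=e^j p$ the symbol becomes $e^{-j}\tilde\chi(p')\hat\Gamma(e^{-j}p')$, supported in $|p'|\sim 1$ and of size $O(1)$ uniformly in $j$ (since $|\hat\Gamma(p)|\le Ce^j$ in the range $|p|\sim e^{-j}\ge m$), with all $p'$-derivatives bounded uniformly in $j$. The $L^\infty$ and $\ell^1$ bounds on $\gamma_j$ together with the first- and second-difference bounds then follow from standard integration-by-parts / nonstationary phase arguments in $p'$: a lattice shift $c$ with $|c|_1=e^{-j}$ translates into the bounded multiplier $e^{ip'_\mu}-1=O(1)$ in the rescaled symbol, producing exactly the stated prefactors $e^{3j},e^{2j}$.

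For part (ii), I would use the spectral representation $A^{-1/2}=\pi^{-1}\int_0^\infty t^{-1/2}(t+A)^{-1}\,dt$ and the resolvent identity to write
\[
\tilde\Gamma \;=\; -\frac{1}{\pi}\int_0^\infty\frac{dt}{\sqrt t}\,(t+G^{-1})^{-1}\,R\,(t+G_0^{-1})^{-1},\qquad R:=G^{-1}-G_0^{-1},
\]
with $G_0^{-1}=-z\Delta_{\mb Z^4}+m^2$, and bound the $\ell^1\!\to\!\ell^1$ operator norm of the integrand via Schur-type estimates on the Fourier side. The crucial input is that by (\ref{smallpasympt}) combined with $(\mc A^\star\mc A)(p)=1-c|p|^2+O(|p|^4)$ at low momenta, one has $|R(p)|\le C(|p|^3+m^2|p|^2)$ for $|p|\le c_\mu$, and $|R(p)|$ uniformly bounded for $|p|\ge c_\mu$. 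Together with the pointwise resolvent bounds $(t+G^{-1}(p))^{-1},(t+G_0^{-1}(p))^{-1}\le (t+c(p^2+m^2))^{-1}$, the $t$-integral then converges to a uniformly $\ell^1$-bounded kernel (recovering the cancellation $\tilde\Gamma(p)|_{p=0}=0$ in an $\ell^1$-quantitative way). The main obstacle is to extract enough $|p|$-vanishing from $(\mc A^\star\mc A)(p)-1$ to offset the large prefactor $3g\bar\phi^{\,2}\sim m^2$ in $R$: this uses the averaging identities (\ref{mcai1})--(\ref{mcai2}) together with the exponential decay (\ref{boundmca}) of $\mc A$ to control the second-moment constant appearing in the Taylor expansion.
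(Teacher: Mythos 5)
Your part (i) is essentially identical to the paper's proof: the paper also performs an exponential Littlewood--Paley decomposition of the Fourier integral for $\Gamma_{x,y}$, works with the rescaled symbol $\hat\Gamma'_j(q)=e^{-j}\hat\Gamma(e^{-j}q)$ on the annulus $|q|\sim 1$ (which, after the same change of variable, is exactly your $\gamma_j$), and deduces the $\ell^1$ bounds on $\gamma_j$ and its lattice differences from the uniform smoothness of these rescaled compactly supported symbols, with the prefactors $e^{4j},e^{3j},e^{2j}$ coming from the mesh-size accounting exactly as you describe.

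For part (ii), you take a genuinely different route. The paper simply writes $\hat{\tilde\Gamma}(p)=\hat\Gamma(p)-\hat\Delta(p)^{-1/2}$ and rationalizes the difference algebraically, obtaining $\hat{\tilde\Gamma}=R(p)$ divided by $\hat\Delta\hat\Gamma^{-1}+\hat\Gamma^{-2}\hat\Delta^{1/2}$ (with $R=\hat\Gamma^{-2}-\hat\Delta$), then argues that this multiplier and all its $p$-derivatives are uniformly bounded and concludes by ``standard Fourier analysis.'' You instead write $A^{-1/2}=\pi^{-1}\int_0^\infty t^{-1/2}(t+A)^{-1}\,dt$ and use the resolvent identity; this is a more operator-theoretic presentation, but the $t$-integral evaluates in closed form to $\pi\big(\sqrt{A}\sqrt{B}(\sqrt A+\sqrt B)\big)^{-1}$, so it produces exactly the paper's rationalized expression for $\hat{\tilde\Gamma}(p)$. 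The identification of the numerator $R(p)=O(|p|^3+m^2|p|^2)$ from (\ref{smallpasympt}), evenness of $\widehat{\mc A^\star\mc A}$ and the averaging identities is the same cancellation both proofs rely on. One caution on your phrasing: bounding the $\ell^1\to\ell^1$ norm of each $t$-integrand and then integrating in $t$ can lose powers of $m$ (the naive product bound gives $\sim m^{-3}$ from the $t$-integral, which the $\ell^1$ norm of $R$ alone does not compensate). It is cleaner, and in effect what you need, to evaluate the $t$-integral first and then prove uniform bounds on the resulting multiplier together with its derivatives, exactly as the paper does; the ``Schur-type estimates on the Fourier side'' should be read as that statement, since the $\ell^1$ row norm of a translation-invariant kernel is controlled by $C^k$ bounds on the periodic multiplier.
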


\begin{proof}

We have
\begin{align}
\Gamma_{x,y} &= \tfrac1{(2\pi)^4} \int_{|p|_\infty\leq\pi} \hat\Gamma(p) e^{ip(x-y)}\ud p,
\end{align}
where
\begin{align}
\hat \Gamma(p) &= \Big(\bar\mu(p) + \tfrac32 g\bar\phi^2 \widehat{\mc A^\star\mc A}(p)\Big)^{-\half},
\end{align}
$\bar\mu(p) $ is as in Theorem \ref{thmgk}, and $\widehat{\mc A^\star\mc A}$ is the Fourier transform of $\mc A^\star\mc A $, which, by (\ref{boundmca}) and (\ref{mcai1}), (\ref{mcai2}), is a $2\pi $ periodic, nonnegative function, analytic for $|\Im p|\leq c'_{\mc A} $ with $c'_{\mc A}>0 $ depending only on $L$, and with $\widehat{\mc A^\star\mc A}(0)=1 $. By the properties of $\bar\mu(p) $ and these observations on $\widehat{\mc A^\star\mc A} $, there exist constants $c,c'>0 $ and $C'<\frac1{2c} $, depending only on $L$, such that $\hat \Gamma(p)\leq c' $ if $|p|_2\geq c $ and 
\begin{align}
|\hat\Gamma(p)^{-2} - zp^2 - m^2|\leq C'(m^2|p|_2 + |p|_2^3)\qquad\text{for}\qquad |p|_2< e\cdot c
\end{align}
For some $j=0,\ldots,J:= \lfloor\log m^{-1}\rfloor  $, consider now the function $\hat\Gamma_j'(q) = e^{-j}\hat\Gamma(e^{-j}q) $, restricted to the annulus $e^{-1}\leq|q|_2\leq e $. If $e^{-j-1}\geq c $, we obviously have $\hat\Gamma'_j(q)\leq c' $, and also all derivatives of $\hat\Gamma'_j $ are bounded independently of $j$. If $e^{-j-1}<c $, we have
\begin{align}
|\hat\Gamma_j'(q)^{-2} - z\, q^2 + e^{2j}m^2|\leq C'\big( e^{j}\,m^2,|q|_2 + e^{-j}|q|_2^3   \big)
\end{align} 
and thus $|\hat \Gamma'_j(q)| \leq $ const, for a $j$ independent constant. Also, all derivatives of $\hat\Gamma_j' $ are bounded by $j$ independent constants. If $j=J $, since $e^{2J}m^2 = O(1) $, this even holds for $|q|_2\leq e $. \\
It is standard to conclude from these properties of $\hat\Gamma $ claim (i) of the Lemma. Indeed, let $\chi_j $, $j=0,\ldots,J  $, be a smooth partition of unity on $\{|p|_\infty\leq\pi\} $ such that supp $\chi_j\subset \{e^{-j-1}\leq |p|_2\leq e^{-j+1} \} $ for $j<J $ and supp $\chi_J\subset \{|p|_2\leq e^{-J+1}\} $, and such that $\chi_j'(q):= \chi_j(e^j\, p) $ and all its derivatives are bounded by $j$ independent constants. Defining
\begin{align}
\gamma_j(a) &= \tfrac1{(2\pi)^4}\int_{\mb R^4} \hat\Gamma_j'(q)\, \chi_j'(q)\, e^{iqa} \ud q,
\end{align}
we clearly have (\ref{decompgamma}). By the above discussion, this is the Fourier transform of a smooth, compactly supported function, which is bounded along with its derivatives, uniformly in $j$. The bounds on $\gamma_j$ can be deduced from this fact by the standard Fourier correspondence between smoothness and decay.\\
For claim (ii) of the Lemma, note that
\begin{align}
\tilde\Gamma_{x,y} &= \tfrac1{(2\pi)^4} \int_{|p|_\infty\leq\pi} \hat{\tilde\Gamma}(p) e^{ip(x-y)}\ud p,
\end{align}
with 
\begin{align}
\hat{\tilde\Gamma}(p)= \hat\Gamma(p) - \hat \Delta(p) ^{-\frac12} = \frac{\hat\Gamma(p)^{-2} - \hat\Delta(p)}{\hat\Delta(p) \hat\Gamma(p)^{-1} + \hat\Gamma(p)^{-2}\hat\Delta(p)^\half } ,
\end{align}
where
\begin{align}
\hat\Delta(p) &=   z \, {\textstyle\sum_{d=1}^4} [2-2\cos(p_d)] + m^2  .
\end{align}
It is easy to check that $\hat{\tilde\Gamma}(p) $ is bounded uniformly in $m^2$, together with all its derivatives. Standard Fourier analysis then implies (ii).

\end{proof}

\noindent
We now use the Lemma to prove (\ref{roughboundcov}), (\ref{sharpboundcov}), (\ref{basiccovestimate}), and (\ref{roughboundcov2}). Since, by (\ref{decompgamma}), $\Gamma_{x,y} $ is bounded, (\ref{roughboundcov}) follows easily. For (\ref{basiccovestimate}), define 
\begin{align}
\Vert\Gamma\Vert_{j,k} &= \max\Big\{\sup_{\sfx\in\Uplambda}\sum_{\sfx'\in\Uplambda}\mc L^{4\delta_{j>1} + 2\delta_{k>1} - 2\delta_{k=1}} |\ddot\Gamma_{\sfx',j;\sfx,k}| ,\sup_{\sfx\in\Uplambda}\sum_{\sfx'\in\Uplambda}\mc L^{4\delta_{j>1} + 2\delta_{k>1} - 2\delta_{k=1}} |\ddot\Gamma_{\sfx,j;\sfx',k}| \Big\}
\end{align}
so that 
\begin{align}
\Vert\Gamma\Vert &= \sup_{k\in\ul{\mc L^4}}\sum_{j\in\ul{\mc L^4}}\Vert\Gamma\Vert_{j,k}.
\end{align}
It is easy to see that the Lemma implies the rough bounds
\begin{align}
\Vert\Gamma\Vert_{j,k}&\leq \const  \left\{\begin{array}{ll} \mc L^{-2} m^{-1}   & j=k=1 \\ \mc L^2\log m^{-1}  &  j=1,k>1 \\  \mc L^6 \log m^{-1} & k=1,j>1  \\  \mc L^{10}  & j,k>1    \end{array}\right.
\end{align}
and (\ref{basiccovestimate}) follows immediately from this if $\mc L^{12}m\log m^{-1} \ll1 $, which is (\ref{condhsnorm}).\\
Finally, (\ref{sharpboundcov}) is a sharper version of (\ref{roughboundcov2}), and so we will only prove the former. Set 
\begin{align}
\Vert\mc A\Gamma\Vert^2_{k,k' } &= \sup_{\sfx\in\Uplambda}\sum_{\sfx'\in\Uplambda} \mc L^{-2\delta_{k=1} - 2\delta_{k'=1} + 2\delta_{k>1} + 2\delta_{k'>1}}\sum_{\sfy\in\Uplambda} \left|\int_{\xi\in \boxdot_\sfy}\dot{{\mc A}\Gamma}_{\xi;\sfx,k}\dot{{\mc A}\Gamma}_{\xi;\sfx',k'}  \right|
\end{align}
so that 
\begin{align}
\Vert\mc A\Gamma\Vert_{\text{HS}(\sfX,\sfX')}^2\leq \sup_{k\in\ul{\mc L^4}}\sum_{k'\in\ul{\mc L^4}} \Vert\mc A\Gamma\Vert^2_{k,k' } 
\end{align}
for all $\sfX,\sfX'\subset\Uplambda $. In the same way as before, it is easy to see that
\begin{align}
\Vert\mc A\Gamma\Vert^2_{k,k' } \leq \const  \left\{\begin{array}{ll} \mc L^4  m^{-1} \log m^{-1}  &  k=1,k'>1 \text{ or }k'=1,k>1  \\  \mc L^{8}\log^2m  & k,k'>1    \end{array}\right.
\end{align}
and so (\ref{sharpboundcov}) will follow if we can prove $ \Vert\mc A\Gamma\Vert^2_{1,1 } \leq \la m^{-2}\ra $. To see this, note that, for $\xi\in\square_y $, by (\ref{mcai1})
\begin{align}
\dot{{\mc A}\Gamma}_{\xi;\sfx,1} &= \sum_{x\in\square_\sfx}  \Big[\Gamma_{y,x} + \sum_{z\in\Lambda} \mc A_{\xi,z}  (\Gamma_{z,x} - \Gamma_{y,x})  \Big]
\end{align}
It is easy to use the exponential decay of $\mc A $ and the Lemma to conclude from this that
\begin{align}
\Vert\mc A\Gamma\Vert^2_{1,1 } \leq \Vert \Gamma\Vert_{1,1}^2 + \const \mc L^4 m^{-1}\log m^{-1}, 
\end{align}
where
\begin{align}
\Vert \Gamma\Vert_{1,1}^2 &= \mc L^{-4} \sup_{\sfx\in\Uplambda}\sum_{\sfx'\in\Uplambda}  \sum_{\sfy\in\Uplambda} \bigg|\sums{y\in \square_\sfy\\x\in\square_\sfx\\x'\in\square_{\sfx'}}\Gamma_{y,x }\Gamma_{y,x'}  \bigg|
\end{align}
By part (ii) of the Lemma, we have
\begin{align}
\mc L^4 \Vert \Gamma\Vert_{1,1}^2 &\leq \Vert (-z\,\Delta + m^2)^{-\frac12}\Vert^2_{1,1}  + 2\Vert \Gamma\Vert_{1,\infty}\Vert \tilde\Gamma\Vert_{1,\infty} + \Vert \tilde\Gamma\Vert_{1,\infty} ^2\nnb
&\leq \Vert (-z\,\Delta + m^2)^{-\frac12}\Vert^2_{1,1} + \const m^{-1},
\end{align}
and since 
\begin{align}
(-z\,\Delta + m^2)_{x,y}^{-\frac12} &=  \frac1\pi \int_0^\infty (-z\,\Delta+m^2+s)_{x,y}^{-1}s^{-\frac12}\ud s,
\end{align}
is non-negative just like the resolvent of the Laplacian itself, we have $  \Vert (-z\,\Delta + m^2)^{-\frac12}\Vert^2_{1,1} = m^{-2},$ and (\ref{sharpboundcov}) follows.

\bibliographystyle{alpha}
\bibliography{magnetization.bib}

\end{document}